\newcommand{\E}{\mathbb{E}}
\newcommand{\Var}{\text{Var}}
\newcommand{\PP}{\mathbb{P}}
\newcommand{\Lp}{\left(}
\newcommand{\Rp}{\right)}
\newcommand{\Ls}{\left[}
\newcommand{\Rs}{\right]}
\newcommand{\Lv}{\left\|}
\newcommand{\Rv}{\right\|}
\newcommand{\La}{\left|}
\newcommand{\Ra}{\right|}
\newcommand{\Biggg}{\bBigg@{3.5}}
\newcommand{\norm}[1]{\left\lVert#1\right\rVert}
\newcommand{\indep}{\perp \!\!\! \perp}
\definecolor{Blue}{rgb}{0,0,1}
\definecolor{Grey}{rgb}{.5,.5,.5}
\newtheorem{lemma}{Lemma}
\newtheorem{assumption}{Assumption}
\newtheorem{proposition}{Proposition}
\newtheorem{theorem}{Theorem}
\newtheorem{corollary}{Corollary}
\theoremstyle{definition}
\newtheorem{example}{Example}
\newenvironment{examplecontinued}[1]
  {\example}
  {\endexample}
\DeclareMathOperator*{\argmin}{arg\,min}
\newcommand*{\thisdraft}{This draft: May 11th, 2025}
\begin{document}

\title{Automatic Doubly Robust Forests}

\author{
    Zhaomeng Chen\thanks{These authors contributed equally and are listed alphabetically.} \\
    Department of Statistics \\
    Stanford University \\
    \texttt{zc313@stanford.edu}
    \and
    Junting Duan\footnotemark[1] \\
    Department of Management Science and Engineering \\
    Stanford University \\
    \texttt{duanjt@stanford.edu}
    \and
    Victor Chernozhukov \\
    Department of Economics and Center for Statistics and Data Science \\
    Massachusetts Institute of Technology \\
    \texttt{vchern@mit.edu}
    \and
    Vasilis Syrgkanis\thanks{Vasilis Syrgkanis was supported by NSF Award IIS-2337916.} \\
    Department of Management Science and Engineering \\
    Stanford University \\
    \texttt{vsyrgk@stanford.edu}
}

\date{\thisdraft}

\begin{titlepage}
\maketitle
\begin{abstract}

This paper proposes the automatic Doubly Robust Random Forest (DRRF) algorithm for estimating the conditional expectation of a moment functional in the presence of high-dimensional nuisance functions. DRRF extends the automatic debiasing framework based on the Riesz representer to the conditional setting and enables nonparametric, forest-based estimation \citep{grf,orf}. In contrast to existing methods, DRRF does not require prior knowledge of the form of the debiasing term or impose restrictive parametric or semi-parametric assumptions on the target quantity. Additionally, it is computationally efficient in making predictions at multiple query points. We establish consistency and asymptotic normality results for the DRRF estimator under general assumptions, allowing for the construction of valid confidence intervals. Through extensive simulations in heterogeneous treatment effect (HTE) estimation, we demonstrate the superior performance of DRRF over benchmark approaches in terms of estimation accuracy, robustness, and computational efficiency.

\vspace{1cm}

\end{abstract}

\end{titlepage}

\section{Introduction} \label{sec: intro}
A wide range of problems in social and life sciences can be framed as estimating the conditional expectation of a moment functional that depends on an unknown nuisance regression function:
\begin{equation}  \label{equ:problem}
    \E[m(Z;g_0(x,W))\mid X=x]\coloneqq \theta_0(x).
\end{equation}
Here, $m(z;g)$ is a known moment functional, and $g_0(x,w)=\E[Y\mid W=w,X=x]$ is an unknown nuisance regression function. The observation $Z=(Y,X,W)$ consists of an outcome variable $Y$, target covariates $X$, and additional variables $W$. A key example is heterogeneous treatment effect (HTE) estimation in causal inference, where the goal is to understand how a treatment $D$ affects the outcome $Y$ across different sub-populations or covariate groups defined by $X$. In HTE estimation, $W=(D,\tilde{X})$ includes the treatment variable $D$ and additional confounders $\tilde{X}$ that need to be controlled for. HTE estimation has crucial applications across various fields. For example, in personalized medicine, it helps identify effective treatments tailored to individual patient characteristics. In marketing, it guides customer segmentation and enables targeted interventions.

A primary challenge in estimating \eqref{equ:problem} arises from the high dimensionality of the nuisance regression function. In many applications, it is desirable for $g_0$ to depend on high-dimensional variables $W$. For example, in HTE estimation, controlling for a large number of confounders $\tilde{X}$ is crucial for the validity of the causal identification argument. To achieve high-quality predictions in high-dimensional settings, modern machine learning (ML) methods with regularization and model selection are frequently used. However, these methods often introduce non-negligible bias due to the bias-variance trade-off inherent in optimizing mean squared error. This bias poses a significant challenge for inference, as it can lead to poor coverage of confidence intervals for the target estimand \citep{chernozhukov2018double,chernozhukov2022locally}.

Much of the prior literature on non-parametric estimation of $\theta_0(x)$ assumes low-dimensional nuisance parameterizations \citep{wager2018estimation,grf}. While this simplifies the estimation process, it limits applicability in real-world applications where a large number of additional variables $W$ are present. 
On the other hand, to accommodate high-dimensional nuisance functions, some existing works construct Neyman orthogonal moment functionals, which eliminate the first-order effect of nuisance estimation on the expected moment functional. However, these works typically impose parametric assumptions on the target function $\theta_0(x)$. For example, \cite{victor17} and \cite{chernozhukov2022locally} focus on estimating the average treatment effect (ATE), where $\theta_0(x)$ is assumed to be constant across $x$. \cite{chernozhukov2017orthogonal} and \cite{chernozhukov2018plug} consider cases where $\theta_0(x)$ follows a linear specification, $\theta_0(x)=\theta_0^\top \phi(x)$, with $\theta_0$ assumed to be sparse and $\phi(x)$ a known feature mapping. Estimating a non-parametric $\theta_0(x)$ would offer more flexibility but is inherently more challenging than its parametric counterpart.

In this paper, we propose the Doubly Robust Random Forest (DRRF) algorithm to address these challenges of estimating conditional moment functionals. Our method extends the automatic debiased machine learning (Auto-DML) framework, which targets the conditional expectation of average moment functionals, to the more general setting of conditional moment functionals. To achieve this, we generalize the notion of the Riesz representer to a conditional Riesz representer and construct a doubly robust moment by incorporating a debiasing term based on this conditional representer. This debiasing mitigates the regularization bias inherent in plug-in machine learning methods and thereby accommodates high-dimensional nuisance functions. The estimation of the conditional Riesz representer is automatic, in the sense that it does not require prior knowledge of its form, and it can align with the estimation of the regression function \( g_0 \). We show that, under suitable conditions, the conditional Riesz representer can be projected onto the same functional space as \( g_0 \). For example, if \( g_0 \) is locally linear, we may, without loss of generality, restrict the conditional Riesz representer to be locally linear as well. We provide an efficient algorithm for estimating the conditional Riesz representer and constructing the debiasing term based on the forest lasso method of \citet{orf}.

Our method is a two-stage procedure, similar to the Orthogonal Random Forest (ORF) algorithm proposed by \citet{orf}. In the first stage, we estimate the nuisance functions and, consequently, the conditional moments. In the second stage, we compute similarity weights between each data point and the query point \( x \). The final estimator is constructed as a weighted average of the estimated conditional moments using a data-splitting approach. We estimate the weights following the strategy of \citet{orf}, which enables nonparametric, forest-based estimation of the conditional functional \( \theta_0(x) \). However, our method significantly deviates from ORF in the manner it estimates the nuisance functions, making DRRF much more practical for large scale deployment when predicting $\theta_0(x)$ at a large number of query points $x_1,\ldots,x_m$. Specifically, the ORF algorithm requires re-estimating the ML-based nuisances for every single query point $x_i$, each time using different sample weights on the training data that are tailored to the query point. However, our algorithm estimates the nuisance functions only once for each of the training data points and does not re-estimate them for each query point after the training phase, as required by ORF. Since the estimation of nuisances is typically the most computationally intensive step, our approach substantially reduces computation when handling a large number of queries. This alternative nuisance estimation method also provides side benefits with respect to interpretability, which is important for most downstream decision making applications of the heterogeneous model $\theta_0(x)$. The final model $\hat\theta(\cdot)$ of the DRRF method is equivalent in representation to a standard regression forest, where the leaf nodes of each tree contain local estimates and the prediction of each tree at any query point $x$ is the value of the leaf node in which $x$ falls, and the prediction of the forest is simply the average of the tree predictions. Hence, standard interpretability tools for regression forests, such as SHAP \citep{lundberg2017unified}, can be used off-the-shelf to interpret the model and provide insights into which features are important, on aggregate, and how each feature contributes to each query result $\theta_0(x_i)$. This is not the case for the ORF method, since evaluating $\theta_0(x)$ for any target $x$ involves complex nuisance re-estimation and is not a simple ``look-up and average'' calculation.

We show that our estimator is consistent and asymptotically normal, which allows for the construction of asymptotically valid confidence intervals. By incorporating the debiasing term, DRRF derives a doubly robust moment functional that relaxes the convergence rate requirements for the nuisance regression functions needed for consistency. Under appropriate regularity conditions, we show that our estimator achieves the same convergence rate as an oracle estimator with known nuisance functions, provided that the product of the estimation errors for the nuisance regression function and the conditional Riesz representer remains small. This property makes DRRF particularly advantageous in settings with high-dimensional and complex nuisance functions where accurate estimation of the conditional Riesz representer is feasible. Under additional conditions, we show that our estimator is asymptotically normal, enabling the construction of confidence intervals using bootstrap methods.

Variants of our DRRF estimation algorithm have already been widely used in practice. For instance, the ForestDRLearner method, implemented in the widely adopted open-source package EconML \citep{econml}, is essentially the non-automatic analogue of our DRRF estimator for the HTE problem. However, no prior work has established the asymptotic normality of this method. The closest related prior work is the doubly robust variant of the ORF method, which however constructs the nuisance functions in a different manner, as described above—a distinction crucial to their proof. Hence, our asymptotic normality theorem not only provides a formal justification of these widely used methods but also enhances and generalizes them by incorporating automatic debiasing techniques. Automatic debiasing has also been shown to be empirically less sensitive to extreme propensity problems \citep{riesznet,RieszSharma}, as it directly targets learning the Riesz function $\alpha_0$ rather than first estimating propensity models and then constructing the Riesz function through explicit division.

Our work contributes to two streams of recent literature at the intersection of causal inference and machine learning. 
First, our work builds on the literature on the estimation and inference of treatment effects that are robust to errors introduced by sophisticated machine learning methods. \cite{victor17} propose the double machine learning (double ML) method, which constructs a Neyman orthogonal moment by residualizing the effects of high-dimensional covariates on both treatments and outcomes. The treatment effect is then estimated through a linear regression between the residualized variables. 
\cite{chernozhukov2022automatic} introduce the automatic debiased machine learning (Auto-DML) framework based on the (unconditional) Riesz representer of the score function, which ensures Neyman orthogonality and achieves double robustness for the target moment functional. \cite{riesznet} further develop automatic procedures to learn the Riesz representation using neural networks and random forests. Building on this line of work, we extend the Auto-DML framework to conditional setting. We introduce the conditional Riesz representer and use it to construct debiased estimators for conditional functionals. Our conditional Riesz represneter can also be automatic estimated, resulting in a fully automatic debiasing procedure for conditional targets.

Second, our work contributes to the literature on machine learning-based estimation of conditional moment functions. \cite{grf} introduce Generalized Random Forest (GRF), a flexible tree-based method for solving general conditional moment equations. GRF builds trees with partitions designed to capture heterogeneity in the target quantity. However, it cannot directly perform high-dimensional local nuisance estimations, so in practice, it is often combined with global nuisance function estimators in high-dimensional settings, which may not align well with the conditional estimand. \cite{orf} extend GRF to the Orthogonal Random Forest (ORF) algorithm by combining the forest-based algorithm with Neyman orthogonal moments. While ORF allows for local fitting of high-dimensional nuisance functions around the target feature, it has several limitations. First, ORF employs double ML to construct Neyman orthogonal moments. Effective residualization in double ML typically requires well-specified models for the data generating process. For example, when estimating HTE, accurately modeling both the outcome and treatment variables is often crucial for achieving strong performance. Second, ORF refits nuisance functions for each target prediction point $x$, leading to high computational costs when the number of query points is large. 
In contrast, our DRRF method automatically constructs a doubly robust moment functional for estimating $\theta_0(x)$ without relying on any parametric or semi-parametric assumptions about the treatment variable. DRRF also facilitates efficient estimation of $\theta_0(x)$ across multiple query points.
Additionally, compared to other doubly robust estimators that rely on the inverse of propensity score estimates, such as ForestDRLearner and the one proposed by \cite{zimmert2019nonparametric}, our approach directly estimates the debiasing term, ensuring stability even when propensity scores are near boundary values.

We conduct extensive simulations to demonstrate the superior performance of the DRRF algorithm in HTE estimation over benchmark approaches, including variants of ORF and GRF. Our results highlight three main advantages of DRRF. First, like ORF, DRRF performs local fitting of nuisance functions and generally outperforms GRF variants that rely on global fitting. Second, DRRF performs automatic debiasing and does not require a well-specified model for the treatment variable. In contrast, ORF may suffer from significant bias or variance if the treatment model is misspecified, whereas DRRF remains stable and provides accurate estimates. Lastly, DRRF is significantly more computationally efficient than ORF, with much shorter evaluation runtime when estimating at multiple target query points.

The rest of the paper is organized as follows. In Section \ref{sec: dr_cme}, we introduce our doubly robust conditional moment estimation procedure. In particular, we define the general framework in Section \ref{subsec: dr estimation} and propose the Doubly Robust Random Forest algorithm in Section \ref{subsec: drrf} as an efficient implementation of the general framework. In Section \ref{sec: theory}, we provide theoretical results that establish the consistency and asymptotic normality of our estimator under appropriate conditions. Finally, in Section \ref{sec:simulations}, we conduct comprehensive synthetic data simulations to demonstrate the advantages of our method over benchmark approaches.

\section{Problem Statement} \label{subsec: ps}

Consider a dataset with $2n$ i.i.d. observations $Z_i=(Y_i,X_i,W_i)$, where each observation $Z_i$ contains a scalar outcome of interest $Y_i$, target covariates $X_i$, and additional variables $W_i$. This paper provides an automatic, doubly robust, and computationally efficient approach for estimating the conditional expectation of a moment functional
\begin{align} \label{eq: target}
    \theta_0(x) & = \E[m(Z;g_0(x,W))\mid X=x],
\end{align}
evaluated at multiple target points $x=x_1,\ldots,x_m$. Here, $g_0(x,w)$ is an unknown nuisance regression function defined by $g_0(x,w) \coloneqq \E[Y\mid W=w,X=x]$, and $m(z;g)$ is a known moment functional that depends on a data observation $z$ and a possible regression function $g$. We assume that the target feature space $\mathcal{X}=[0,1]^d$ has a fixed dimension $d$, and that $X$ has a density bounded away from both zero and infinity. We allow the dimension of $W$ to be large and grow with $n$, positioning the problem in a high-dimensional setting where the nuisance regression function can depend on a significant number of variables.

To illustrate the general target in \eqref{eq: target}, we consider estimation problems related to heterogeneous treatment effects (HTE) in Examples \ref{ex:cate}-\ref{ex:cipe}. These examples demonstrate the broad applicability of our framework to moment functionals in causal inference. They also highlight the importance of including a large number of covariates in the nuisance regression function $g_0$, as doing so enhances the plausibility of the unconfoundedness assumption for treatment effect identification.

\begin{example}[Conditional Average Treatment Effect] \label{ex:cate} Suppose that $D$ is a binary treatment variable, and $Y(d)$ is the potential outcome under treatment $d$. Our objective is to estimate the average treatment effect conditional on the target covariates $X$: 
$$\theta_0(x) = \E[Y(1)-Y(0)\mid X=x],$$
at multiple values $x$. This problem is practically important for capturing fine-grained treatment effects that account for individual heterogeneity across a diverse population.

Under the unconfoundedness assumption that $D \perp\!\!\!\perp Y(d) \mid X,\tilde X$, we have
\[
\theta_0(x) = \E\Ls g_0(x,(1,\tilde X)) - g_0(x,(0,\tilde X)) \mid X=x\Rs.
\]
In this example, $W =(D,\tilde X)$ includes the treatment indicator $D$ and additional control covariates $\tilde X$. We define the moment function in \eqref{eq: target} as $m(z; g) = g(x, (1, \tilde{x})) - g(x, (0, \tilde{x})),$ so that the problem falls within the general framework described above.

\end{example}

\begin{example}[Conditional Average Marginal Effect] \label{ex:came} 
With continuous treatment variable $D$, the conditional average marginal effect is defined by \[
\theta_0(x) = \E\Ls \left.\frac{\partial Y(D)}{\partial d} \ \right| \ X=x\Rs.
\]
This quantity captures the average rate of change in the potential outcome with respect to the treatment, for an individual with features $X = x$.

Under the unconfoundedness assumption that $D \perp\!\!\!\perp Y(d) \mid X,\tilde X$,
\[
\theta_0(x) = \E\Ls \left.\frac{\partial g_0(x,(D,\tilde X))}{\partial d} \ \right| \ X=x\Rs.
\]
Therefore, this fits into the general problem \eqref{eq: target} by taking $W= (D,\tilde X)$ and $m(z;g)=\partial g(x,(d,\tilde x))/\partial d$. 
\end{example}

\begin{example}[Conditional Incremental Policy Effect] \label{ex:cipe} Using the same notation as in Example~\ref{ex:came}, the incremental policy effect for a policy $\pi(x, \tilde{x}) \in [-1, 1]$ is defined by 

$$\theta_0(x) = \E\Ls\left.\left. \frac{d}{d\nu}\right|_{\nu=0} Y(D+\nu\pi(x,\tilde{X})) \right| \ X=x\Rs.$$ The incremental policy effect captures the average instantaneous change in the outcome $Y$ induced by shifting the treatment in the direction of the policy $\pi(x, \tilde{x})$, conditional on $X = x$.

Under the unconfoundedness assumption that $D \perp\!\!\!\perp Y(d) \mid X,\tilde X$, we have 

\[
\theta_0(x) = \mathbb{E} \left[ \left. \pi(x,\tilde{X}) \frac{\partial g_0(x,(D,\tilde X))}{\partial d} \, \right| \, X = x \right].
\]
Therefore, we define $W = (D,\tilde X)$ and take the moment function as $m(z;g)=\pi(x,\tilde{x})\partial g(x,(d,\tilde x))/\partial d$. 

\end{example}

\paragraph{Notations.} In the following, we define $\norm{g}_x = (\E[g(x,W)^2\mid X=x])^{1/2}$ and let $\mathcal{G}_x$ be the set of functions such that $\norm{g}_x<\infty$. We use $f(n)=\omega(g(n))$ to denote that $f$ asymptotically dominates $g$, i.e., $f(n)/g(n)\rightarrow \infty$ as $n\rightarrow \infty$, and $f(n)=\Theta(g(n))$ denotes that $f$ is asymptotically bounded by $g$ both from above and below, i.e., $f(n)=O(g(n))$ and $g(n)=O(f(n))$.

\section{Automatic Doubly Robust Conditional Moment Estimation} \label{sec: dr_cme}

\subsection{Conditional Riesz Representer and Debiased Moment Functional}

In this section, we show how to adapt the automatic debiasing method using the Riesz representer \citep{chernozhukov2022automatic} to estimate the \textit{conditional} target $\theta_0(x)$. To apply the Riesz representation theorem, we assume that, for any $x \in \mathcal{X}$, the conditional expected moment $\E[m(Z;g(x,W))\mid X=x]$ is a continuous linear functional of $g$ with domain $\mathcal{G}_x$. Note that the conditional function space $\mathcal{G}_x$ is the $L_2$ space under the law $\mathcal{L}([X,W]\mid X=x)$, and specifically, it is a Hilbert space with inner product defined by $$\langle g_1,g_2\rangle_x = \E[g_1(X,W)g_2(X,W)|X=x]. $$

By the Riesz representation theorem, there exists a unique function $\alpha_0\in \mathcal{G}_x$, such that for any $g\in \mathcal{G}_x$,
\begin{align}
    \E[m(Z;g(x,W))\mid X=x] = \E[\alpha_0(x,W)g(x,W)\mid X=x].
\end{align}

We refer to $\alpha_0$ as the \textit{conditional Riesz representer} (cRR) to highlight its distinction from the standard Riesz representer used for global moment functionals: the cRR depends explicitly on the conditioning variable $X = x$. 
To first build some intuition for the cRR $\alpha_0$ and its role in our framework, we illustrate it in the context of CATE estimation in Example \ref{ex:cate}.

\begin{examplecontinued}{1}[Conditional Average Treatment Effect]
For any function $g\in \mathcal{G}_x$,
\begin{align*}
    &\E[g(x,(1,\tilde X)) - g(x,(0,\tilde X)) \mid X=x] \\
    = \  & \E\left[\left.\left( \frac{D}{\PP(D=1\mid \tilde X,X=x)} - \frac{1-D}{\PP(D=0\mid \tilde X,X=x)} \right) g(x,W)\ \right| X=x \right].
\end{align*}
Therefore, under mild conditions,\footnote{For example, it suffices to assume that $\PP(D=1\mid \tilde X,X=x)$ is bounded away from 0 and 1, so that the conditional Riesz representer $\alpha_0(x,W)$ defined by the propensity scores satisfies $\alpha_0\in \mathcal{G}_x$.} the conditional Riesz representer $\alpha_0\in \mathcal{G}_x$ is given by $$\alpha_0(x,W) = \frac{D}{\PP(D=1\mid \tilde X,X=x)} - \frac{1-D}{\PP(D=0\mid \tilde X,X=x)}.$$
\end{examplecontinued}

Using the conditional Riesz representer, we construct a debiased moment functional $\psi(Z; \alpha_0, g)$ for the original moment functional $m(Z; g)$. This construction is analogous to the use of unconditional Riesz representer in the debiased machine learning literature for estimating unconditional estimands. Specifically, we define  
\[
\psi(Z; \alpha_0, g) \coloneqq m(Z; g(X, W)) + \alpha_0(X, W)(Y - g(X, W)).
\]

It is straightforward to show that the debiased moment functional $\psi$ satisfies $\E[\psi(Z;\alpha_0,g_0)\mid X=x]=\theta_0(x)$ by conditioning on $X$ and $W$ and applying the law of iterated expectations. Therefore, we can estimate $\theta_0(x)$ through a plug-in weighted average of $\psi(Z;\hat{\alpha},\hat{g})$, using estimators for both $g_0$ and $\alpha_0$. Importantly, the debiased moment functional $\psi(Z;\alpha,g)$ is locally Neyman orthogonal, meaning that the regression estimator $\hat{g}$ and the cRR estimator $\hat{\alpha}$ have no first-order effect on the moment at the target point $x$. Furthermore, $\psi(Z;\alpha,g)$ is doubly robust for the estimation of $\theta_0(x)$, which allows our estimator to achieve a fast convergence rate under mild consistency requirement for the nuisance estimators $\hat{g}$ and $\hat{\alpha}$. We will elaborate on this property in our theoretical results in Section \ref{sec: theory}. In Propositions \ref{prop: ortho} and \ref{prop: dr} below, we formally state the local Neyman orthogonality and double robustness properties of the debiased moment functional $\psi(Z;{\alpha},{g})$.

\begin{proposition}[Local Orthogonality] \label{prop: ortho} 
    For any fixed estimators $\hat{g}$ for the regression function and $\hat{\alpha}$ for the conditional Riesz representer, the local gradients of $\psi$ with respect to $g$ and $\alpha$ satisfy
    \begin{equation}
        \begin{aligned}
        &\E\Ls \nabla_g\psi(Z;\alpha_0,g_0)[\hat{g}-g_0]\mid X=x\Rs  = 0\,, \\
        &\E\Ls \nabla_{\alpha}\psi(Z;\alpha_0,g_0)[\hat{\alpha}-\alpha_0]\mid X=x\Rs=0\,.
        \end{aligned}
    \end{equation}
\end{proposition}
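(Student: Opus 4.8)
The plan is to verify the two identities directly by computing the Gateaux derivatives of $\psi(Z;\alpha,g) = m(Z;g(X,W)) + \alpha(X,W)(Y-g(X,W))$ in the directions $\hat g - g_0$ and $\hat\alpha - \alpha_0$, and then taking conditional expectations given $X=x$. Throughout I would use that $\hat g, \hat\alpha$ are \emph{fixed} (e.g., estimated on an independent fold), so they may be treated as deterministic functions when taking expectations over $Z$.

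For the second identity, differentiating in $\alpha$ is the easy direction: $\psi$ is affine in $\alpha$, so
\[
\nabla_\alpha \psi(Z;\alpha_0,g_0)[\hat\alpha - \alpha_0] = (\hat\alpha(X,W) - \alpha_0(X,W))(Y - g_0(X,W)).
\]
Taking $\E[\cdot\mid X=x]$ and then conditioning further on $W$ (law of iterated expectations), the factor $\E[Y - g_0(x,W)\mid W, X=x] = 0$ by the definition $g_0(x,w) = \E[Y\mid W=w, X=x]$, which kills the term. This step only needs measurability/integrability so that the iterated expectation is valid.

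For the first identity, I would write $m(Z;g)$'s dependence on $g$ through the evaluation $g\mapsto g(X,W)$, so that
\[
\nabla_g \psi(Z;\alpha_0,g_0)[\hat g - g_0] = \partial_g m(Z;g_0(X,W))\,[\hat g - g_0](X,W) - \alpha_0(X,W)(\hat g - g_0)(X,W),
\]
where the second term comes from differentiating the $-\alpha_0 g$ piece. Taking $\E[\cdot\mid X=x]$, the key is that $\E[\partial_g m(Z;g_0(x,W))\,h(x,W)\mid X=x]$ equals the derivative of $g\mapsto \E[m(Z;g(x,W))\mid X=x]$ at $g_0$ in direction $h$; since that functional is \emph{linear} in $g$ (the continuous linear functional assumed for the Riesz representation), its derivative is the functional itself, namely $\E[m(Z;h(x,W))\mid X=x]$. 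By the defining property of the Riesz representer $\alpha_0$, this equals $\E[\alpha_0(x,W)h(x,W)\mid X=x]$, which exactly cancels the second term with $h = \hat g - g_0$.

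The main obstacle is being careful about \emph{what "linear functional" means at the level of the integrand} $m(z;g)$: a priori $m(z;\cdot)$ need not be linear in the scalar $g$, only the conditional expectation is linear in the function $g$. The cleanest resolution is to interpret $\nabla_g \psi(Z;\alpha_0,g_0)[\hat g - g_0]$ via the directional derivative of the \emph{population} map $t\mapsto \E[\psi(Z;\alpha_0,g_0 + t(\hat g - g_0))\mid X=x]$ at $t=0$ — which is what the statement's notation $\E[\nabla_g\psi[\cdot]\mid X=x]$ really denotes — so that I never need pointwise linearity of $m$, only linearity of the conditional-moment functional. Under that reading, the first identity follows in one line from the definition of $\alpha_0$, and the only genuine regularity requirement is that differentiation under the integral sign (or the definition of the Gateaux derivative of the functional) be justified, which I would invoke as part of the standing continuity/linearity assumption made just before the Riesz representation theorem. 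I would present both computations as short displays and flag the differentiation-under-the-integral point as the one place where the standing assumptions are used.
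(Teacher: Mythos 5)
Your proposal is correct and follows essentially the same route as the paper: the $\alpha$-direction is handled identically (affine dependence plus iterated expectations with $\E[Y\mid W,X=x]=g_0(x,W)$), and the $g$-direction likewise reduces to linearity of the conditional moment functional together with the defining property of $\alpha_0$, with the paper exchanging the limit and the conditional expectation under a footnoted Lipschitz-type condition — precisely the differentiation-under-the-integral point you flag. Your caveat about not needing pointwise linearity (or differentiability) of $m(z;\cdot)$, only of the conditional-moment functional, matches how the paper actually argues.
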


\begin{proposition}[Double Robustness] \label{prop: dr}
    For any fixed estimators $\hat{g}$ for the regression function and $\hat{\alpha}$ for the conditional Riesz representer, 
    \begin{equation}
        \begin{aligned}
            &\E[\psi(Z;\hat\alpha,\hat g)\mid X=x]-\theta_0(x) =\E[( \alpha_0(x,W)-\hat\alpha(x,W))(\hat g(x,W)-g_0(x,W))\mid X=x].
        \end{aligned}
    \end{equation}
\end{proposition}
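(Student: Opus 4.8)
The plan is a direct computation: expand the debiased moment, convert the $m$-term into an inner product with $\alpha_0$ via the Riesz representation, clean up the correction term by iterated expectations, and then collect the pieces against the Riesz form of $\theta_0(x)$.

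First I would write out the conditional mean of $\psi$. On the event $\{X=x\}$ the debiased moment equals $m(Z;\hat g(x,W)) + \hat\alpha(x,W)(Y-\hat g(x,W))$, so
\[
\E[\psi(Z;\hat\alpha,\hat g)\mid X=x] = \E[m(Z;\hat g(x,W))\mid X=x] + \E\big[\hat\alpha(x,W)(Y-\hat g(x,W))\mid X=x\big].
\]
Here I use the standing convention that $\hat g$ and $\hat\alpha$ are fixed (deterministic) functions — operationally justified by the cross-fitting in the algorithm — so they may be pulled through conditional expectations freely.

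Second, for the first term I invoke the defining property of the Riesz representer: assuming $\hat g\in\mathcal{G}_x$, continuity and linearity of $g\mapsto \E[m(Z;g(x,W))\mid X=x]$ give $\E[m(Z;\hat g(x,W))\mid X=x] = \E[\alpha_0(x,W)\hat g(x,W)\mid X=x]$. For the correction term I condition additionally on $W$ and use $\E[Y\mid W=w,X=x]=g_0(x,w)$ together with the law of iterated expectations to get $\E[\hat\alpha(x,W)(Y-\hat g(x,W))\mid X=x] = \E[\hat\alpha(x,W)(g_0(x,W)-\hat g(x,W))\mid X=x]$. Substituting $\theta_0(x)=\E[\alpha_0(x,W)g_0(x,W)\mid X=x]$ and collecting the three conditional expectations yields
\[
\E[\psi(Z;\hat\alpha,\hat g)\mid X=x]-\theta_0(x) = \E\big[(\alpha_0(x,W)-\hat\alpha(x,W))(\hat g(x,W)-g_0(x,W))\mid X=x\big],
\]
which is the claimed identity up to the overall sign.

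There is no substantive obstacle here; the only points needing care are (i) that the Riesz representation applies to the estimated regression, i.e.\ $\hat g\in\mathcal{G}_x$, and (ii) treating $\hat g,\hat\alpha$ as fixed so the iterated-expectation manipulations are legitimate — both already assumed. As a byproduct, differentiating this exact bilinear remainder in the direction of $\hat g-g_0$ or $\hat\alpha-\alpha_0$ at $(g_0,\alpha_0)$ recovers Proposition~\ref{prop: ortho}.
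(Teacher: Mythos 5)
Your proof is correct and follows essentially the same route as the paper's: apply the Riesz representation to the $m$-term, use $\E[Y\mid W,X=x]=g_0(x,W)$ via iterated expectations on the correction term, substitute $\theta_0(x)=\E[\alpha_0(x,W)g_0(x,W)\mid X=x]$, and collect the bilinear remainder. One small remark: your final expression $\E[(\alpha_0-\hat\alpha)(\hat g-g_0)\mid X=x]$ is not merely the claim "up to the overall sign" --- it is exactly the claimed right-hand side, since $-(\hat\alpha-\alpha_0)=\alpha_0-\hat\alpha$, so there is no discrepancy to resolve.
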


In Proposition \ref{prop: ortho}, the local orthogonality implies that the conditional expectation of the debiased moment functional is insensitive to local estimation errors in the nuisance functions $\hat g$ and $\hat{\alpha}$ around their true values.
The double robustness in Proposition \ref{prop: dr} ensures the bias of the plug-in estimators of $\psi$ converges to zero at the rate determined by the product of the convergence rates of $\hat\alpha$ and $\hat g$. This property is crucial since it indicates that a fast convergence rate for the estimation of $\theta_0$ can be achieved without requiring highly accurate estimates of either $g_0$ or $\alpha_0$ individually. The double robustness property accommodates complex nuisance regression functions with high-dimensional parameterizations and is particularly desirable in scenarios where accurate estimation of the conditional Riesz representer is achievable.

This debiasing approach is automatic in the sense that, it does not require prior knowledge of the functional form of the debiasing term. In particular, the conditional Riesz representer $\alpha_0$ can be estimated directly as the minimizer of the conditional loss function
\begin{equation}  \label{equ:crr_est}
    \begin{aligned}
    \alpha_0(x,W) =~& \argmin_\alpha \E\Ls (\alpha(x,W)-\alpha_0(x,W))^2 \mid X=x\Rs \\
    =~& \argmin_\alpha \E\Ls \alpha(x,W)^2-2\alpha_0(x,W))\alpha(x,W)+\alpha_0(x,W)^2 \mid X=x\Rs \\
    =~& \argmin_\alpha \E\Ls \alpha(x,W)^2-2m(Z;\alpha) \mid X=x\Rs.
    \end{aligned}
\end{equation}
In Section~\ref{subsubsec: nusaince estimation}, we show that, under appropriate conditions, the estimation of the cRR $\alpha_0$ is naturally aligned with the estimation of the regression function $g_0$ based on the Hilbert projection theorem. Specifically, when $g_0$ follows a partially linear form, $\alpha_0$ can, without loss of generality, be restricted to the same functional form. We leverage this insight to develop an algorithm that automatically estimates the conditional Riesz representer and, in turn, constructs the debiasing term for the original moment functional.

\subsection{Doubly Robust Estimation} \label{subsec: dr estimation}

Now, we introduce our doubly robust procedure for estimating $\theta_0(x)$. We first outline the procedure for estimating $\theta_0(x)$ at a single target point $x$, and then demonstrate how to efficiently estimate $\theta_0(x)$ at a large number of query points.

To estimate $\theta_0(x)$, we begin by splitting the $2n$ observations $Z_i = (Y_i,X_i,W_i)$ into two halves. The second half of the data is used to construct the regression estimator $\hat g$ and the cRR estimator $\hat\alpha$, while the first half is used to compute similarity weights and construct a plug-in estimator of $\theta_0(x)=\E[\psi(Z;\alpha_0,g_0)|X=x]$. Specifically, the procedure consists of the following two steps:
\begin{enumerate}
    \item First step: Construct the regression estimator $\hat g$ and the cRR estimator $\hat\alpha$ using the second half of the data $Z_{n+1:2n}\coloneqq\{Z_{n+1},\ldots, Z_{2n}\}$ with some guarantee on the conditional root mean squared error (to be made precise in Section  \ref{sec: theory}).
    Evaluate $\hat{g}(X_i,W_i)$ and $\hat{\alpha}(X_i,W_i)$ for each $i\in\{1,\ldots,n\}$.
    \item Second step: Compute similarity weights (kernels) $K(x,X_i,Z_{1:n},\xi)$ based on the first half of the data $Z_{1:n}\coloneqq\{Z_1,\ldots, Z_n\}$. 
    Estimate $\theta_0(x)$ via the plug-in weighted average of debiasing moment functional $\psi$ with $\hat{g}(X_i,W_i)$ and $\hat{\alpha}(X_i,W_i)$:
    \begin{align} \label{eq: theta hat}
        \hat{\theta}(x) = \sum_{i=1}^n K(x,X_i,Z_{1:n},\xi)\Ls m(Z_i;\hat{g}(X_i,W_i)) + \hat{\alpha}(X_i,W_i)(Y_i-\hat{g}(X_i,W_i))\Rs.
    \end{align}
\end{enumerate}

In the second step, the weight $K(x,X_i,Z_{1:n},\xi)$ is estimated based on $Z_{1:n}$ and measures the importance of the sample point $X_i$ for estimating $\theta_0(x)$.
In this paper, we consider the sub-sampled kernel that is obtained by averaging $B$ sub-samples of the base kernel $\tilde K$:
\begin{align}
    K(x,X_i,Z_{1:n},\xi) = \frac{1}{B}\sum_{b=1}^B \tilde K(x,X_i,Z_{S_b},\xi_b),
\end{align}
where $\{Z_{S_b}\}_{b=1}^{B}$ are random size-$s$ sub-samples of $\{Z_1,\ldots,Z_n\}$, $\xi_b$ represents the internal randomness of the base kernel, and $\xi$ captures the randomness from both sub-sampling and $\{\xi_b\}_{b=1}^{B}$. The base kernel $\tilde K$ satisfies $\tilde K(x,X_i,Z_{S_b},\xi_b)=0$ for any $i\notin S_b$ and $\sum_{i\in S_b} \tilde K(x,X_i,Z_{S_b},\xi_b) = 1$ for all $b$.

In Section \ref{subsec: drrf}, we propose a data-efficient tree-based algorithm to construct the base kernel $\tilde K$, which directly targets maximizing the heterogeneity of $\theta_0(x)$. Additionally, we demonstrate in Section \ref{subsec: drrf} the use of the forest lasso algorithm for estimating the nuisance functions in the first step of our two-step procedure.

\paragraph{Estimation at multiple points.} In many practical applications, we are interested in estimating $\theta_0(x)$ at a large number of points $x=x_1,\ldots,x_m$. Observe that our estimator in \eqref{eq: theta hat} depends on $x$ only through the kernels \{$K(x,X_i,Z_{1:n},\xi)\}_{i=1}^n$, which measure the similarities of $x$ and the $X_i$'s. Therefore, we can pre-compute and store $\{ m(Z_i;\hat{g}(X_i,W_i)) + \hat{\alpha}(X_i,W_i)(Y_i-\hat{g}(X_i,W_i))\}_{i=1}^n$ based on the training data. To estimate $\theta_0(x_1),\ldots \theta_0(x_m)$, it is then sufficient to evaluate the kernel $K(x,X_i,Z_{1:n},\xi)$ at $x=x_1,\ldots,x_m$ for each $i\in \{1,\ldots,n\}$. Unlike the Orthogonal Random Forest (ORF) \citep{orf}, which requires the estimation of the nuisance functions at each new target point $x$, our procedure allows for the pre-computation and storage of the nuisance function estimates at the sample covariates $X_i$. As a result, when querying multiple points $x_1, \ldots, x_m$, our method avoids re-estimating the nuisance functions, significantly accelerating the estimation process. 
This innovation, however, makes the derivation of the asymptotic normality results for our estimator more challenging, as it requires establishing a uniform bound on nuisance errors. This is accomplished in our proof of Theorem \ref{thm: asymptotic normality}.

\subsection{Doubly Robust Random Forest} \label{subsec: drrf}
In this section, we introduce the Doubly Robust Random Forest (DRRF) for estimating $\theta_0(x)$, which provides an efficient implementation of the doubly robust estimation procedure in Section \ref{subsec: dr estimation}. Specifically, we outline the estimation procedures for the nuisance functions $g_0$ and $\alpha_0$ in Section \ref{subsubsec: nusaince estimation} and detail the construction of the forest kernel in Section \ref{subsubsec: forest kernel}.

\subsubsection{Estimation of \texorpdfstring{$g_0$}{g0} and \texorpdfstring{$\alpha_0$}{alpha0}} \label{subsubsec: nusaince estimation}

In practice, our framework allows for the use of a variety of methods for estimating the nuisance functions $g_0$ and $\alpha_0$. However, to provide a concrete example and establish the foundation for the theoretical results of our estimator, we adopt a specific form of the nuisance regression $g_0$ following \cite{orf}.
Specifically, we assume that $g_0(x,w) = h(w,\nu_g(x))$ for some known function $h$ and unknown function $\nu_g:\mathcal{X}\rightarrow \mathbb{R}^{d_{\nu}}$. We consider the function $h$ such that $\mathcal{H}_x \coloneqq \{g\in \mathcal{G}_x: g(x,w)=h(w,\nu(x)) \ \text{for some function} \ \nu:\mathcal{X}\rightarrow \mathbb{R}^{d_{\nu}} \}$ is a closed linear subspace of $\mathcal{G}_x$. For example, this holds when $h(w, \nu(x)) = r(w)^\top \nu(x)$ for some function $r$, under mild conditions.\footnote{See Appendix \ref{app: riesz} for a proof.} We allow the dimension $d_{\nu}$ to grow with $n$.

Consequently, the conditional Riesz representer $\alpha_0(x, w)$ can be projected onto the space $\mathcal{H}_x$, yielding $\alpha_{\mathcal{H}_x}(x,w)$ such that for all \( g \in \mathcal{H}_x \),
\[
\E[m(Z; g(X, W)) \mid X = x] = \E[g(X, W) \alpha_{\mathcal{H}_x}(X, W) \mid X = x].
\]

Therefore, without loss of generality, we can consider the conditional Riesz representer $\alpha_0(w,x)$ as taking the form $h(w,\nu_\alpha(x))$, with the same function $h$ and an unknown function $\nu_\alpha:\mathcal{X}\rightarrow \mathbb{R}^{d_{\nu}}$.

Suppose that the unknown functions $\nu_g$ and $\nu_\alpha$ can be identified as the minimizers of the following local losses with known loss functions $l_g$ and $l_\alpha$:
\begin{equation}
    \begin{aligned}
        &\nu_g(x) = \argmin_{\nu\in V}\E[l_g(Z;\nu)\mid X=x],\\
        &\nu_\alpha(x) = \argmin_{\nu\in V}\E[l_\alpha(Z;\nu)\mid X=x],
    \end{aligned}
\end{equation}
where $V$ is a bounded space. 

\begin{example}[Locally Linear Parameterization] Consider the case where the function $h$ is linear in $w$ locally around each $x$, taking the form $h(w, \nu(x)) = w^\top \nu(x)$. Since $g_0$ is the conditional expectation of the outcome $Y$, we can define the loss function $l_g$ for $\nu_g$ as the squared loss $l_g(Z;\nu)=(Y-W^\top \nu)^2$. Moreover, the conditional Riesz representer satisfies $\E[f_j(X,W)\alpha_0(X,W)- m(Z;f_j(X,W))\mid X=x]=0$ for all $f_j(x,w)\coloneqq w_j$ in $\mathcal{H}_x$. Let $\tilde{m}(Z;W)=(m(Z;f_1),\ldots, m(Z;f_{d_\nu}))^\top$. Under the covariance condition $\E[WW^\top \mid X=x] \succ 0$, we can take the loss function $l_\alpha$ for $\nu_\alpha$ to be $l_\alpha(Z;\nu)=(W^\top \nu)^2 - 2\tilde{m}(Z;W)^\top \nu$.
\end{example}

Next, we demonstrate the use of the Forest Lasso algorithm proposed by \cite{orf} to estimate $g_0$ and $\alpha_0$ in the first step of the doubly robust estimation procedure. Specifically, we run the tree learner defined in Section \ref{subsubsec: forest kernel} on $\{Z_{n+1},\ldots, Z_{2n}\}$ based on $B$ subsamples of size $s$ to obtain kernels $K(\cdot,X_i,Z_{n+1:2n},\xi)$ for $i = n+1, \ldots, 2n$. Then, we solve the lasso problems
\begin{equation}  \label{equ: lasso}
    \begin{aligned}
    \hat{\nu}_g(\cdot) &= \argmin_{\nu\in V}\sum_{i=n+1}^{2n} K(\cdot,X_i,Z_{n+1:2n},\xi) l_g(Z_i;\nu) + \lambda_g \|\nu\|_1,\\
    \hat{\nu}_\alpha(\cdot) &= \argmin_{\nu\in V}\sum_{i=n+1}^{2n} K(\cdot,X_i,Z_{n+1:2n},\xi) l_\alpha(Z_i;\nu) + \lambda_{\alpha} \|\nu\|_1
    \end{aligned}
\end{equation}
 at $X_1,\ldots,X_n$. In contrast to \cite{orf}, which applies Forest Lasso at the target point $x$, our approach runs the algorithm at the training data points $X_1,\ldots,X_n$. This modification eliminates the need to rerun the algorithm for each new target point once we solve the $n$ Forest Lasso programs associated with $X_1,\ldots,X_n$. 

 As to be shown in Section \ref{sec: theory}, with appropriately chosen regularization parameters $\lambda_g$ and $\lambda_{\alpha}$, the estimators $\hat{\nu}_g$ and $\hat{\nu}_\alpha$ are consistent under the assumptions that the high-dimensional functions $\nu_g$ and $\nu_\alpha$ are sparse and supported on a finite set of dimensions, along with other regularity conditions. These results are formalized in Lemma \ref{pro: nu guarantee} in Appendix, which will then be used to establish the asymptotic normality of the target parameter $\theta_0(x)$ in Theorem \ref{thm: asymptotic normality for DRRF}.

\subsubsection{Forest kernel} \label{subsubsec: forest kernel}

We proceed to describe the forest algorithm for computing the similarity weights $K(x,X_i,Z_{1:n},\xi)$ in \eqref{equ: lasso} and in the second step of our estimation. As stated in Section \ref{subsec: dr estimation}, the kernel $K(x,X_i,Z_{1:n},\xi)$ is the average over $B$ base kernels $\tilde K(x,X_i,Z_{S_b},\xi_b)$ constructed from random size-$s$ sub-samples $S_b$ for $b=1,\ldots,B$. Our algorithm combines the debiased moment functional with the tree-based learner from ORF \citep{orf} to compute the base kernel $\tilde K$. 

For each sub-sample $S_b$, we randomly partition it into two equal-sized subsets $S_b^1$ and $S_b^2$. We use $S_b^1$ to grow a tree learner that splits the feature space $\mathcal{X}$ into small subspaces, and use $S_b^2$ to evaluate the similarity weights.

\paragraph{Splitting algorithm.}
For $S_b^1$, we start from a root node containing the entire space $\mathcal{X}$. At each node $P$, we implement the following three-stage procedure: 
\begin{enumerate}
\item Estimate $\nu_g$ by solving an equally weighted Lasso problem
\[
\hat{\nu}_g = \argmin_{v\in V}\frac{1}{|P\cap S_b^1|}\sum_{i\in P\cap S_b^1}l_g(Z_i;v) + \lambda_g \|v\|_1.
\] Estimate $\nu_\alpha$ using the same method.\footnote{In practice, one can add a small ridge penalty to the minimization problems to improve stability, especially when the sample size \( |P \cap S_b^1| \) is small. In our implementation, we apply this approach when estimating $\nu_{\alpha}$.}
\item For each $i\in P\cap S_b^1$, compute the nuisance estimates $\hat{g}(X_i,W_i)=h(W_i, \hat{\nu}_g)$ and $\hat{\alpha}(X_i,W_i)=h(W_i, \hat{\nu}_\alpha)$.
\item For each potential split, calculate $\hat{\theta}$ at the resulting two children nodes $C_1$ and $C_2$ as:
\begin{align*}
    \hat{\theta}_{C_k} = \frac{1}{|C_k\cap S_b^1|}\sum_{i \in C_k\cap S_b^1} \Ls m(Z_i;\hat{g}(X_i,W_i))+\hat{\alpha}(X_i,W_i)(Y_i-\hat{g}(X_i,W_i))\Rs, \quad k=1,2.
\end{align*}
Denote $\psi_i =m(Z_i;\hat{g}(X_i,W_i))+\hat{\alpha}(X_i,W_i)(Y_i-\hat{g}(X_i,W_i))$. 
Find the split that minimizes the sum of squares error (SSE) $\sum_{i\in C_1 \cap S_b^1} (\hat{\theta}_{C_1}-\psi_i)^2+\sum_{i\in C_2 \cap S_b^1} (\hat{\theta}_{C_2}-\psi_i)^2.$\footnote{This is equivalent to finding a split for the standard regression tree with response $\psi_i$ for each sample point. Note that minimizing the SSE is also equivalent to maximizing the heterogeneity score $\frac{1}{|C_1\cap S_b^1|}\big(\sum_{i\in C_1 \cap S_b^1} (\psi_i-\hat{\theta}_P)\big)^2+\frac{1}{|C_2\cap S_b^1|}\big(\sum_{i\in C_2 \cap S_b^1} (\psi_i-\hat{\theta}_P)\big)^2$, where $\hat{\theta}_P$ is the estimated $\hat\theta$ at the parent node.}
\end{enumerate}
Throughout the tree-growing process, we maintain the tree properties specified in Assumption \ref{assump: forest}.

\paragraph{Weight estimation.}
For each tree based on $S_b^1,$ let $L_b(x)$ denote the leaf that contains the target value $x$. We compute the base kernels $\tilde K(x,X_i,Z_{S_b},\xi_b)$ for each point $i$ as
\begin{align*}
    \tilde K(x,X_i,Z_{S_b},\xi_b) = \frac{\mathbbm{1}((i\in S_b^2)\cap (X_i\in L_b(x)))}{|S_b^2 \cap L_b(x)|}.
\end{align*}
Note that we only assign positive weights $\tilde K(x,X_i,Z_{S_b},\xi_b)$ to the sample points in $S_b^2$.

\section{Theoretical Results} \label{sec: theory}
In this section, we establish the consistency and asymptotic normality of the general doubly robust estimator introduced in Section~\ref{subsec: dr estimation}, as well as the DRRF estimator described in Section~\ref{subsec: drrf}, which provides an effective implementation of the general estimator. These results are derived under appropriate regularity conditions.

\subsection{Consistency}

We begin by establishing the consistency of the doubly robust estimator $\hat{\theta}(x)$, constructed using sub-sampled kernels as described in Section~\ref{subsec: dr estimation}, for estimating the target parameter $\theta_0(x)$. To ensure consistency, we require that the base kernel $\tilde K$ satisfies the following properties:

\begin{assumption}[Honesty]  \label{assump: honesty}
For any set $S_b$ and sample $i$, the base kernel $\tilde{K}(x,X_i,Z_{S_b},\xi_b)$ satisfies
\[
\tilde{K}(x,X_i,Z_{S_b},\xi_b) \perp\!\!\!\perp m(Z_i;\hat{g}(X_i,W_i))+\hat{\alpha}(X_i,W_i)(Y_i-\hat{g}(X_i,W_i)) \mid X_i, \{Z_j\}_{j\in S_b /\{i\}}\,.
\]
\end{assumption}

\begin{assumption}[Kernel Shrinkage] \label{assump: kernel shrinkage}
For any size-$s$ subset $S_b$ of the data, 
\[
\E\Ls \max_{i} \Big\{\|X_i - x\|: \tilde{K}(x,X_i,Z_{S_b},\xi_b)>0\Big\}\Rs \leq \epsilon(s)\,.
\]
\end{assumption}
Assumption \ref{assump: honesty} ensures the conditional independence of the base kernel and the estimated debiased moment functional. Assumption \ref{assump: kernel shrinkage} states that, in expectation, positive weights are assigned only to sample points whose features $X_i$ are sufficiently close to the target $x$. It is straightforward to verify that Assumption \ref{assump: honesty} holds for the DRRF algorithm defined in Section \ref{subsec: drrf}. Moreover, following \cite{wager2018estimation}, the DRRFs that satisfy the forest regularity conditions specified in Assumption \ref{assump: forest} achieve a kernel shrinkage rate of $\epsilon(s) = O(s^{-1/(2a d)})$,\footnote{The parameter $a=\log(\rho^{-1})/(\pi\log((1-\rho)^{-1}))\,,$ where $\rho$ represents the smallest fraction of leaves on each side of the split, and $\pi/d$ is the smallest probability that the split occurs along each feature, both as defined in Assumption \ref{assump: forest}.} which decays polynomially in $s$. 

Finally, we impose additional conditions on the debiased moment functional, as formalized in Assumption~\ref{assump: moment}. These conditions introduce standard boundedness and smoothness assumptions, similar to those assumed in, for example, \cite{orf}.

\begin{assumption} \label{assump: moment} 
\begin{enumerate}
    \item{(Boundedness).} \label{subassump: boundedness} There exists a bound $\psi_{\max}$ such that $|\psi(Z_i;\alpha,g)|\leq \psi_{\max}$ for any observation $Z_i$ and functions $g,\alpha \in \mathcal{G}_x$.
    \item{(Smoothness).} \label{subassump: Lip of moments} 
    The moment $\E\Ls  \psi(Z;\alpha,g) \mid X=x \Rs$ is $L$-Lipschitz in $x$, uniformly over all functions $g,\alpha \in \mathcal{G}_x$.
\end{enumerate}
\end{assumption}

Now we formally state the consistency result for the general doubly robust estimator $\hat{\theta}(x)$ proposed in Section \ref{subsec: dr estimation}.
\begin{theorem}[Consistency of the General Doubly Robust Estimator] \label{thm: consistency}
Define $\mathcal{E}(\hat{g},\hat{\alpha})= \|\hat{g}-g_0\|_x\cdot \|\hat{\alpha}-\alpha_0\|_x.$ Suppose Assumptions \ref{assump: honesty}, \ref{assump: kernel shrinkage}, and \ref{assump: moment} hold. Then, with probability $1-2\delta$,
\begin{equation} \label{equ:thm1_1}
\La\hat{\theta}(x)-\theta_0(x)\Ra \leq  \sqrt{\frac{2\log(2/\delta)}{B}}\psi_{\max} + \sqrt{\frac{2s\log(2/\delta)}{n}}\psi_{\max}+L\epsilon(s)+\mathcal{E}(\hat{g},\hat{\alpha}).
\end{equation}
Furthermore, suppose $s\rightarrow \infty, s=o(n), B\geq n/s$, $\epsilon(s)\rightarrow 0$, and $\mathcal{E}(\hat{g},\hat{\alpha})\rightarrow 0$ as $n\rightarrow \infty$. Then, as $n\rightarrow\infty$, the doubly robust estimator $\hat\theta(x)$ consistently estimates $\theta_0(x)$ for any fixed $x$:
\begin{align*}
    \hat{\theta}(x)-\theta_0(x)=o_p(1).
\end{align*}
\end{theorem}

In Theorem \ref{thm: consistency}, the right-hand side of inequality~\eqref{equ:thm1_1} decomposes into four terms, each reflecting a distinct source of error. The first term captures the variation introduced by aggregating over the $B$ sub-samples used to construct the base kernels. The second term represents the sampling error associated with an appropriate complete U-statistic based on the observed data. The third term reflects the error coming from kernel estimation, where $\epsilon(s)$ denotes the kernel shrinkage rate specified in Assumption~\ref{assump: kernel shrinkage}. Finally, the fourth term accounts for the estimation errors in the regression function and the conditional Riesz representer. Note that the convergence rate of $\hat{\theta}(x)$ depends on the nuisance functions solely through the product $\mathcal{E}(\hat{g},\hat{\alpha}) = \Lv \hat{g}-g_0\Rv_x \cdot \Lv \hat{\alpha}-\alpha_0\Rv_x.$ Therefore, the proposed estimation procedure is indeed doubly robust with respect to nuisance estimation, and it achieves the oracle rate as long as $\mathcal{E}(\hat{g},\hat{\alpha})$ converges to 0 at a slow rate.

We now establish the consistency of the doubly robust estimator $\hat{\theta}(x)$ estimated by our DRRF algorithm introduced in Section \ref{subsec: drrf}. Following the discussion in Section \ref{subsubsec: nusaince estimation}, we focus on nuisance estimators of the form $\hat{g}(x,w)=h(w,\hat{\nu}_g(x))$ and $\hat{\alpha}(x,w)=h(w,\hat{\nu}_\alpha(x))$. 
To provide theoretical guarantees for the local regularized nuisance estimators in \eqref{equ: lasso}, we impose the following Assumption \ref{assump: nuisance} on the true parameters $\nu_g$ and $\nu_\alpha$.

\begin{assumption}[Nuisance Regularity] \label{assump: nuisance}
The parameters and data distribution satisfy:
\begin{enumerate}
    \item For $\forall x\in \mathcal{X}$, $\nu_g(x)$ is $k$-sparse with support $S_g(x)$, and $\nu_\alpha(x)$ is $k$-sparse with support $S_\alpha(x)$.
    \item Both $\nu_g(x)$ and $\nu_\alpha(x)$ are $O(1)$-Lipschitz in $x$. The function $h(\cdot,\cdot)$ is $O(1)$-Lipschitz in its second argument. Furthermore, the functions $\nabla_\nu L_g(x;\nu)=\E\Ls \nabla_\nu l_g(Z;\nu)\mid X=x\Rs$ and $\nabla_\nu L_\alpha(x;\nu)=\E\Ls \nabla_\nu l_\alpha(Z;\nu)\mid X=x\Rs$ are $O(1)$-Lipschitz in $x$ for any $\nu$.
    \item The data distribution satisfies the conditional restricted eigenvalue condition: for all $\nu$ and for all $z$, for some matrices $\mathcal{H}_{g}(z)$ and $\mathcal{H}_{\alpha}(z)$ that depend only on the data: $\nabla_{\nu\nu}l_g(z;\nu) \succeq \mathcal{H}_g(z) \succeq 0$ and $\nabla_{\nu\nu}l_\alpha(z;\nu) \succeq \mathcal{H}_{\alpha}(z) \succeq 0$. Let $C(S(x);3) = \{\nu\in \mathbb{R}^d: \|\nu_{S(x)^c}\|_1\leq 3\|\nu_{S(x)}\|_1\}$. For all $x$, $\nu_g\in C(S_g(x);3)$, and $\nu_{\alpha}\in C(S_g(x);3)$, $\nu_g^\top \E\Ls \mathcal{H}_g(Z)\mid X=x\Rs \nu_g \geq \gamma \|\nu_g\|_2^2$ and $\nu_{\alpha}^\top \E\Ls \mathcal{H}_{\alpha}(Z)\mid X=x\Rs \nu_{\alpha} \geq \gamma \|\nu_{\alpha}\|_2^2$.
\end{enumerate}
\end{assumption}

At a high level, Assumption 4 ensures that the nuisance parameters $\nu_g$ and $\nu_\alpha$ are sparse, and can be reliably estimated from the data under standard identifiability conditions. This assumption is important for proving that the nuisance estimators $\hat{g}(x,w)=h(w,\hat{\nu}_g(x))$ and $\hat{\alpha}(x,w)=h(w,\hat{\nu}_\alpha(x))$, defined via the penalized regression problems in \eqref{equ: lasso} in Section \ref{subsubsec: nusaince estimation}, converge to the true functions $g_0$ and $\alpha_0$, respectively.
Under this assumption, we establish the consistency result for DRRF estimator.

\begin{corollary}[Consistency of DRRF] \label{coro:consistency_DRRF}
Suppose that Assumptions \ref{assump: moment} and \ref{assump: nuisance} hold, and that DRRF satisfies the forest regularity conditions defined in Appendix \ref{assump: forest}. Let the subsample size be chosen as $s=\Theta(n^{\beta})$, where $\Lp 1+\frac{1}{a d}\Rp^{-1}<\beta<1$, $B\ge n/s$, and the parameters $$\lambda_g,\lambda_\alpha = \Theta\Lp s^{-1/(2a d)}+\sqrt{\frac{s\ln(n^{\frac{1}{2}(1-\beta)} d_{\nu})}{n}}\Rp.$$ Then, as $n\rightarrow\infty$, the DRRF algorithm consistently estimates $\theta_0(x)$ for any fixed $x$:
\begin{align*}
    \hat{\theta}(x)-\theta_0(x) &= O_p\Lp n^{\frac{1}{2}(\beta-1)}\Rp.
\end{align*}
\end{corollary}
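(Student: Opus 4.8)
The plan is to derive Corollary~\ref{coro:consistency_DRRF} by specializing Theorem~\ref{thm: consistency} to the DRRF implementation, so the work reduces to (i) verifying the hypotheses of Theorem~\ref{thm: consistency} for the forest kernel, and (ii) bounding the nuisance product $\mathcal{E}(\hat g,\hat\alpha)$ via Proposition~\ref{pro: nu guarantee}, and then balancing the resulting rate terms against the choice $s=\Theta(n^\beta)$. First I would note that Assumption~\ref{assump: moment} is assumed directly, and that Assumption~\ref{assump: honesty} holds automatically for DRRF because the weights $\tilde K(x,X_i,Z_{S_b},\xi_b)$ depend only on the splitting subsample $S_b^1$ and the feature values, not on the outcomes used to form $\psi_i$ on $S_b^2$ (this is the sample-splitting/honesty construction in Section~\ref{subsubsec: forest kernel}). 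For Assumption~\ref{assump: kernel shrinkage}, I would invoke the cited fact that, under the forest regularity conditions of Assumption~\ref{assump: forest}, the DRRF kernel satisfies $\epsilon(s)=O(s^{-1/(2\alpha d)})$; this follows from the argument of \cite{wager2018estimation} bounding the diameter of the leaf containing $x$ for honest, regular, random-split trees.

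Next I would translate Proposition~\ref{pro: nu guarantee} into a bound on $\mathcal{E}(\hat g,\hat\alpha)=\|\hat g-g_0\|_x\cdot\|\hat\alpha-\alpha_0\|_x$. Since $\hat g(x,w)=h(w,\hat\nu_g(x))$ with $h$ Lipschitz in its second argument (Assumption~\ref{assump: nuisance}(2)) and $W$ bounded in the relevant sense, we get $\|\hat g-g_0\|_x = O(\|\hat\nu_g(x)-\nu_g(x)\|_2)$ and likewise for $\alpha$. With $s=\Theta(n^\beta)$ and the number of trees controlled so that each subsample is of size $s=n^\beta$, the log term becomes $\ln(d_\nu n^{(1-\beta)/2})$, and the stated choice $\lambda_g,\lambda_\alpha=\Theta\!\big(s^{-1/(2\alpha d)}+\sqrt{s\ln(d_\nu n^{(1-\beta)/2})/n}\big)$ makes the denominator in Proposition~\ref{pro: nu guarantee} bounded below by a constant (using $k$ fixed and $k\sqrt{s\ln(d_\nu/\delta)/n}\to 0$). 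Hence each nuisance error is $O_p(\lambda)$ and the product is $O_p(\lambda^2)=O_p\!\big(s^{-1/(\alpha d)}+s\ln(d_\nu n^{(1-\beta)/2})/n\big)$, where I would absorb the slowly varying logarithmic factor into the $O_p$ by choosing the confidence level $\delta$ in Proposition~\ref{pro: nu guarantee} to shrink polynomially in $n$ (a union bound over the finitely many quantities involved, or just a fixed $x$, keeps this harmless).

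Then I would substitute everything into the rate of Theorem~\ref{thm: consistency}:
\[
\hat\theta(x)-\theta_0(x)=O_p\Big(\sqrt{1/B}+\sqrt{s/n}+\epsilon(s)+\mathcal{E}(\hat g,\hat\alpha)\Big)
=O_p\Big(\sqrt{s/n}+s^{-1/(2\alpha d)}+s/n\cdot\text{polylog}\Big),
\]
using $B\ge n/s$ so $\sqrt{1/B}\le\sqrt{s/n}$, and noting $s/n=o(\sqrt{s/n})$. With $s=\Theta(n^\beta)$ the term $\sqrt{s/n}=\Theta(n^{(\beta-1)/2})$ and the kernel term is $\Theta(n^{-\beta/(2\alpha d)})$; the condition $(1+\tfrac{1}{\alpha d})^{-1}<\beta$ is exactly what guarantees $\beta/(2\alpha d)>(1-\beta)/2$, i.e.\ the kernel-shrinkage bias is dominated by the variance term $n^{(\beta-1)/2}$. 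The condition $\beta<1$ keeps $s=o(n)$ so Theorem~\ref{thm: consistency} applies. Combining, the dominant term is $n^{(\beta-1)/2}$, giving $\hat\theta(x)-\theta_0(x)=O_p(n^{(\beta-1)/2})$.

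The main obstacle I anticipate is the bookkeeping around the logarithmic factor and the high-probability-to-$O_p$ conversion: Proposition~\ref{pro: nu guarantee} is a statement holding with probability $1-2\delta$ for a fixed $\delta$, whereas the corollary asserts an unconditional $O_p$ rate, so I need to let $\delta=\delta_n\to 0$ slowly (e.g.\ $\delta_n=n^{-1}$), check that the denominator $\gamma-32k\sqrt{s\ln(d_\nu/\delta_n)/n}$ stays bounded away from zero under the stated scaling of $s$ and $\lambda$ (this uses $k=O(1)$ and $\beta<1$ so that $s\ln(n d_\nu)/n\to 0$), and confirm that the extra $\ln$ factors do not upset the rate comparison $\beta/(2\alpha d)>(1-\beta)/2$ since polylogarithmic factors are negligible against the polynomial gap. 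A secondary but routine point is verifying that $h$ being Lipschitz in $\nu$ together with the boundedness already assumed indeed yields $\|\hat g-g_0\|_x=O(\|\hat\nu_g-\nu_g\|_2)$ uniformly; this is immediate in the locally linear case $h(w,\nu)=w^\top\nu$ with bounded $W$.
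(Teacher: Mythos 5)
Your proposal is correct and follows essentially the same route as the paper: reduce to Theorem \ref{thm: consistency} (with honesty and the Wager--Athey shrinkage rate $\epsilon(s)=O(s^{-1/(2\alpha d)})$ for the forest kernel), bound $\mathcal{E}(\hat g,\hat\alpha)$ by $\|\hat\nu_g-\nu_g\|_2\cdot\|\hat\nu_\alpha-\nu_\alpha\|_2$ via the Lipschitzness of $h$, apply Proposition \ref{pro: nu guarantee} with a polynomially shrinking $\delta_n$ (the paper takes $\delta_n=n^{\frac{1}{2}(\beta-1)}$, matching the stated $\lambda$), and balance the terms using $\beta>(1+\tfrac{1}{\alpha d})^{-1}$ so that $s^{-1/(2\alpha d)}\le\sqrt{s/n}=n^{\frac{1}{2}(\beta-1)}$. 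The only differences are expository: you make explicit the honesty verification and the high-probability-to-$O_p$ conversion that the paper leaves implicit.
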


Corollary \ref{coro:consistency_DRRF} shows that the DRRF algorithm achieves the same consistency rate as estimating the conditional functional with oracle nuisance functions. The convergence rate $\beta$ depends on the forest regularity parameter $a = \log(\rho^{-1}) / (\pi \log((1 - \rho)^{-1}))$ (see Assumption \ref{assump: forest}) and the dimension $d$ of the target feature space $\mathcal{X}$. Specifically, the rate improves with more balanced tree splits (i.e., larger $\rho$ and $\pi$) and lower dimension $d$, given suitable choices of the subsample size $s$ and the number of base kernels $B$.

\subsection{Asymptotic Normality} \label{subsec:asymp norm}
In this section, we establish the asymptotic normality results for the general doubly robust estimator and DRRF. We again focus on nuisance estimators of the form $\hat{g}(x,w)=h(w,\hat{\nu}_g(x))$ and $\hat{\alpha}(x,w)=h(w,\hat{\nu}_\alpha(x))$. Additionally, we assume that the dimensionality $d_{\nu}$ of $\nu_g$ and $\nu_\alpha$ grows at most polynomially in $n$.
We require an additional assumption on the debiased moment functional:
\begin{assumption} \label{assump: msc} 
{(Local Mean-Squared Continuity).}  
Define $V(x,g,\alpha) \coloneqq \E[(\psi(Z;\alpha,g) -\psi(Z;\alpha_0,g_0) )^2\mid X=x]$.
For any $x\in \mathcal{X}$ and $g,\alpha \in \mathcal{G}_x$, the moment $V(x,g,\alpha)$ satisfies
\begin{align*}
    |V(x;g,\alpha)|\le~& L \cdot \E\Ls(g(X,W)-g_0(X,W))^2+(\alpha(X,W)-\alpha_0(X,W))^2\mid X=x\Rs\,.
\end{align*}
\end{assumption}

Assumption \ref{assump: msc} imposes a local Lipschitz-type condition on the moment functional $\psi(Z;\alpha,g)$ with respect to the nuisance functions $g$ and $\alpha$. Specifically, it ensures that the conditional squared error in the moment functional arising from estimation of $g$ and $\alpha$ can be bounded by the corresponding estimation errors of these functions. This assumption is imposed for the asymptotic results in Theorems~\ref{thm: asymptotic normality} and \ref{thm: asymptotic normality for DRRF}, as it guarantees that the sampling error associated with estimated nuisances can be asymptotically dominated by the term involving the oracle nuisances $\alpha_0$ and $g_0$, whose asymptotic variance involves the term
$$\eta(s)\coloneqq Var\Lp \E \Ls \sum_{i=1}^s\tilde{K}(x,X_i,Z_S,\xi_S)\psi(Z_i;\alpha_0,g_0)\bigg|Z_1 \Rs\Rp\,,$$ 
where the set $S$ is fixed to include the first $s$ samples. 

We now state the asymptotic normality result for the general doubly robust estimator proposed in Section \ref{subsec: dr estimation}.

\begin{theorem}[Asymptotic Normality of the General Doubly Robust Estimator] 
\label{thm: asymptotic normality}

Suppose Assumptions \ref{assump: honesty}, \ref{assump: kernel shrinkage}, \ref{assump: moment}, and \ref{assump: msc} hold. Let the subsample size $s$ grow such that $s\rightarrow \infty$, $s=o(n)$, $B\ge (n/s)^2$, $\epsilon(s)\rightarrow 0$, $n\eta(s) \rightarrow \infty$, and $n\epsilon^2(s)/(s^2\eta(s))\rightarrow 0$ as $n\rightarrow \infty$. 
Moreover, for any fixed $x$, with probability at least $1-\delta$,
$$\Lp\|\hat{\nu}_g(x) - \nu_g(x)\|_2^2 + \|\hat{\nu}_\alpha(x) - \nu_\alpha(x)\|_2^2\Rp^{1/2}\leq r_n(\delta).$$
Suppose there exists a sequence $\{\delta_n\}_n$ such that 
$\sqrt{n/(s^2\eta(s))}\cdot \delta_n \rightarrow 0$ and $\sqrt{n/(s^2\eta(s))}\cdot r_n^2(\delta_n/n)\rightarrow 0.$
Then, as $n\rightarrow\infty$, the asymptotic distribution of the doubly robust estimator $\hat{\theta}(x)$ is given by 
$$\sqrt{\frac{n}{s^2\eta(s)}}\Lp \hat{\theta}(x)-\theta_0(x)\Rp \xrightarrow{d}N(0,1)\,.$$ 
\end{theorem}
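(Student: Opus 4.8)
The plan is to decompose $\hat\theta(x) - \theta_0(x)$ into three pieces: a ``leading'' term built from the oracle debiased moments $\psi(Z_i;\alpha_0,g_0)$, a ``nuisance error'' term coming from replacing $(\alpha_0,g_0)$ by $(\hat\alpha,\hat g)$, and a ``bias'' term from the kernel shrinkage. Concretely, write
\begin{align*}
\hat\theta(x)-\theta_0(x) &= \underbrace{\sum_{i=1}^n K(x,X_i,Z_{1:n},\xi)\,\psi(Z_i;\alpha_0,g_0) - \theta_0(x)}_{=:A_n} \\
&\quad + \underbrace{\sum_{i=1}^n K(x,X_i,Z_{1:n},\xi)\big(\psi(Z_i;\hat\alpha,\hat g)-\psi(Z_i;\alpha_0,g_0)\big)}_{=:R_n}.
\end{align*}
First I would show $\sqrt{n/(s^2\eta(s))}\,A_n \xrightarrow{d} N(0,1)$. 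This is the ``known weights meet a U-statistic'' argument: $A_n$ is (asymptotically) a generalized U-statistic of order $s$ with kernel the sub-sampled forest weight times the oracle moment, and the scaling $s^2\eta(s)/n$ is exactly its variance (up to the Hajek-projection constant), with $\eta(s)$ the first-order projection variance. I would invoke the Hoeffding/Hajek decomposition, use Assumption~\ref{assump: moment}(Boundedness) to control higher-order terms, use the honesty Assumption~\ref{assump: honesty} and kernel-normalization to compute the projection, use Assumption~\ref{assump: moment}(Smoothness) together with Assumption~\ref{assump: kernel shrinkage} to show the bias $|\E[A_n]| = O(\epsilon(s))$ is negligible relative to $\sqrt{s^2\eta(s)/n}$ precisely under the stated condition $n\epsilon^2(s)/(s^2\eta(s))\to 0$, and apply a Lindeberg CLT to the projection; the conditions $s\to\infty$, $s=o(n)$, $B\ge(n/s)^2$, $n\eta(s)\to\infty$ are what make the non-leading U-statistic components and the Monte-Carlo error from finite $B$ vanish after rescaling. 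This part largely parallels the asymptotic-normality arguments in \cite{wager2018estimation} and \cite{orf}, adapted to the debiased moment $\psi$.

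Next I would show $\sqrt{n/(s^2\eta(s))}\,R_n = o_p(1)$. Condition on the second-half data $Z_{n+1:2n}$ used to form $\hat g,\hat\alpha$. Split $R_n$ into its conditional mean given $\{X_i\}$ and $Z_{n+1:2n}$ plus a mean-zero fluctuation. For the conditional mean, the double robustness identity of Proposition~\ref{prop: dr} gives, pointwise at each $X_i$, that $\E[\psi(Z_i;\hat\alpha,\hat g)-\psi(Z_i;\alpha_0,g_0)\mid X_i,Z_{n+1:2n}]$ equals $-\,\E[(\hat\alpha-\alpha_0)(\hat g-g_0)\mid X=X_i,Z_{n+1:2n}]$ up to the conditional-expectation-versus-evaluated subtlety, which is bounded in absolute value by a product of local $L^2$ nuisance errors; combined with kernel localization near $x$ (Assumption~\ref{assump: kernel shrinkage}) and Lipschitzness of $\nu_g,\nu_\alpha,h$ (Assumption~\ref{assump: nuisance}, or the general $L$-Lipschitz hypotheses), this is of order $r_n^2(\cdot)$ on the high-probability event, which is exactly why the hypothesis $\sqrt{n/(s^2\eta(s))}\,r_n^2(\delta_n/n)\to 0$ appears. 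For the fluctuation term, use Assumption~\ref{assump: msc} (local mean-squared continuity): the conditional variance of $\psi(Z_i;\hat\alpha,\hat g)-\psi(Z_i;\alpha_0,g_0)$ is $O(r_n^2)$ near $x$, so the weighted sum has conditional variance $O(r_n^2 \cdot \sum_i K^2) = O(r_n^2 \cdot s/n)$-ish, again negligible after rescaling, using $B\ge(n/s)^2$ to control the variance contribution of the averaged weights. On the low-probability complement of the event where $r_n$ fails, boundedness (Assumption~\ref{assump: moment}) plus $\delta_n\to 0$ fast enough (the first displayed smallness condition) makes the contribution $o_p$ after rescaling.

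The main obstacle is the ``evaluated-at-$X_i$ versus conditional-on-$X_i$'' mismatch: Proposition~\ref{prop: dr} controls the \emph{conditional-expectation} bias $\E[\psi\mid X=x]-\theta_0(x)$, but our estimator plugs in $\hat g(X_i,W_i),\hat\alpha(X_i,W_i)$ at the \emph{sample} covariates, and — unlike ORF — the nuisances are fit on a separate fold at the training points $X_1,\dots,X_n$, not at $x$. Making the argument rigorous therefore requires a \emph{uniform} (over $x$ in a shrinking neighborhood, or over all $X_i$ landing in the relevant leaves) bound on the nuisance error, i.e. upgrading the pointwise high-probability bound $r_n(\delta)$ to a bound that holds simultaneously at all the relevant evaluation points; the excerpt flags exactly this (``it requires establishing a uniform bound on the nuisance error''). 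I would handle it by a union bound over the $n$ training points combined with the Lipschitz continuity of $\nu_g,\nu_\alpha$ in $x$ and the kernel-shrinkage localization, paying only a $\log n$ factor — which is absorbed into the $\delta_n/n$ slot in the hypotheses — and then feeding the resulting uniform rate through the decomposition above.
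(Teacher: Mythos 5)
Your proposal is correct and follows essentially the same route as the paper: the paper likewise isolates the oracle term $\Delta(g_0,\alpha_0)$ and proves its normality at the $\sqrt{n/(s^2\eta(s))}$ scale via a CLT for complete $U$-statistics, treats the kernel bias $O(\epsilon(s))$, the double-robustness product $O\Lp r_n^2(\delta_n/n)+\delta_n\Rp$, and the finite-$B$ Monte Carlo error as negligible remainders under the stated rate conditions, and controls the nuisance-perturbation fluctuation through Assumption \ref{assump: msc} together with honesty, using exactly your union-bound-over-training-points device encoded in the $r_n(\delta_n/n)$ slot. The only cosmetic difference is that the paper bounds that fluctuation by writing $\Delta(\hat g,\hat\alpha)-\Delta(g_0,\alpha_0)$ as a $U$-statistic and applying Bernstein's inequality, rather than your conditional mean-plus-variance argument, which is the same idea in substance.
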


In Theorem \ref{thm: asymptotic normality}, the conditions $n\eta(s) \rightarrow \infty$ and $n\epsilon^2(s)/(s^2\eta(s))\rightarrow 0$ require $\eta(s)$ to decay slower than $1/n$ and $\epsilon^2(s)/(s^2\eta(s))$ to decay faster than $1/n$. This can be achieved under the ``double-sample'' scheme in the DRRF algorithm under an appropriate choice of the subsample size $s$.
Moreover, the nuisance rate condition $\sqrt{n/(s^2\eta(s))}\cdot r_n^2(\delta_n/n)\rightarrow 0$ in Theorem \ref{thm: asymptotic normality} is usually mild and only requires the nuisance parameters $\nu_g$ and $\nu_\alpha$ to converge at a slow rate. Following Theorem \ref{thm: asymptotic normality}, we immediately obtain the following asymptotic normality result for DRRF.

\begin{theorem}[Asymptotic Normality of DRRF] 
\label{thm: asymptotic normality for DRRF}
Suppose Assumptions \ref{assump: moment}, \ref{assump: msc}, \ref{assump: nuisance}, and tree regularity conditions in \ref{assump: forest} hold. Set the subsample size $s=\Theta(n^{\beta})$, where $\Lp 1+\frac{1}{a d}\Rp^{-1}<\beta<1$, $B\ge (n/s)^2$, and the lasso parameters $\lambda_g,\lambda_\alpha = \Theta\Lp s^{-1/(2a d)}+\sqrt{\frac{s\ln(n^c d_{\nu})}{n}}\Rp$ for some $c>1/2$. Furthermore, suppose there exists a strictly positive $\underline\sigma$ such that $min_x Var(\psi(Z;\alpha_0,g_0)\mid X=x)\ge\underline{\sigma}^2>0$. Then, as $n\rightarrow\infty$, the asymptotic distribution of $\hat{\theta}(x)$ estimated by the DRRF algorithm is given by $$\sqrt{\frac{n}{s^2\eta(s)}}\Lp \hat{\theta}(x)-\theta_0(x)\Rp \xrightarrow{d}N(0,1)\,.$$ 
\end{theorem}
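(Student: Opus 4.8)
The plan is to derive Theorem~\ref{thm: asymptotic normality for DRRF} as a corollary of Theorem~\ref{thm: asymptotic normality} by verifying, one by one, the hypotheses of the latter under the DRRF-specific assumptions. The structure of the argument has three phases: (i) check the kernel-shrinkage and subsampling rate conditions, (ii) check the nuisance rate condition, and (iii) assemble the conclusion. First I would recall that, under Assumption~\ref{assump: forest} (the forest regularity conditions), \cite{wager2018estimation}-type arguments give the kernel shrinkage rate $\epsilon(s)=O(s^{-1/(2\alpha d)})$, as already noted after Assumption~\ref{assump: kernel shrinkage}. With $s=\Theta(n^\beta)$ and $\beta>(1+\tfrac{1}{\alpha d})^{-1}$, one gets $\epsilon(s)\to 0$ and, crucially, $\sqrt{s}\,\epsilon(s)=\Theta(n^{\beta/2-\beta/(2\alpha d)})\to\infty$ is \emph{not} what we want — rather we want $n\epsilon^2(s)/(s^2\eta(s))\to 0$, which after lower-bounding $\eta(s)$ by a constant times $\gamma(s)$ (Proposition~\ref{pro: rate conditions}) and using the honest double-sample structure to get $\gamma(s)=\Theta(1/s)$ or the appropriate polynomial rate, reduces to a clean inequality in $\beta$; the stated range of $\beta$ is exactly what makes both $n\eta(s)\to\infty$ and $n\epsilon^2(s)/(s^2\eta(s))\to 0$ hold. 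I would also check $B\ge (n/s)^2$ is assumed directly, and that $s\to\infty$, $s=o(n)$ follow from $0<\beta<1$.

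Next I would handle the nuisance rate condition, which is where Proposition~\ref{pro: nu guarantee} (from \cite{orf}) enters. Under Assumption~\ref{assump: nuisance}, with the prescribed $\lambda_g,\lambda_\alpha=\Theta\!\left(s^{-1/(2\alpha d)}+\sqrt{s\ln(n^c d_\nu)/n}\right)$ and $d_\nu$ polynomial in $n$, Proposition~\ref{pro: nu guarantee} gives $\|\hat\nu_g(x)-\nu_g(x)\|_2+\|\hat\nu_\alpha(x)-\nu_\alpha(x)\|_2\le r_n(\delta)$ with probability $1-2\delta$ where $r_n(\delta)=O\!\left(k\lambda\right)=O\!\left(s^{-1/(2\alpha d)}+\sqrt{s\ln(n^c d_\nu/\delta)/n}\right)$ (the denominator $\gamma-32k\sqrt{s\ln(d_\nu/\delta)/n}$ is bounded away from zero for $n$ large since $s=o(n)$ up to logs, so it only affects constants). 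Then I take $\delta_n$ to be, say, $\delta_n=n^{-a}$ for a suitable constant $a$; the condition $\sqrt{n/(s^2\eta(s))}\,\delta_n\to 0$ holds trivially because the prefactor is polynomial in $n$, and the condition $\sqrt{n/(s^2\eta(s))}\,r_n^2(\delta_n/n)\to 0$ becomes a polynomial inequality in $\beta$ that is implied by $\beta>(1+\tfrac{1}{\alpha d})^{-1}$ — this is precisely why that lower bound on $\beta$ is imposed, and the extra logarithmic factor $\ln(n^c d_\nu/\delta_n)=O(\log n)$ is absorbed since the rate condition has polynomial slack. The positivity assumption $\min_x \Var(\psi(Z;\alpha_0,g_0)\mid X=x)\ge\underline\sigma^2>0$ is what guarantees $\eta(s)$ and $\gamma(s)$ do not degenerate, ensuring the lower bound $\eta(s)\ge c\,\gamma(s)$ is non-vacuous and the asymptotic variance in the CLT is genuinely of order $s^2\eta(s)/n$.

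Finally, having verified Assumptions~\ref{assump: honesty} (immediate for DRRF, as noted in the text after Assumption~\ref{assump: kernel shrinkage}), \ref{assump: kernel shrinkage} (via the forest shrinkage rate), \ref{assump: moment} and \ref{assump: msc} (assumed directly), together with all the rate conditions on $s$, $B$, $\epsilon(s)$, $\eta(s)$, the $\delta_n$ sequence, and $r_n$, I would simply invoke Theorem~\ref{thm: asymptotic normality} to conclude $\sqrt{n/(s^2\eta(s))}(\hat\theta(x)-\theta_0(x))\xrightarrow{d}N(0,1)$.

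I expect the main obstacle to be the bookkeeping around $\eta(s)$: Theorem~\ref{thm: asymptotic normality} is stated in terms of the abstract quantity $\eta(s)$, but to verify $n\eta(s)\to\infty$, $n\epsilon^2(s)/(s^2\eta(s))\to 0$, and the nuisance condition $\sqrt{n/(s^2\eta(s))}\,r_n^2(\delta_n/n)\to 0$ one needs a concrete handle on how $\eta(s)$ scales with $s$ for the \emph{specific} DRRF kernel. This requires the lower bound $\eta(s)\ge c\,\gamma(s)$ from Proposition~\ref{pro: rate conditions} together with a matching (up to constants/logs) characterization of $\gamma(s)=\E[\,\E[\tilde K(x,X_1,Z_S,\xi_S)\mid X_1]^2\,]$ for honest double-sample forests satisfying Assumption~\ref{assump: forest} — essentially showing $\gamma(s)=\Theta(\text{polynomial in }s)$ with the right exponent, and then checking that the allowed window $(1+\tfrac{1}{\alpha d})^{-1}<\beta<1$ is nonempty and makes every rate inequality go through simultaneously. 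The delicate point is that these several conditions pull $\beta$ in opposite directions (larger $\beta$ shrinks bias/nuisance error but inflates variance), and one must confirm the interval is genuinely feasible; the paper's choice of endpoints reflects exactly this balance, so the verification is a careful but routine exponent-chasing exercise once the scaling of $\eta(s)$ is pinned down.
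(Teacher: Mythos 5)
Your proposal is correct and takes essentially the same route as the paper: reduce to Theorem~\ref{thm: asymptotic normality}, verify the rate conditions using $\epsilon(s)=O(s^{-1/(2\alpha d)})$ together with $\eta(s)\gtrsim\gamma(s)\gtrsim 1/(s\log(s)^d)$, and verify the nuisance conditions via Proposition~\ref{pro: nu guarantee} with $\delta_n=n^{-c}$, $c>1/2$, exactly as the paper does. The one caveat is that the scaling of $\eta(s)$, which you invoke through Proposition~\ref{pro: rate conditions} and yourself flag as the main obstacle, is in the paper proved as part of this very theorem (via the strong-honesty variance bound of Lemma~\ref{lemma:strong_honesty} and the Hoeffding/ANOVA argument in Lemma~\ref{lemma: eta and gamma} showing the $S^1$ contribution is only $O(\epsilon(s)^2/s)$), so that half of the argument appears in your write-up by citation rather than by proof.
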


In light of Theorem \ref{thm: asymptotic normality for DRRF}, we can use the Bootstrap of Little Bags \citep{sexton2009standard, grf} approach to construct asymptotically valid confidence intervals for $\theta_0(x)$. In Section \ref{sec:simulations}, we show that this construction provides good empirical coverage.

\section{Simulations}
\label{sec:simulations}
In this section, we conduct comprehensive simulations to evaluate the performance of our DRRF estimator in heterogeneous treatment effect (HTE) estimation, comparing it against other tree-based non-parametric methods.

We generate data according to the following process:
\[
Y_i = \theta_0(X_i)\cdot D_i+ \tilde X_i^\top \nu_0 + \epsilon_i,
\]
where the target covariates $X_i \overset{i.i.d.}{\sim} U[0,1]$, the noise terms $\epsilon_i \overset{i.i.d.}{\sim} U[-1,1]$, and the coefficient vector $\nu_0$ is $k$-sparse, with non-zero coefficients drawn i.i.d. from $U[-1,1]$. Following \cite{orf}, we define the true heterogeneous treatment effect $\theta_0(x)$ to be a piecewise linear function: \[
\theta_0(x) = 
\begin{cases} 
x + 2 & \text{if } x \leq 0.3, \\ 
6x + 0.5 & \text{if } 0.3 < x \leq 0.6, \\ 
-3x + 5.9 & \text{if } x > 0.6.
\end{cases}
\] 
Our goal is to estimate $\theta_0(x)$ at various query points $x$.
We consider different data-generating processes for the treatment variable $D_i$ and the control covariates $\tilde{X}_i$. In the main text, we focus on the case where the treatment variable $D_i$ is binary, while Appendix \ref{appendix:simu} presents results for continuous treatment settings. The total number of observations is fixed at $2n = 5000$, with $p = 100$ additional covariates. The support size $k$ varies across $\{5, 10, 15\}$. A detailed description of the data-generating processes is provided in Table \ref{tab:res_binary}.

We compare the performance of two variants of DRRF, along with two variants each of ORF \citep{orf} and GRF \citep{grf}:
\begin{itemize}
    \item {\bf DRRF variants:} (1) DRRF: Our automatic DRRF algorithm, described in Section \ref{subsec: drrf}, with parameters set based on theoretical guidance: subsample power $\beta=0.88$, number of trees $B=100$, split balance parameter $\rho=0.3$, minimum leaf size $r=5$, and $L_1$ regularization parameters $\lambda_g,\lambda_\alpha=\sqrt{s\ln(n(p+1))/n}/5$. (2) DRRF-CV: A variant of DRRF where the $L_1$ regularization parameter $\lambda_g$ for the Forest Lasso estimator $\hat{\nu}_g$ in \eqref{equ: lasso} is chosen by 5-fold cross-validation.
    \item {\bf ORF variants:} (1) ORF: The ORF algorithm from \cite{orf} with fixed Lasso penalty parameters for tree splitting and Forest Lasso, as specified in \cite{orf}.  (2) ORF-CV: A cross-validated variant of ORF where $L_1$ regularization parameters in Forest Lasso are chosen using 5-fold cross-validated Lasso.
    \item {\bf GRF variants:} (1) ForestDR-Linear: A combination of GRF with a doubly robust moment, using cross-validated linear models regularized by $L_1$ or $L_2$ norms to estimate the conditional outcomes $\E[Y\mid D=d, X, \tilde X]$ and propensity scores $\PP(D=d\mid X,\tilde X)$. This is implemented by the ForestDRLearner in the EconML package \citep{econml}, and it is applicable only to the binary treatment setting. (2) ForestDR-RF: Similar to ForestDR-Linear, this variant combines a doubly robust moment with GRF. However, it employs cross-validated Random Forests instead of the linear models to estimate the conditional outcomes and propensity scores.
\end{itemize}

There are two key differences among the three variants. First, both DRRF and ORF variants accommodate high-dimensional nuisance functions and perform local nuisance estimation. 
In contrast, GRF variants apply tree learner-generated similarity weights to moment functions constructed with global nuisance estimators. Second, while both DRRF and GRF variants estimate the target based on a doubly robust moment, ORF variants use an orthogonal moment without the doubly robust property. On the other hand, constructing the doubly robust moment in GRF variants requires propensity score estimation, which restricts their applicability to continuous treatment settings. In contrast, our DRRF variants automatically construct the doubly robust moment without requiring propensity score estimation, making them applicable to both discrete and continuous treatment settings.

\begin{table}[htbp]
\centering
\begin{tabular}{c|ccc|ccc}
\toprule
\multicolumn{1}{c}{} & \multicolumn{3}{c|}{\bf RMSE} &\multicolumn{3}{c}{\bf Time (seconds)}\\
\midrule
\multicolumn{1}{c}{} &\multicolumn{6}{c}{\textbf{Setup 1}}\\
\cmidrule(r){2-7}
\multicolumn{1}{c}{} &$k=5$ &$k=10$ &$k=15$  &$k=5$ &$k=10$ &$k=15$ \\
\midrule
 \bf DRRF &0.126 &0.134 &0.149 &0.17 &0.16 &0.16\\
 \bf DRRF-CV &\bf 0.108 &\bf 0.110 &\bf 0.108 &0.17 &0.17 &0.17\\
\midrule
 ORF &0.196 &0.201 &0.201&0.54 &0.54 &0.54\\
 ORF-CV &\bf 0.108 &0.119 &0.130&21.0 &20.5 &19.6\\
 \midrule
 ForestDR-Linear &0.279 &0.288 &0.281&0.10 &0.09 &0.09\\
 ForestDR-RF &0.212 &0.316 &0.404&0.08 &0.08 &0.09\\
\midrule
\multicolumn{1}{c}{} &\multicolumn{6}{c}{\textbf{Setup 2}}\\
\midrule
 \bf DRRF &0.136 &0.151 &0.161&0.16 &0.16 &0.17\\
 \bf DRRF-CV &\bf0.113 &\bf0.119 &\bf0.115&0.17 &0.17 &0.17\\
\midrule
 ORF &0.157 &0.157 &0.156 &0.51 &0.51 &0.57\\
 ORF-CV &0.160 &0.174 &0.189 &20.8 &20.0 &19.6\\
 \midrule
 ForestDR-Linear &0.345 &0.352 &0.345 &0.09 &0.09 &0.09\\
 ForestDR-RF &0.238 &0.351 &0.453 &0.09 &0.09 &0.09\\
\midrule
\multicolumn{1}{c}{} &\multicolumn{6}{c}{\textbf{Setup 3}}\\
\midrule
 \bf DRRF &0.126 &0.148 &0.156 &0.17 &0.16 &0.17\\
 \bf DRRF-CV &\bf0.110 &\bf0.116 &\bf0.106&0.16 &0.17 &0.17\\ 
\midrule
 ORF &0.129 &0.174 &0.196 &2.6 &2.8 &3.0\\
 ORF-CV &0.184 &0.242 &0.293 &120.9 &118.5 &138.7\\ 
 \midrule
 ForestDR-Linear &0.348 &0.329 &0.338 &0.09 &0.09 &0.09\\
 ForestDR-RF &0.246 &0.379 &0.512&0.09 &0.08 &0.09\\
\bottomrule
\end{tabular}
\caption{RMSE and runtime for binary treatment setups}
\floatfoot{This table reports the root mean squared error (RMSE) and the evaluation runtime for different estimators in binary treatment setups. We generate data from the process $Y_i=\theta_0(X_i)\cdot D_i+\tilde X_i^\top \nu_0+\epsilon_i$, where $X_i\overset{i.i.d.}{\sim} U[0,1]$ and $\nu_0$ is a $k$-sparse vector with non-zero coefficients drawn i.i.d. from $U[-1,1]$. The binary treatment variables $D_i$'s are generated according to the propensity scores $\PP(D_i=1\mid X_i,\tilde X_i)$. The propensity scores and the additional covariates $\tilde X_i$ are generated as follows: (1) Setup 1: $\PP(D_i=1\mid X_i,\tilde X_i) = 0.3+0.4\cdot I(X_i>0.5)$, and $\tilde{X}_i \sim \mathcal{N}(0, I_p)$. (2) Setup 2: $\PP(D_i=1\mid X_i,\tilde X_i) = 0.1+0.3\cdot I(\tilde{X}_i^\top \gamma_0 > 0)+0.4\cdot I(X_i>0.5)$, and $\tilde{X}_i \sim \mathcal{N}(0, I_p)$. (3) Setup 3:  $\PP(D_i=1\mid X_i,\tilde X_i) = 0.2+0.6\cdot I(\tilde{X}_i^\top \gamma_0 > 3)$, and $\tilde{X}_i \sim \mathcal{N}(3, I_p)$. In Setup 2, $\gamma_0$ is a $k$-sparse vector with the same support as $\nu_0$, and its non-zero coefficients are drawn i.i.d. from $U[0,1]$. In Setup 3, $\gamma_0$ is a $k$-sparse vector with the same support as $\nu_0$ and non-zero coefficients equal to $1/k$. We set the number of covariates $\tilde X_i$ to be $p=100$ and the sample size $2n=5000$. We perform 100 simulations for each setup. The evaluation runtime is the average runtime over 10 runs for estimating $\hat{\theta}(x_i)$ at 100 test points $x_i$ over a uniform grid in $[0,1]$. The results are measured on a 2019 MacBook Pro with a 2.6 GHz 6-Core Intel Core i7.}
\label{tab:res_binary}
\end{table}

Table \ref{tab:res_binary} compares the estimation accuracy and evaluation runtime of different estimators. Specifically, it reports the root mean squared error (RMSE) of the estimated $\hat{\theta}(x_i)$ and the average evaluation runtime across 100 test points $x_i$ over a uniform grid in $[0,1]$. We report results based on 100 simulations for each setup. Figures \ref{fig:binary_1}-\ref{fig:binary_3} illustrate the performance of each estimator by plotting the average estimated effects, along with the 5th and 95th percentiles of the estimated effects across 100 replications. Additionally, Figure \ref{fig:CI} provides an example of the bootstrap confidence intervals for the DRRF algorithm, demonstrating reliable finite-sample coverage.

In the following, we highlight three key advantages of DRRF variants. First, similar to ORF variants, DRRF variants perform local fitting of nuisance functions, which generally leads to superior performance compared to GRF variants, which rely on global fitting. The global nuisance estimation in GRF variants optimizes overall mean squared error, which does not align with the objective of final-stage local estimation. As reported in Table \ref{tab:res_binary}, both DRRF variants exhibit significantly lower RMSE than GRF variants in the binary treatment setting. In fact, Figures \ref{fig:binary_1}-\ref{fig:binary_3} show that GRF variants suffer from substantially higher variance. This high variance may also be linked to their dependence on inverse propensity score estimates, which can become unstable when the propensity scores are near boundary values.

Second, DRRF variants do not require a well-fitted model for the treatment variable, making them more robust than ORF variants. As shown in Table \ref{tab:res_binary}, DRRF variants outperform ORF variants in RMSE in most cases. Note that in the three setups considered, the relationship between treatment $D_i$ and covariates $\tilde{X}_i$ is non-linear. In such cases, the Lasso-based residualization in ORF variants struggles to adequately residualize $D$ from $\tilde{X}$, degrading the performance of ORF variants. This issue is more pronounced in the continuous treatment setting, as shown in Table \ref{tab:res_cts} in Appendix \ref{appendix:simu}.  In contrast, our DRRF algorithms automatically construct a debiasing term without relying on a well-specified model between $D$ and $\tilde{X}$.

Finally, DRRF variants are significantly more computationally efficient than ORF variants, especially ORF-CV, when estimating $\theta_0(x)$ at multiple query points. 
As shown in Table \ref{tab:res_binary}, the evaluation runtimes of both DRRF and DRRF-CV are substantially shorter than those of ORF and ORF-CV. In particular, DRRF-CV’s average evaluation runtime is over 100 times faster than ORF-CV. This demonstrates the efficiency of the DRRF algorithm and makes it a more practical choice for real-world applications requiring fast estimations at a large number of query points.

Collectively, these results demonstrate the advantages of the proposed DRRF algorithms in terms of estimation accuracy, robustness, and computational efficiency, underscoring their potential for practical applications.

\begin{figure}[htbp]
    \centering
    \begin{subfigure}{\textwidth}
        \centering
        \includegraphics[width=0.78\linewidth,height=5.2cm]{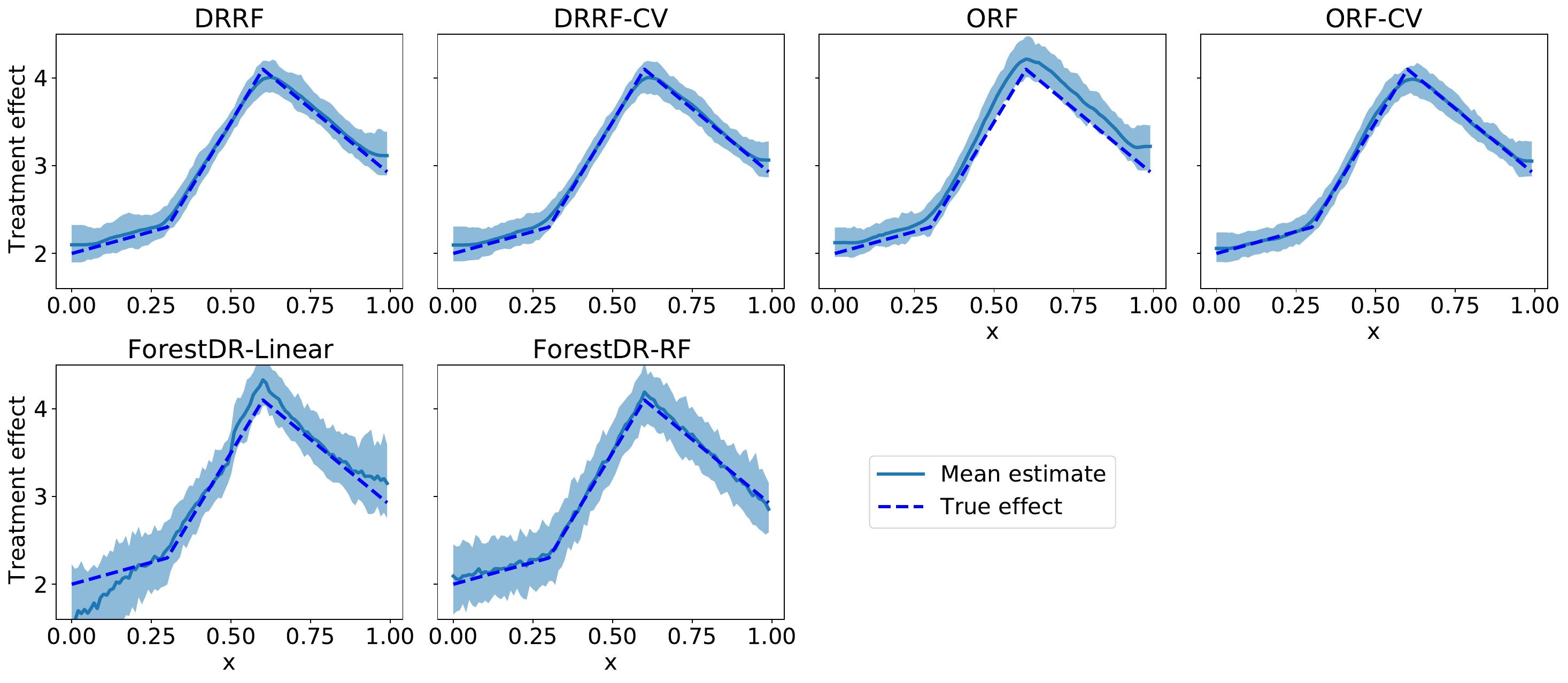} 
        \caption{$k=5$}
    \end{subfigure}
    \begin{subfigure}{\textwidth}
        \centering
        \includegraphics[width=0.78\linewidth,height=5.2cm]{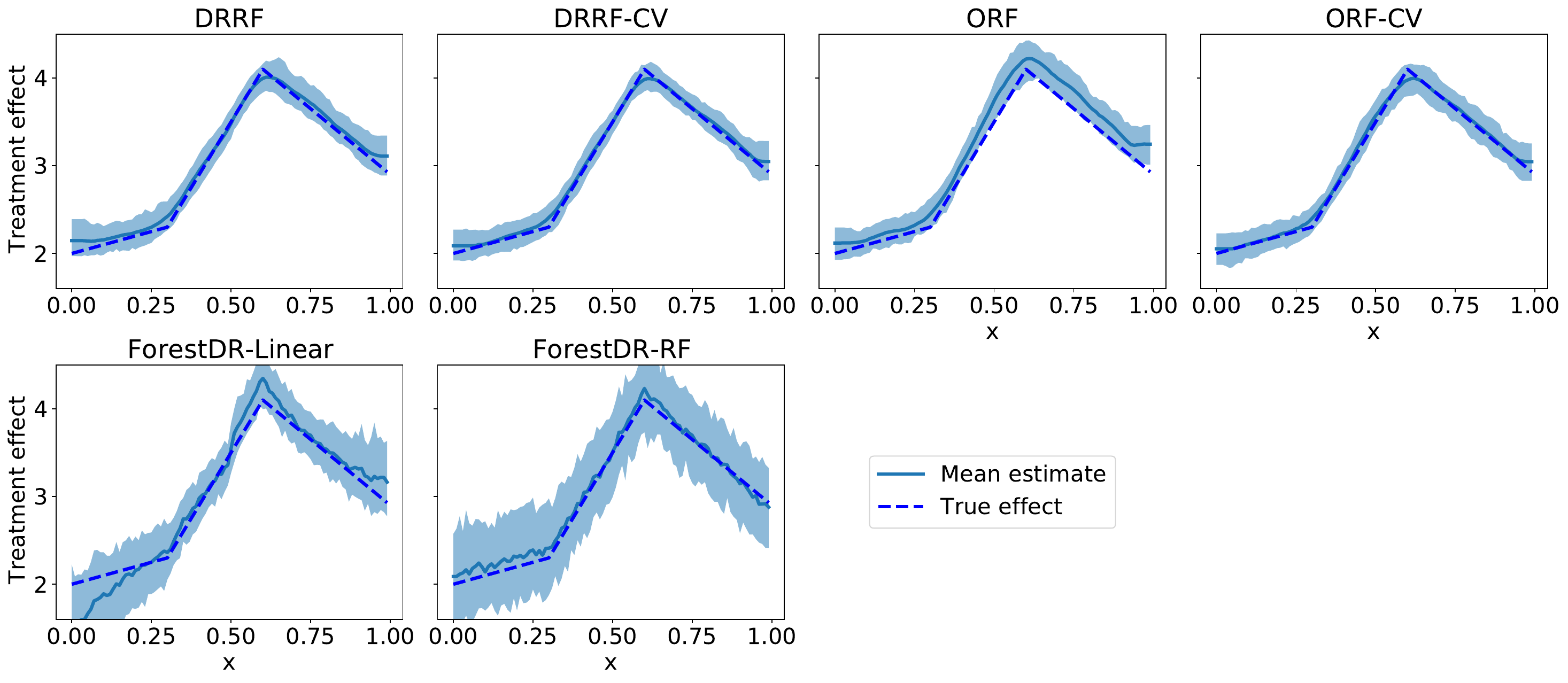} 
        \caption{$k=10$}
    \end{subfigure}
    \begin{subfigure}{\textwidth}
        \centering
        \includegraphics[width=0.78\linewidth,height=5.2cm]{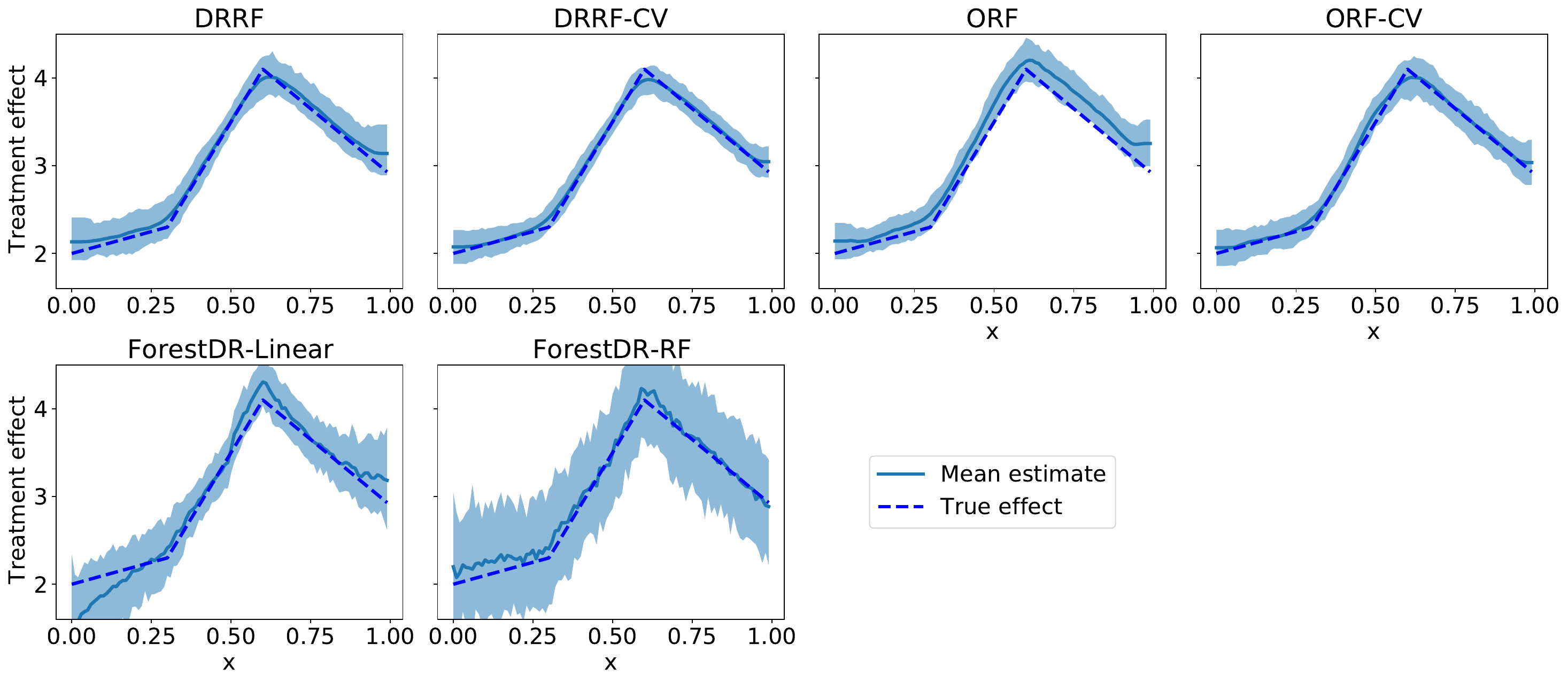} 
        \caption{$k=15$}
    \end{subfigure}
    \caption{Treatment effect estimations for Setup 1 in binary treatment setting}
    \floatfoot{This figure shows the treatment effect estimations for Setup 1 in the binary treatment setting. The plot shows the average estimated effects, with shaded regions representing the 5th to 95th percentile intervals of the estimated effects across 100 replications. The data is generated from the process $Y_i=\theta_0(X_i)\cdot D_i+\tilde X_i^\top \nu_0+\epsilon_i$, where $X_i\overset{i.i.d.}{\sim} U[0,1]$, $\tilde{X}_i \sim \mathcal{N}(0, I_p)$, and $\gamma_0$ is a $k$-sparse vector with non-zero coefficients drawn i.i.d. from $U[-1,1]$. The binary treatment variable $D_i$ is randomly generated based on the propensity score $\PP(D_i=1\mid X_i,\tilde X_i) = 0.3+0.4\cdot I(X_i>0.5)$. We set the number of observations to $2n=5000$ and $p=100$.}
    \label{fig:binary_1}
\end{figure}

\begin{figure}[htbp]
    \centering
    \begin{subfigure}{\textwidth}
        \centering
        \includegraphics[width=0.78\linewidth,height=5.2cm]{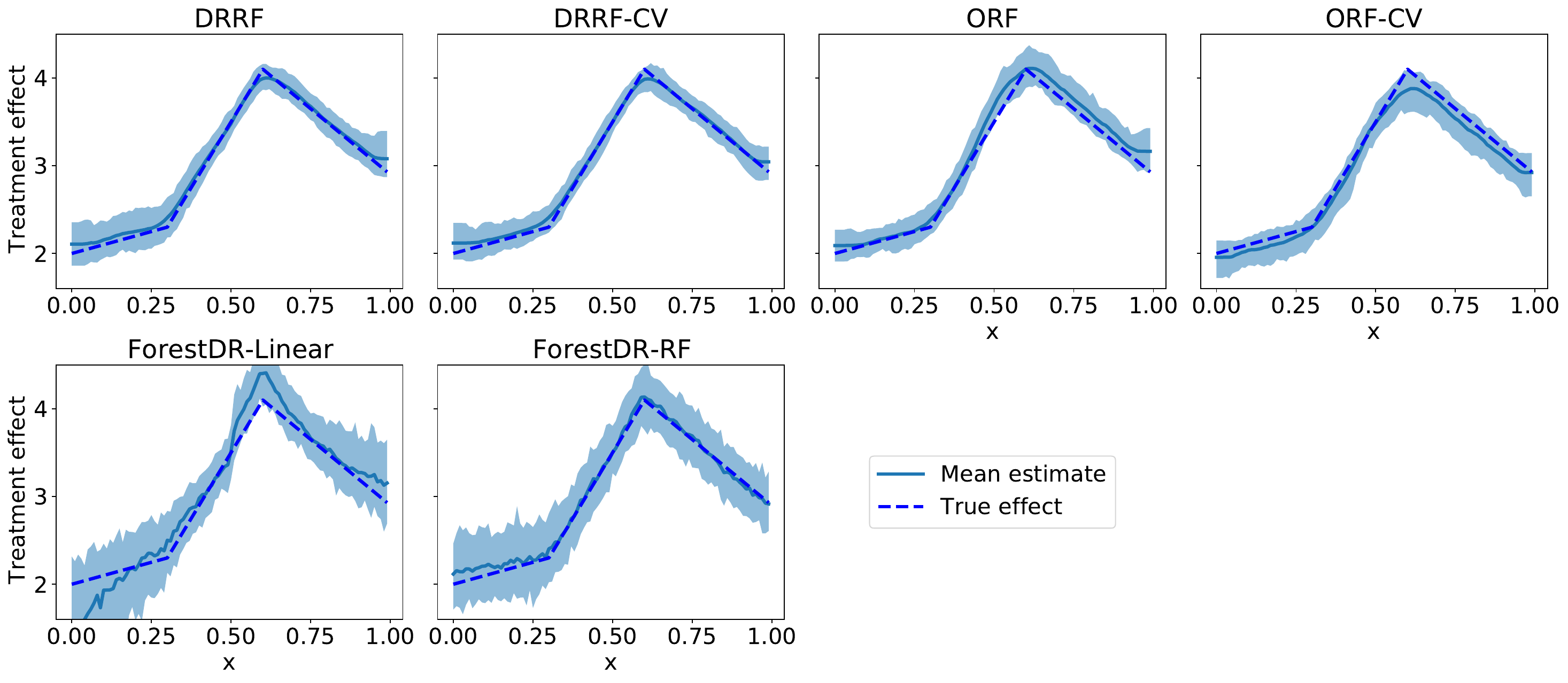} 
        \caption{$k=5$}
    \end{subfigure}
    \begin{subfigure}{\textwidth}
        \centering
        \includegraphics[width=0.78\linewidth,height=5.2cm]{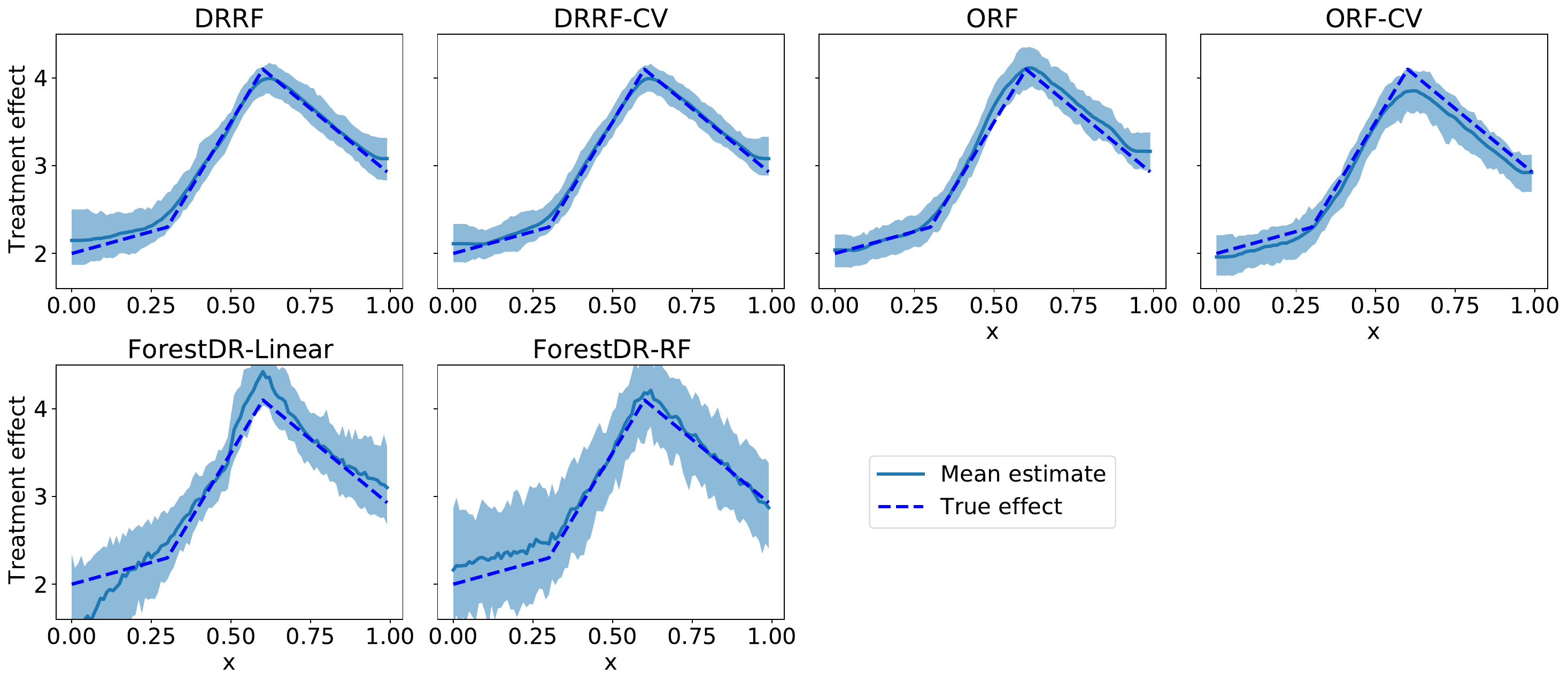} 
        \caption{$k=10$}
    \end{subfigure}
    \begin{subfigure}{\textwidth}
        \centering
        \includegraphics[width=0.78\linewidth,height=5.2cm]{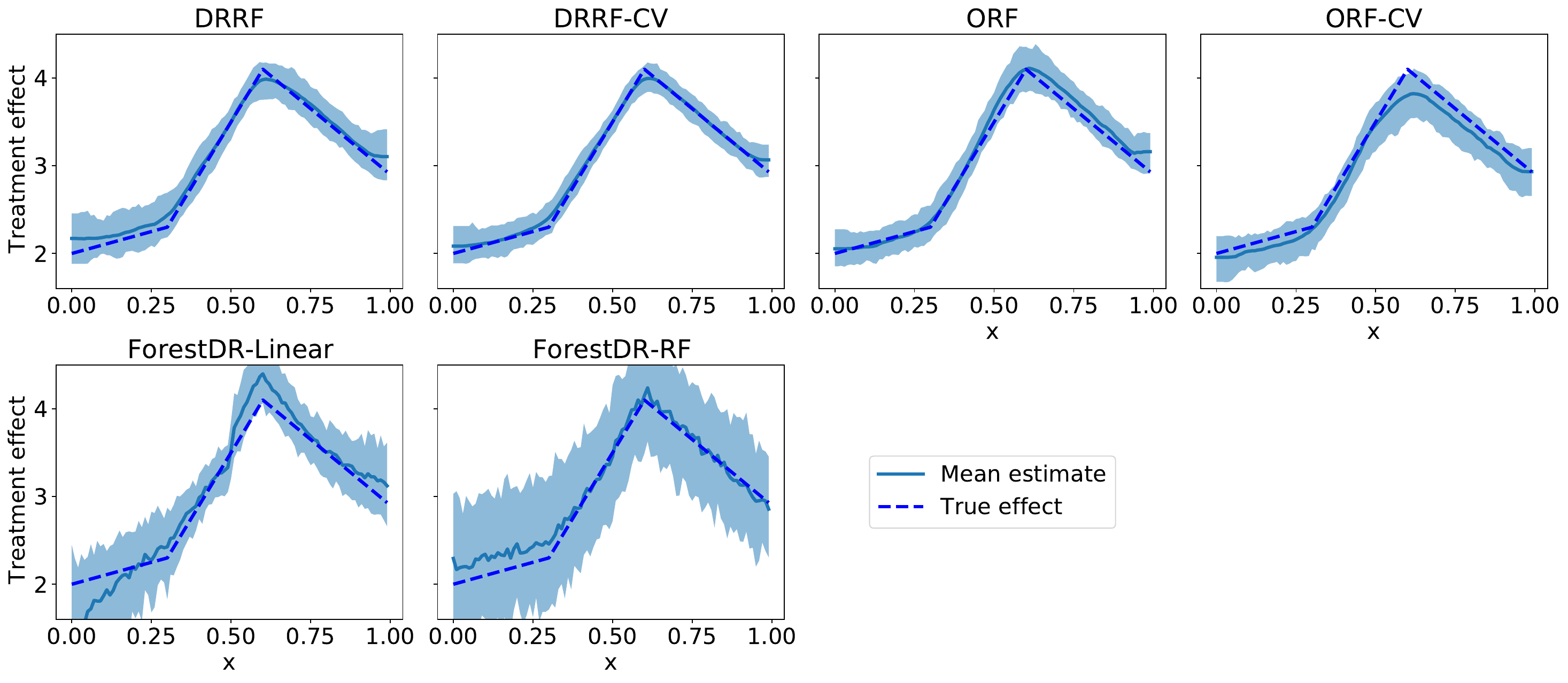} 
        \caption{$k=15$}
    \end{subfigure}
    \caption{Treatment effect estimations for Setup 2 in binary treatment setting}
    \floatfoot{This figure shows the treatment effect estimations for Setup 2 in the binary treatment setting. The plot shows the average estimated effects, with shaded regions representing the 5th to 95th percentile intervals of the estimated effects across 100 replications. The data is generated from the process $Y_i=\theta_0(X_i)\cdot D_i+\tilde X_i^\top \nu_0+\epsilon_i$, where $X_i\overset{i.i.d.}{\sim} U[0,1]$, $\tilde{X}_i \sim \mathcal{N}(0, I_p)$, and $\gamma_0$ is a $k$-sparse vector with non-zero coefficients drawn i.i.d. from $U[-1,1]$. The binary treatment variable $D_i$ is randomly generated based on the propensity score $\PP(D_i=1\mid X_i,\tilde X_i) = 0.1+0.3\cdot I(\tilde X_i^\top \gamma_0 > 0)+0.4\cdot I(X_i>0.5)$, where $\gamma_0$ is a $k$-sparse vector with the same support as $\nu_0$, and its non-zero coefficients are drawn i.i.d. from $U[0,1]$. We set the number of observations to $2n=5000$ and $p=100$.}
    \label{fig:binary_2}
\end{figure}

\begin{figure}[htbp]
    \centering
        \begin{subfigure}{\textwidth}
        \centering
        \includegraphics[width=0.78\linewidth,height=5.2cm]{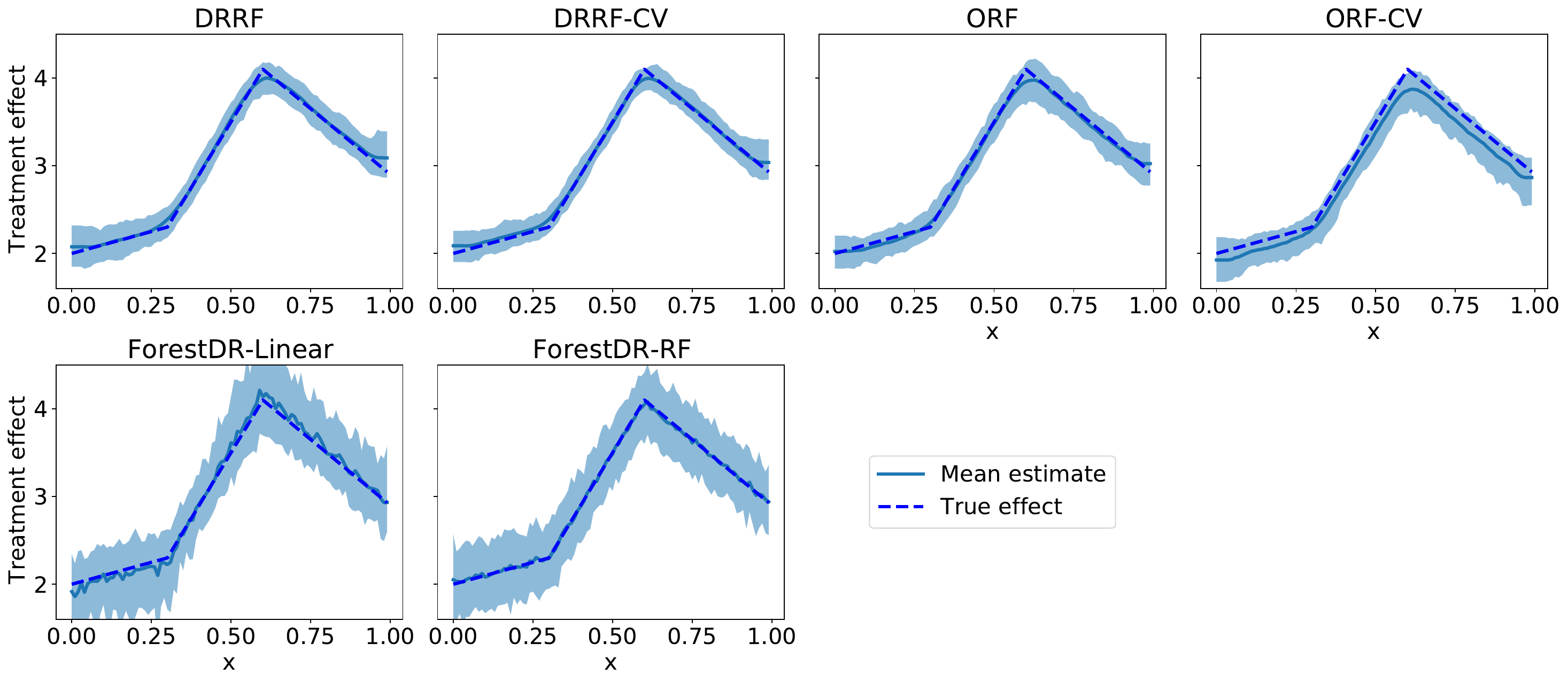} 
        \caption{$k=5$}
    \end{subfigure}
    \begin{subfigure}{\textwidth}
        \centering
        \includegraphics[width=0.78\linewidth,height=5.2cm]{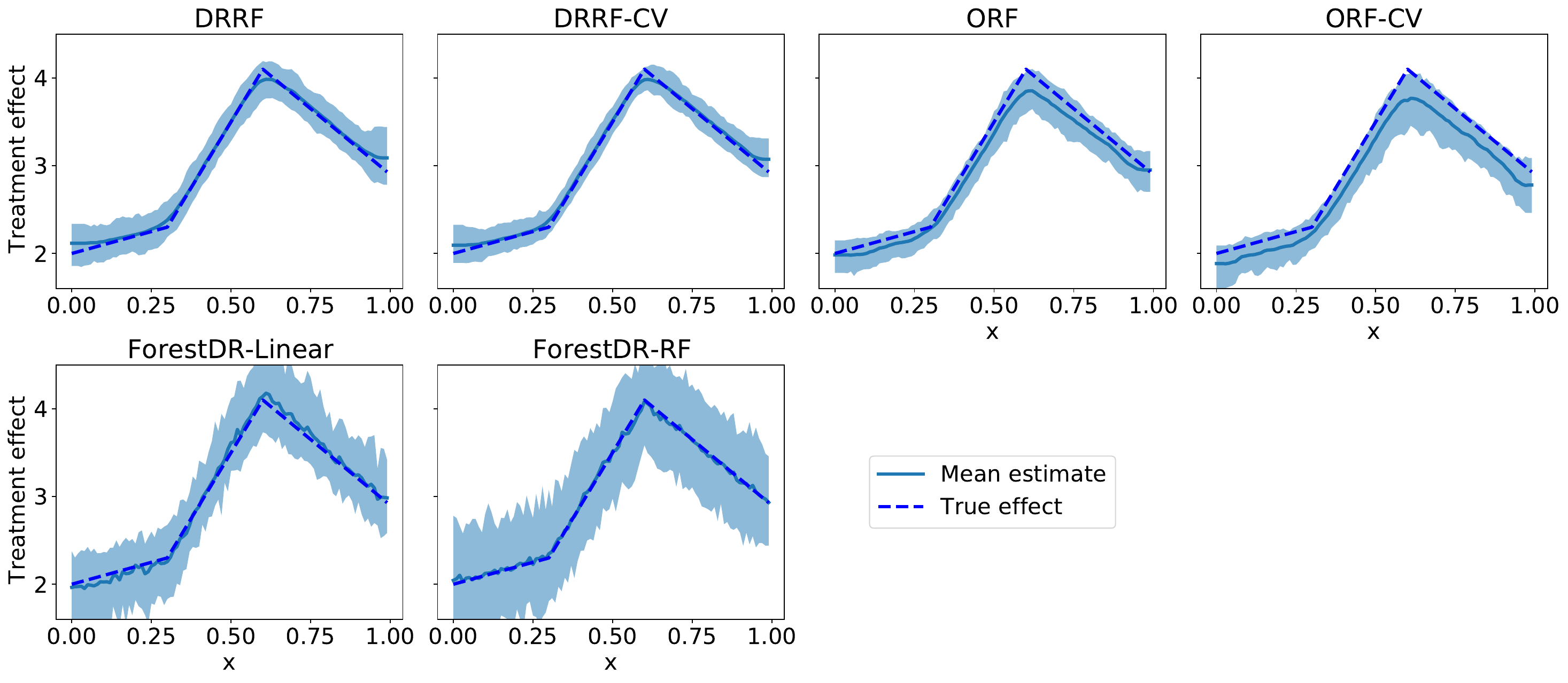} 
        \caption{$k=10$}
    \end{subfigure}
    \begin{subfigure}{\textwidth}
        \centering
        \includegraphics[width=0.78\linewidth,height=5.2cm]{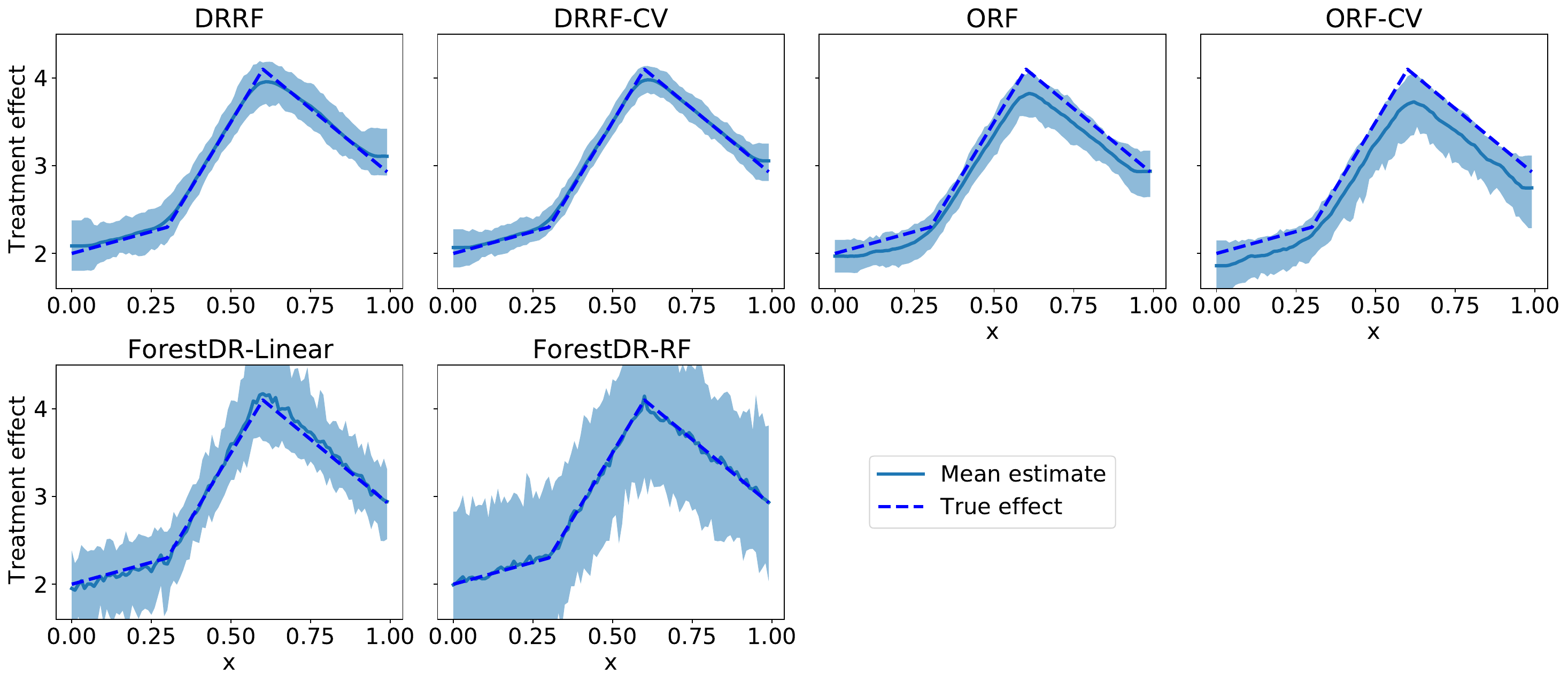} 
        \caption{$k=15$}
    \end{subfigure}
    \caption{Treatment effect estimations for Setup 3 in binary treatment setting}
    \floatfoot{This figure shows the treatment effect estimations for Setup 3 in the binary treatment setting. The plot shows the average estimated effects, with shaded regions representing the 5th to 95th percentile intervals of the estimated effects across 100 replications. The data is generated from the process $Y_i=\theta_0(X_i)\cdot D_i+\tilde X_i^\top \nu_0+\epsilon_i$, where $X_i\overset{i.i.d.}{\sim} U[0,1]$, $\tilde{X}_i \sim \mathcal{N}(3, I_p)$, and $\gamma_0$ is a $k$-sparse vector with non-zero coefficients drawn i.i.d. from $U[-1,1]$. The binary treatment variable $D_i$ is randomly generated based on the propensity score $\PP(D_i=1\mid X_i,\tilde X_i) = 0.2+0.6\cdot I(\tilde X_i^\top \gamma_0 > 0)$, where $\gamma_0$ is a $k$-sparse vector with the same support as $\nu_0$ and non-zero coefficients equal to $1/k$. We set the number of observations to $2n=5000$ and $p=100$.}
    \label{fig:binary_3}
\end{figure}

\begin{figure}[t!]
    \centering
    \includegraphics[width=0.4\linewidth]{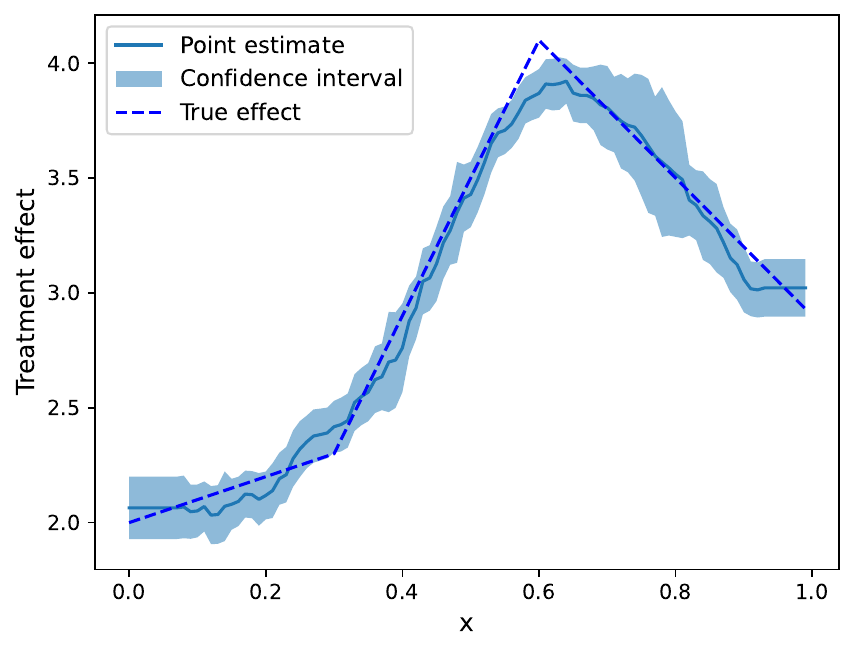}
    \caption{Example of bootstrap confidence intervals for DRRF}
    \label{fig:CI}
    \floatfoot{This figure presents the 90\% confidence intervals constructed using the bootstrap of little bags algorithm for Setup 1 in the binary treatment setting. We set the support size to $k=5$ and the number of trees to $B=100$. Approximately 90\% of the sampled test points are contained in the intervals.}
\end{figure}

\section{Conclusion}
In this paper, we proposed an efficient and automatic doubly robust method for estimating conditional moment functionals, introducing the Doubly Robust Random Forest (DRRF) as a practical algorithm. Our approach extended the existing automatic debiased machine learning (Auto-DML) framework, based on Riesz representers and originally developed for average functionals, to the more general setting of conditional functionals. In particular, we introduced the notion of conditional Riesz representer (cRR) and showed that, under mild conditions, cRR can be projected onto the same function space as the regression function. In particular, when the regression function is locally linear, it suffices to consider a locally linear conditional Riesz representer, which can be efficiently estimated via a weighted Lasso. Our method also offers substantial computational gains at evaluation time compared to existing approaches such as \cite{orf} by precomputing and storing moment estimates from the training data. We established the consistency and asymptotic normality of our procedure under appropriate regularity conditions. Overall, our method provides a nonparametric, doubly robust, and computationally scalable solution for estimating general conditional moment functionals in the presence of high-dimensional nuisance functions.

\newpage
\bibliographystyle{plainnat} 
\bibliography{ref} 

\begin{thebibliography}{17}
\providecommand{\natexlab}[1]{#1}
\providecommand{\url}[1]{\texttt{#1}}
\expandafter\ifx\csname urlstyle\endcsname\relax
  \providecommand{\doi}[1]{doi: #1}\else
  \providecommand{\doi}{doi: \begingroup \urlstyle{rm}\Url}\fi

\bibitem[Athey et~al.(2019)Athey, Tibshirani, and Wager]{grf}
Susan Athey, Julie Tibshirani, and Stefan Wager.
\newblock Generalized random forests.
\newblock \emph{The Annals of Statistics}, 47\penalty0 (2):\penalty0
  1148--1178, 2019.

\bibitem[Battocchi et~al.(2019)Battocchi, Dillon, Hei, Lewis, Oka, Oprescu, and
  Syrgkanis]{econml}
Keith Battocchi, Eleanor Dillon, Maggie Hei, Greg Lewis, Paul Oka, Miruna
  Oprescu, and Vasilis Syrgkanis.
\newblock {EconML}: {A Python Package for ML-Based Heterogeneous Treatment
  Effects Estimation}.
\newblock https://github.com/py-why/EconML, 2019.
\newblock Version 0.x.

\bibitem[Chernozhukov et~al.(2017{\natexlab{a}})Chernozhukov, Chetverikov,
  Demirer, Duflo, Hansen, and Newey]{victor17}
Victor Chernozhukov, Denis Chetverikov, Mert Demirer, Esther Duflo, Christian
  Hansen, and Whitney Newey.
\newblock Double/debiased/neyman machine learning of treatment effects.
\newblock \emph{American Economic Review}, 107\penalty0 (5):\penalty0 261--265,
  2017{\natexlab{a}}.

\bibitem[Chernozhukov et~al.(2017{\natexlab{b}})Chernozhukov, Goldman,
  Semenova, and Taddy]{chernozhukov2017orthogonal}
Victor Chernozhukov, Matt Goldman, Vira Semenova, and Matt Taddy.
\newblock Orthogonal machine learning for demand estimation: High dimensional
  causal inference in dynamic panels.
\newblock \emph{arXiv}, pages arXiv--1712, 2017{\natexlab{b}}.

\bibitem[Chernozhukov et~al.(2018{\natexlab{a}})Chernozhukov, Chetverikov,
  Demirer, Duflo, Hansen, Newey, and Robins]{chernozhukov2018double}
Victor Chernozhukov, Denis Chetverikov, Mert Demirer, Esther Duflo, Christian
  Hansen, Whitney Newey, and James Robins.
\newblock Double/debiased machine learning for treatment and structural
  parameters.
\newblock \emph{The Econometrics Journal}, 21\penalty0 (1):\penalty0 C1--C68,
  2018{\natexlab{a}}.

\bibitem[Chernozhukov et~al.(2018{\natexlab{b}})Chernozhukov, Nekipelov,
  Semenova, and Syrgkanis]{chernozhukov2018plug}
Victor Chernozhukov, Denis~N Nekipelov, Vira Semenova, and Vasilis Syrgkanis.
\newblock Plug-in regularized estimation of high-dimensional parameters in
  nonlinear semiparametric models.
\newblock Technical report, cemmap working paper, 2018{\natexlab{b}}.

\bibitem[Chernozhukov et~al.(2022{\natexlab{a}})Chernozhukov, Escanciano,
  Ichimura, Newey, and Robins]{chernozhukov2022locally}
Victor Chernozhukov, Juan~Carlos Escanciano, Hidehiko Ichimura, Whitney~K
  Newey, and James~M Robins.
\newblock Locally robust semiparametric estimation.
\newblock \emph{Econometrica}, 90\penalty0 (4):\penalty0 1501--1535,
  2022{\natexlab{a}}.

\bibitem[Chernozhukov et~al.(2022{\natexlab{b}})Chernozhukov, Newey,
  Quintas-Martinez, and Syrgkanis]{riesznet}
Victor Chernozhukov, Whitney Newey, Victor~M Quintas-Martinez, and Vasilis
  Syrgkanis.
\newblock Riesznet and forestriesz: Automatic debiased machine learning with
  neural nets and random forests.
\newblock In \emph{International Conference on Machine Learning}, pages
  3901--3914. PMLR, 2022{\natexlab{b}}.

\bibitem[Chernozhukov et~al.(2022{\natexlab{c}})Chernozhukov, Newey, and
  Singh]{chernozhukov2022automatic}
Victor Chernozhukov, Whitney~K Newey, and Rahul Singh.
\newblock Automatic debiased machine learning of causal and structural effects.
\newblock \emph{Econometrica}, 90\penalty0 (3):\penalty0 967--1027,
  2022{\natexlab{c}}.

\bibitem[Fan et~al.(2018)Fan, Lv, and Wang]{fan2018dnn}
Yingying Fan, Jinchi Lv, and Jingbo Wang.
\newblock Dnn: A two-scale distributional tale of heterogeneous treatment
  effect inference.
\newblock \emph{Available at SSRN 3238897}, 2018.

\bibitem[Kumar et~al.(2024)Kumar, Deshpande, and Sharma]{RieszSharma}
Abhinav Kumar, Amit Deshpande, and Amit Sharma.
\newblock Causal effect regularization: automated detection and removal of
  spurious correlations.
\newblock \emph{Advances in Neural Information Processing Systems}, 36, 2024.

\bibitem[Lundberg and Lee(2017)]{lundberg2017unified}
Scott~M Lundberg and Su-In Lee.
\newblock A unified approach to interpreting model predictions.
\newblock \emph{Advances in Neural Information Processing Systems}, 30, 2017.

\bibitem[Oprescu et~al.(2019)Oprescu, Syrgkanis, and Wu]{orf}
Miruna Oprescu, Vasilis Syrgkanis, and Zhiwei~Steven Wu.
\newblock Orthogonal random forest for causal inference.
\newblock In \emph{International Conference on Machine Learning}, pages
  4932--4941. PMLR, 2019.

\bibitem[Peel et~al.(2010)Peel, Anthoine, and Ralaivola]{peel2010empirical}
Thomas Peel, Sandrine Anthoine, and Liva Ralaivola.
\newblock Empirical bernstein inequalities for u-statistics.
\newblock \emph{Advances in Neural Information Processing Systems}, 23, 2010.

\bibitem[Sexton and Laake(2009)]{sexton2009standard}
Joseph Sexton and Petter Laake.
\newblock Standard errors for bagged and random forest estimators.
\newblock \emph{Computational Statistics \& Data Analysis}, 53\penalty0
  (3):\penalty0 801--811, 2009.

\bibitem[Wager and Athey(2018)]{wager2018estimation}
Stefan Wager and Susan Athey.
\newblock Estimation and inference of heterogeneous treatment effects using
  random forests.
\newblock \emph{Journal of the American Statistical Association}, 113\penalty0
  (523):\penalty0 1228--1242, 2018.

\bibitem[Zimmert and Lechner(2019)]{zimmert2019nonparametric}
Michael Zimmert and Michael Lechner.
\newblock Nonparametric estimation of causal heterogeneity under
  high-dimensional confounding.
\newblock \emph{arXiv preprint arXiv:1908.08779}, 2019.

\end{thebibliography}

\newpage
\setcounter{section}{0}
\renewcommand\thesection{A.\arabic{section}} 
\setcounter{assumption}{0}
\setcounter{lemma}{0}
\setcounter{proposition}{0}
\setcounter{table}{0}
\setcounter{figure}{0}
\renewcommand{\theassumption}{A.\arabic{assumption}}
\renewcommand{\thelemma}{A.\arabic{lemma}}
\renewcommand{\theproposition}{A.\arabic{proposition}}
\renewcommand{\thefigure}{A.\arabic{figure}}
\renewcommand{\thetable}{A.\arabic{table}}
\section*{Appendix}

\section{Technical Assumptions}
\begin{assumption}[Forest Regularity]  \label{assump: forest}
The tree satisfies
\begin{enumerate} 
    \item $\rho-$balanced: Each split leaves at least a fraction $\rho$ of the observations in $S_b^2$ on each side of the split for some parameter of $\rho \le 0.2$.
    \item Minimum leaf size r: There are between $r$ and $2r-1$ observations from $S_b^2$ in each leaf of the tree.
    \item $\pi-$random-split: At every step, marginalizing over the internal randomness of the learner, the probability that the next split occurs along the $j$-th feature is at least $\pi/d$ for some $0<\pi\le1$, for all $j=1,...,d$.
\end{enumerate}
\end{assumption}

\section{Sufficient Conditions for the Form of Conditional Riesz Representer} \label{app: riesz}
In this section, we give sufficient conditions with which we can assume without loss of generality that $\alpha_0(w,x)=h(w,\nu_\alpha(x))$ if $g_0(x,w)=h(w,\nu_g(x))$.

Recall that $\mathcal{G}_x = \{g:\E[g(X,W)^2\mid X=x]<\infty\}.$ 
Since $\mathcal{G}_x$ is the $L_2$ space under the law $\mathcal{L}([X,W]\mid X=x)$, it is a Hilbert space with inner product defined by $$\langle g_1,g_2\rangle_x = \E[g_1(X,W)g_2(X,W)|X=x]. $$

Suppose that $\mathcal{H}_x$ is a closed linear subspace of $\mathcal{G}_x$. Let $\alpha_0 \in \mathcal{G}_x$ be the conditional Riesz representer. Then by Hilbert space projection theorem, we can write $\alpha_0 = \alpha_{\mathcal{H}_x} + \alpha_{\mathcal{H}_x}^{\perp}$, where $\alpha_{\mathcal{H}_x}\in \mathcal{H}_x$ and $\alpha_{\mathcal{H}_x}^{\perp} \in \mathcal{H}_x^{\perp}$. As a result, for any $g \in \mathcal{H}_x$,
\begin{equation}  \label{equ: Riesz}
    \begin{aligned}
        \E[m(Z;g(X,W))\mid X=x]&=\E[g(X,W)\alpha_0(X,W)\mid X=x]\\ &=\E[g(X,W)\alpha_{\mathcal{H}_x}(X,W)\mid X=x].
    \end{aligned}
\end{equation}
Since throughout our theoretical analysis, we only work with those $g\in \mathcal{H}_x$, we can assume without loss of generality that $\alpha_0(x,w)$ also takes form of $h(w,\nu_{\alpha}(x))$ for some $\nu_\alpha(x)$. In fact, it is the minimal norm conditional Riesz representer such that \eqref{equ: Riesz} holds for all $g \in \mathcal{H}_x$.

Now we show that with $h(w,\nu)=w^\top \nu$, $\mathcal{H}_x$ is a closed linear subspace of $\mathcal{G}_x$, under mild conditions.  Since $$h(w,\nu_1(x))+ch(w,\nu_2(x)) = w^\top (\nu_1(x)+c\nu_2(x))\in \mathcal{H}_x$$ for any $c\in \mathbb{R},$ $\mathcal{H}_x$ is a linear subspace of $\mathcal{G}_x.$
Suppose that $h(w,\nu_1(x)), h(w,\nu_2(x)),\ldots$ is a sequence of elements in $\mathcal{H}_x$ that converges to a limit $g^*(x,w) \in \mathcal{G}_x.$ Then the sequence is a Cauchy sequence. If follows that $$(\nu_n(x)-\nu_m(x))^\top\mathbb{E}[WW^\top\mid X=x](\nu_n(x)-\nu_m(x)) \xrightarrow[]{} 0$$ as $m,n\xrightarrow[]{} \infty.$
Assume that $\mathbb{E}[WW^\top\mid X=x]$ is positive definite. Then above implies that $\norm{\nu_n(x)-\nu_m(x)}_2\xrightarrow[]{} 0$ as $m,n\xrightarrow[]{} \infty.$ Therefore $\{\nu_n(x)\}_n$ is a Cauchy sequence that converges to some limit $\nu^*(x)\in \mathbb{R}^{d_\nu}.$ 
Since $$\norm{h(w,\nu_n(x))-h(w,\nu^*(x))}^2_{\mathcal{G}_x}=(\nu_n(x)-\nu^*(x))^\top\mathbb{E}[WW^\top\mid X=x](\nu_n(x)-\nu^*(x)) \xrightarrow[]{} 0$$ as $n \xrightarrow[]{} \infty,$ we get that $h(w,\nu_n(x)) \xrightarrow[]{} h(w,\nu^*(x))$ in the norm induced by the Hilbert space $\mathcal{G}_x.$ Therefore $\mathcal{H}_x$ is a closed linear subspace of $\mathcal{G}_x.$ It is easy to see that the above argument extends straightforwardly to the case where $h(w,\nu)=r(w)^\top \nu$ for some known function $r$.

\section{Proofs}
\subsection{Proof of Proposition \ref{prop: ortho}}
The local gradient of $\psi(Z;\alpha_0, g_0) = m(Z;g_0)+\alpha_0(Y- g_0)$ with respect to $g$ is
\begin{align*}
&\E\Ls \nabla_g\psi(Z;\alpha_0,g_0)[\hat{g}-g_0]\mid X=x\Rs \\
=~& \E\Ls\lim_{h\rightarrow 0} \frac{1}{h}\Lp\psi(Z;\alpha_0,g_0+h(\hat g-g_0))-\psi(Z;\alpha_0,g_0)\Rp \mid X=x\Rs\\
=~& \E\Ls\lim_{h\rightarrow 0} \frac{1}{h}\Lp m(Z;g_0+h(\hat g-g_0))-m(Z;g_0)\Rp \mid X=x\Rs-\E\Ls \alpha_0(\hat g-g_0)\mid X=x\Rs.
\end{align*}
Under regularity conditions,\footnote{For example, it suffies to assume that $m$ is Lipschitz in its argument $g$ and $\hat{g}\in \mathcal{G}_x$.} the limit and expectation are exchangeable. Then by the definition of the cRR, the first part on the RHS can be further simplified as
\begin{align*}
    \E\Ls\lim_{h\rightarrow 0} \frac{1}{h}\Lp m(Z;g_0+h(\hat g-g_0))-m(Z;g_0)\Rp \mid X=x\Rs 
    =~& \lim_{h\rightarrow 0}\frac{1}{h}\E\Ls (g_0+h(\hat g-g_0))\alpha_0-g_0\alpha_0 \mid X=x\Rs\\
    =~& \E\Ls \alpha_0(\hat g-g_0)\mid X=x\Rs,
\end{align*}
which implies that $\E\Ls \nabla_g\psi(Z;\alpha_0,g_0)[\hat{g}-g_0]\mid X=x\Rs=0$. Similarly, the local gradient of $\psi(Z;\alpha_0, g_0)$ with respect to $\alpha$ is
\begin{align*}
\E\Ls \nabla_{\alpha}\psi(Z;\alpha_0,g_0)[\hat{\alpha}-\alpha_0]\mid X=x\Rs
=~& \E\Ls\lim_{h\rightarrow 0} \frac{1}{h}\Lp\psi(Z;\alpha_0+h(\hat \alpha-\alpha_0),g_0)-\psi(Z;\alpha_0,g_0)\Rp \mid X=x\Rs\\
=~& \E\Ls (\hat{\alpha}(x,W)-\alpha_0(x,W))(Y-g_0(x,W))\mid X=x\Rs\\
=~& \E\Ls (\hat{\alpha}(x,W)-\alpha_0(x,W))(\E[Y\mid X=x,W]-g_0(x,W))\mid X=x\Rs \\
=~& 0.
\end{align*}

\subsection{Proof of Proposition \ref{prop: dr}}
For the new score function $\psi(Z;\hat\alpha,\hat g) = m(Z;\hat g)+\hat \alpha(W)(Y-\hat g(W))$ with any fixed nuisance estimators $\hat{g}$ and $\hat{\alpha}$, we have
\begin{align*}
    &\E[\psi(Z;\hat\alpha,\hat g)\mid X=x]-\theta_0(x) \\
    =~& \E[m(Z;\hat g(x,W)) + \hat\alpha(x,W)(Y-\hat g(x,W))-\theta_0(x)\mid X=x]\\
    =~& \E[\alpha_0(x,W)\hat g(x,W) + \hat\alpha(x,W)(Y-\hat g(x,W))-\alpha_0(x,W)g_0(x,W)\mid X=x]\\
    =~& -\E[(\hat\alpha(x,W) - \alpha_0(x,W)) (\hat g(x,W)-g_0(x,W))\mid X=x],
\end{align*}
where the last equality holds since $\E[Y\mid W,X=x]=g_0(x,W)$ by definition.

\subsection{Proof of Theorem \ref{thm: consistency}}
The estimator $\hat{\theta}(x)$ employs the sub-sampled symmetric kernel
\[
K(x,X_i,Z_{1:n},\xi) = \frac{1}{B}\sum_{b=1}^B \tilde{K}(x,X_i,Z_{S_b},\xi_b),\quad i=1,\ldots,n,
\]
which satisfies the normalization condition that $\sum_{i=1}^n K(x,X_i,Z_{1:n},\xi)=1$. 
For notational simplicity, we denote the nuisance regression function and conditional Riesz representer estimated at point $(X_i, W_i)$ by $\hat{g}_i$ and $\hat{\alpha}_i$, respectively, i.e., $\hat{g}_i \coloneqq \hat{g}(X_i, W_i)$ and $\hat{\alpha}_i \coloneqq \hat{\alpha}(X_i, W_i)$.

Observe that we can decompose the estimation error $\hat\theta(x)-\theta_0(x)$ as
\begin{align*}
     ~&\hat{\theta}(x)-\theta_0(x) \\
     =~&\underbrace{\sum_{i=1}^n K(x,X_i,Z_{1:n},\xi)\Lp m(Z_i;\hat{g}_i)+\hat{\alpha}_i(Y_i-\hat{g}_i)\Rp  - \E_\xi\Ls \sum_{i=1}^n K(x,X_i,Z_{1:n},\xi)\Lp m(Z_i;\hat{g}_i)+\hat{\alpha}_i(Y_i-\hat{g}_i)\Rp\Rs}_{\text{subsampling error } E(\hat{g},\hat{\alpha}) }
     \\
     & + \underbrace{\E_\xi\Ls \sum_{i=1}^n K(x,X_i,Z_{1:n},\xi)\Lp m(Z_i;\hat{g}_i)+\hat{\alpha}_i(Y_i-\hat{g}_i)\Rp\Rs-\E\Ls \sum_{i=1}^n K(x,X_i,Z_{1:n},\xi)\Lp m(Z_i;\hat{g}_i)+\hat{\alpha}_i(Y_i-\hat{g}_i)\Rp\Rs}_{\text{sampling error }\Delta(\hat{g},\hat{\alpha}) }\\
     & + \underbrace{\E\Ls \sum_{i=1}^n K(x,X_i,Z_{1:n},\xi)\Lp m(Z_i;\hat{g}_i)+\hat{\alpha}_i(Y_i-\hat{g}_i)\Rp\Rs - \E\Ls m(Z_i;\hat{g}_i)+\hat{\alpha}_i(Y_i-\hat{g}_i)\mid X_i=x\Rs}_{\text{kernel error }\Gamma(\hat{g},\hat{\alpha}) }\\
     & + \underbrace{\E\Ls m(Z_i;\hat{g}_i)+\hat{\alpha}_i(Y_i-\hat{g}_i)\mid X_i=x\Rs - \E\Ls m(Z_i;g_0(x,W_i))\mid X_i=x\Rs}_{\text{nuisance error } \Lambda(\hat{g},\hat{\alpha}) }.
\end{align*}
Therefore, the estimation error can be bounded by: \[\La \hat{\theta}(x)-\theta_0(x)\Ra \leq \La E(\hat{g},\hat{\alpha})\Ra + \La\Delta(\hat{g},\hat{\alpha})\Ra + \La\Gamma(\hat{g},\hat{\alpha})\Ra + \La\Lambda(\hat{g},\hat{\alpha})\Ra.\]
In the sequel, we analyze the four terms $E(\hat{g},\hat{\alpha}), \Delta(\hat{g},\hat{\alpha}), \Gamma(\hat{g},\hat{\alpha})$, and $\Lambda(\hat{g},\hat{\alpha})$ respectively.

\begin{lemma}[Subsampling Error]  \label{lemma: subsampling error}
Under Assumption \ref{assump: moment}.\ref{subassump: boundedness}, w.p. $1-\delta$,
\[
 |E(\hat{g},\hat{\alpha})| = O\Lp \sqrt{\frac{\log(2/\delta)}{B}}\Rp.
\]
\end{lemma}
\begin{proof}
By the definition of $K(x,X_i,Z_{1:n},\xi)\,,$ we have 
\begin{align*}
    \hat{\theta}(x) =~& \sum_{i=1}^n K(x,X_i,Z_{1:n},\xi)\Ls m(Z_i;\hat{g}(X_i,W_i)) + \hat{\alpha}(X_i,W_i)(Y_i-\hat{g}(X_i,W_i))\Rs \\
    =~& \frac{1}{B}\sum_{b=1}^B \sum_{i\in S_b} \tilde{K}(x,X_i,Z_{S_b},\xi_b) \Ls m(Z_i;\hat{g}(X_i,W_i)) + \hat{\alpha}(X_i,W_i)(Y_i-\hat{g}(X_i,W_i))\Rs,
\end{align*}
where the second equality holds because $\tilde{K}(x,X_i,Z_{S_b},\xi_b) = 0$ for any $i \notin S_b$. We denote 
\[
V_b = \sum_{i\in S_b} \tilde{K}(x,X_i,Z_{S_b},\xi_b) \Ls m(Z_i;\hat{g}(X_i,W_i)) + \hat{\alpha}(X_i,W_i)(Y_i-\hat{g}(X_i,W_i))\Rs.
\]
Considering only the randomness in $V_b$ driven by $S_b$ and $\xi_b$, and by Assumption \ref{assump: moment}.\ref{subassump: boundedness}, $V_b$'s are bounded i.i.d. random variables. It follows from Hoeffding's bound that, for any $t\geq 0$, there exists a constant $c$ such that
\begin{align*}
    \PP\Lp \left| \frac{1}{B}\sum_{b=1}^B V_b - \E_\xi\Ls V_b\Rs\right| \geq t \Rp  \leq  2e^{-ct^2B}\,.
\end{align*}
Setting $t=\sqrt{\log(1/\delta)/(cB)}$ gives that, w.p. $1-\delta$, 
\[
|E(\hat{g},\hat{\alpha})| = \La \frac{1}{B}\sum_{b=1}^B V_b - \E_\xi[V_b]\Ra = O\Lp \sqrt{\frac{\log(2/\delta)}{B}}\Rp.
\]
\end{proof}

\begin{lemma}[Sampling Error]  \label{lemma: sampling error}
Under Assumption \ref{assump: moment}.\ref{subassump: boundedness}, w.p. $1-\delta$,
\[ |\Delta(\hat{g},\hat{\alpha})| = O\Lp \sqrt{\frac{s\log(2/\delta)}{n}}\Rp.
\]
\end{lemma}
\begin{proof}
Since we estimate $\hat{g}$ and $\hat{\alpha}$ based on the second half of the data $Z_{n+1:2n}$,
\begin{align*}
    & \E_\xi\Ls \sum_{i\in S_b} \tilde{K}(x,X_i,Z_{S_b},\xi_b) \Lp m(Z_i;\hat{g}(X_i,W_i)) + \hat{\alpha}(X_i,W_i)(Y_i-\hat{g}(X_i,W_i))\Rp\Rs \\
    =~&  \frac{1}{\binom{n}{s}} \sum_{S_b:|S_b|=s} \E_{\xi_b}\Ls \sum_{i\in S_b} \tilde{K}(x,X_i,Z_{S_b},\xi_b) \Lp m(Z_i;\hat{g}(X_i,W_i)) + \hat{\alpha}(X_i,W_i)(Y_i-\hat{g}(X_i,W_i))\Rp\Rs
\end{align*}
is a complete $U$-statistic with aggregation function
\[
f(Z_{S_b}) = \E_{\xi_b}\Ls \sum_{i\in S_b} \tilde{K}(x,X_i,Z_{S_b},\xi_b) \Lp m(Z_i;\hat{g}(X_i,W_i)) + \hat{\alpha}(X_i,W_i)(Y_i-\hat{g}(X_i,W_i))\Rp\Rs.
\]
The aggregation function $f(Z_{S_b})$ is symmetric and bounded. 
By Hoeffding's concentration inequality for $U$-statistics, we have w.p. $1-\delta$,
\[
 |\Delta(\hat{g},\hat{\alpha})| = \La \frac{1}{\binom{n}{s}} \sum_{S_b:|S_b|=s} f(Z_{S_b}) - \E\Ls \frac{1}{\binom{n}{s}} \sum_{S_b:|S_b|=s} f(Z_{S_b})\Rs \Ra = O\Lp \sqrt{\frac{s\log(2/\delta)}{n}}\Rp.
\]
\end{proof}

\begin{lemma}[Kernel Error]   \label{lemma: kernel error}
Under Assumptions \ref{assump: honesty}, \ref{assump: kernel shrinkage} and \ref{assump: moment}.\ref{subassump: Lip of moments}, we have $\Gamma(\hat{g},\hat{\alpha}) = O(\epsilon(s))$.
\end{lemma}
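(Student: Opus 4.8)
The plan is to collapse $\Gamma(\hat g,\hat\alpha)$ onto a single sub-sample, use honesty to replace the estimated debiased moment by its conditional mean, and then close the bound with Lipschitz smoothness and kernel shrinkage. Throughout, $\hat g$ and $\hat\alpha$ are frozen (measurable with respect to the second half $Z_{n+1:2n}$), so I write $\psi_i \coloneqq m(Z_i;\hat g_i)+\hat\alpha_i(Y_i-\hat g_i)$ and $\mu(x') \coloneqq \E[\psi(Z;\hat\alpha,\hat g)\mid X=x']$. By Assumption~\ref{assump: moment}.\ref{subassump: Lip of moments}, the map $x'\mapsto \mu(x')$ is $L$-Lipschitz, and the second term of $\Gamma(\hat g,\hat\alpha)$ is exactly $\mu(x)$.

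First I would exploit the structure $K(x,X_i,Z_{1:n},\xi)=\tfrac1B\sum_{b=1}^B \tilde K(x,X_i,Z_{S_b},\xi_b)$, together with the fact that the $S_b$ are i.i.d.\ uniform size-$s$ subsets and the $Z_j$ are i.i.d.: since $\tilde K$ vanishes off $S_b$ and the $Z$'s are exchangeable, $\E\big[\sum_{i=1}^n K(x,X_i,Z_{1:n},\xi)\psi_i\big] = \E\big[\sum_{i\in S}\tilde K(x,X_i,Z_S,\xi_S)\psi_i\big]$ for the fixed set $S=\{1,\dots,s\}$, and by exchangeability within $S$ this equals $s\,\E[\tilde K(x,X_1,Z_S,\xi_S)\psi_1]$. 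The crucial step is then to condition on $X_1$ and $\{Z_j\}_{j\in S\setminus\{1\}}$: Assumption~\ref{assump: honesty} makes $\tilde K(x,X_1,Z_S,\xi_S)$ and $\psi_1$ conditionally independent, so the conditional expectation of the product factors; and because $\psi_1$ is a function of $Z_1$ alone while $Z_1$ is independent of $\{Z_j\}_{j\ne 1}$, its conditional mean collapses to $\E[\psi_1\mid X_1]=\mu(X_1)$. Hence $\E[\tilde K(x,X_1,Z_S,\xi_S)\psi_1]=\E[\tilde K(x,X_1,Z_S,\xi_S)\mu(X_1)]$, i.e.\ $\E\big[\sum_{i\in S}\tilde K\,\psi_i\big]=\E\big[\sum_{i\in S}\tilde K\,\mu(X_i)\big]$.

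Finally, I would use the normalization $\sum_{i\in S}\tilde K(x,X_i,Z_S,\xi_S)=1$ and nonnegativity of the base-kernel weights to write $\sum_{i\in S}\tilde K\,(\mu(X_i)-\mu(x))$, bound its absolute value by $L\max_{i:\tilde K>0}\|X_i-x\|$ using the Lipschitz property, take expectations, and apply Assumption~\ref{assump: kernel shrinkage} to obtain $|\Gamma(\hat g,\hat\alpha)|\le L\,\epsilon(s)=O(\epsilon(s))$.

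The main obstacle — indeed the only non-mechanical point — is the conditioning argument: one must carefully use that $\hat g,\hat\alpha$ are built from the independent half $Z_{n+1:2n}$, so that $\psi_1$ is genuinely a function of $Z_1$ only, which is what lets honesty factorize the product and forces $\E[\psi_1\mid X_1,\{Z_j\}_{j\ne1}]$ down to $\mu(X_1)$. A secondary, routine point is that the inequality $|\sum_{i\in S}\tilde K\,(\mu(X_i)-\mu(x))|\le L\max_{i:\tilde K>0}\|X_i-x\|$ relies on the weights being nonnegative and summing to one, both of which hold for the base kernel under consideration.
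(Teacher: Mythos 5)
Your proposal is correct and follows essentially the same route as the paper: reduce the expectation to a single size-$s$ subsample, use the tower law plus honesty (Assumption~\ref{assump: honesty}) and the independence of $\hat g,\hat\alpha$ from the first half to replace $\psi_i$ by its conditional mean $M(X_i)$, then apply the kernel normalization, the Lipschitz condition in Assumption~\ref{assump: moment}.\ref{subassump: Lip of moments}, and kernel shrinkage (Assumption~\ref{assump: kernel shrinkage}) to get the $L\,\epsilon(s)$ bound. The only cosmetic difference is that you further collapse to the index $i=1$ by exchangeability within $S$, whereas the paper keeps the sum over $i\in S_b$ averaged over subsets; this changes nothing in the argument.
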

\begin{proof}
We let $Z_{S_b}^{(i)}=\{Z_j\}_{j\in S_b/\{i\}}$, $\hat{g}_i= \hat{g}(X_i,W_i)$ and $\hat{\alpha}_i= \hat{\alpha}(X_i,W_i)$. It holds that
\begin{align*}
&\E\Ls \sum_{i=1}^n K(x,X_i,Z_{1:n},\xi)(m(Z_i;\hat{g}_i)+\hat{\alpha}_i(Y_i-\hat{g}_i))\Rs \\
&=  \frac{1}{\binom{n}{s}} \sum_{S_b:|S_b|=s} \sum_{i\in S_b}\E\Ls  \tilde{K}(x,X_i,Z_{S_b},\xi_b) \Lp m(Z_i;\hat{g}_i) + \hat{\alpha}_i(Y_i-\hat{g}_i)\Rp\Rs \\
&=  \frac{1}{\binom{n}{s}} \sum_{S_b:|S_b|=s} \sum_{i\in S_b}\E\Ls\E\Ls  \tilde{K}(x,X_i,Z_{S_b},\xi_b) \Lp m(Z_i;\hat{g}_i) + \hat{\alpha}_i(Y_i-\hat{g}_i)\Rp\mid X_i, Z_{S_b}^{(i)}\Rs\Rs (\text{tower law})\\
&=  \frac{1}{\binom{n}{s}} \sum_{S_b:|S_b|=s} \sum_{i\in S_b}\E\Ls\E\Ls  \tilde{K}(x,X_i,Z_{S_b},\xi_b) \mid X_i, Z_{S_b}^{(i)}\Rs \cdot \E\Ls  m(Z_i;\hat{g}_i) + \hat{\alpha}_i(Y_i-\hat{g}_i) \mid X_i, Z_{S_b}^{(i)}\Rs\Rs (\text{honesty})\\
&=  \frac{1}{\binom{n}{s}} \sum_{S_b:|S_b|=s} \sum_{i\in S_b}\E\Ls\E\Ls  \tilde{K}(x,X_i,Z_{S_b},\xi_b) \mid X_i, Z_{S_b}^{(i)}\Rs \cdot \E\Ls  m(Z_i;\hat{g}_i) + \hat{\alpha}_i(Y_i-\hat{g}_i) \mid X_i\Rs\Rs (\text{i.i.d. data}) \\
&=  \frac{1}{\binom{n}{s}} \sum_{S_b:|S_b|=s} \sum_{i\in S_b}\E\Ls \tilde{K}(x,X_i,Z_{S_b},\xi_b) \cdot \E\Ls  m(Z_i;\hat{g}_i) + \hat{\alpha}_i(Y_i-\hat{g}_i) \mid X_i \Rs\Rs (\text{tower law}).
\end{align*}
We define $M(x) = \E\Ls  m(Z_i;\hat{g}_i) + \hat{\alpha}_i(Y_i-\hat{g}_i) \mid X_i=x \Rs$. Since $\sum_{i\in S_b}\tilde{K}(x,X_i,Z_{S_b},\xi_b) = 1$, the kernel error $\Gamma(\hat{g},\hat{\alpha})$ can be written as
\begin{align*}
\Gamma(\hat{g},\hat{\alpha}) =  \frac{1}{\binom{n}{s}} \sum_{S_b:|S_b|=s} \sum_{i\in S_b}\E\Ls \tilde{K}(x,X_i,Z_{S_b},\xi_b) ( M(X_i)-M(x))\Rs.
\end{align*}
By Assumption \ref{assump: moment}.\ref{subassump: Lip of moments}, $M(x)$ is $L$-Lipschitz in $x$. Therefore, we have
\begin{align*}
\sum_{i\in S_b}\E\Ls \tilde{K}(x,X_i,Z_{S_b},\xi_b) ( M(X_i)-M(x))\Rs &\leq L \cdot \sum_{i\in S_b}\E\Ls \tilde{K}(x,X_i,Z_{S_b},\xi_b) \cdot \Lv X_i - x\Rv\Rs \\
& \leq L \cdot\E\Ls\max_i\left\{\|X_i-x\|: \tilde{K}(x,X_i,Z_{S_b},\xi_b)>0\right\}\Rs\\
&\leq L\epsilon(s),
\end{align*}
where the last inequality follows from the kernel shrinkage in Assumption \ref{assump: kernel shrinkage}. Averaging over the subsamples $S_b$, we complete our proof.
\end{proof}

\begin{lemma}[Nuisance Error]  \label{lemma: nuisance error}
The nuisance error $\Lambda(\hat{g},\hat{\alpha})$ can be bounded by $|\Lambda(\hat{g},\hat{\alpha})| \leq \mathcal{E}(\hat{g},\hat{\alpha})$, where $\mathcal{E}(\hat{g},\hat{\alpha})= \|\hat{g}-g_0\|_x\cdot \|\hat{\alpha}-\alpha_0\|_x.$
\end{lemma}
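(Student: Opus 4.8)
The plan is to recognize $\Lambda(\hat g,\hat\alpha)$ as exactly the conditional bias of the plug-in debiased moment functional, and then to invoke the double robustness identity already established in Proposition~\ref{prop: dr}. First I would observe that, by the definition of the target in \eqref{eq: target}, $\E\Ls m(Z_i;g_0(x,W_i))\mid X_i=x\Rs=\theta_0(x)$. Moreover, since $\hat g$ and $\hat\alpha$ are constructed from the second half of the sample $Z_{n+1:2n}$, they are fixed measurable functions when conditioning on a first-half observation, so conditioning on $X_i=x$ replaces $\hat g_i=\hat g(X_i,W_i)$ by $\hat g(x,W_i)$ and $\hat\alpha_i$ by $\hat\alpha(x,W_i)$. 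Hence
\[
\Lambda(\hat g,\hat\alpha)=\E\Ls m(Z;\hat g(x,W))+\hat\alpha(x,W)(Y-\hat g(x,W))\mid X=x\Rs-\theta_0(x)=\E\Ls\psi(Z;\hat\alpha,\hat g)\mid X=x\Rs-\theta_0(x).
\]

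Next I would apply Proposition~\ref{prop: dr} directly to this expression, which yields
\[
\Lambda(\hat g,\hat\alpha)=-\E\Ls(\hat\alpha(x,W)-\alpha_0(x,W))(\hat g(x,W)-g_0(x,W))\mid X=x\Rs.
\]
Taking absolute values and applying the Cauchy--Schwarz inequality for the conditional expectation given $X=x$ gives
\[
|\Lambda(\hat g,\hat\alpha)|\le \Lp\E[(\hat\alpha(x,W)-\alpha_0(x,W))^2\mid X=x]\Rp^{1/2}\Lp\E[(\hat g(x,W)-g_0(x,W))^2\mid X=x]\Rp^{1/2}=\norm{\hat\alpha-\alpha_0}_x\cdot\norm{\hat g-g_0}_x,
\]
which is exactly $\mathcal{E}(\hat g,\hat\alpha)$ by the definition of $\norm{\cdot}_x$.

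I expect no substantive obstacle here; the only point deserving a word of care is the applicability of Proposition~\ref{prop: dr}, whose proof uses the Riesz representation identity and therefore presumes $\hat g(x,\cdot),\hat\alpha(x,\cdot),g_0(x,\cdot),\alpha_0(x,\cdot)\in\mathcal{G}_x$. This is ensured by the boundedness in Assumption~\ref{assump: moment}.\ref{subassump: boundedness} together with the standing assumption that $\alpha_0\in\mathcal{G}_x$. Everything else is the routine bookkeeping of passing between the $X=x$-conditional expectation and the $\norm{\cdot}_x$ norm.
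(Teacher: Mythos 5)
Your proof is correct and takes essentially the same route as the paper: the paper simply re-derives the double-robustness identity inside the proof (using the Riesz representation of $m$ and the tower law to replace $Y$ by $g_0(x,W)$) and then applies H\"older/Cauchy--Schwarz, which is exactly the identity you obtain by citing Proposition~\ref{prop: dr}. Invoking Proposition~\ref{prop: dr} is just a tidier packaging of the same computation, so there is no substantive difference.
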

\begin{proof}
By definition, $\E[m(Z_i;g(x,W_i))\mid X_i=x] = \E\Ls \alpha_0(x,W_i) g(x,W_i)\mid X_i=x\Rs$ for any $g\in \mathcal{G}_x\,.$ Plugging this into $\Lambda(\hat{g},\hat{\alpha})$, we have
\begin{align*}
    \Lambda(\hat{g},\hat{\alpha}) &= \E\Ls m(Z_i;\hat{g}_i)+\hat{\alpha}_i(Y_i-\hat{g}_i)\mid X_i=x\Rs - \E\Ls m(Z_i;g_0(x,W_i))\mid X_i=x\Rs \\
    & = \E\Ls \alpha_0(x,W_i)(\hat{g}(x,W_i)-g_0(x,W_i) ) + \hat{\alpha}(x,W_i)(Y_i-\hat{g}(x,W_i))\mid X_i=x\Rs.
\end{align*}
Further observe that
\begin{align*}
    \E\Ls \hat{\alpha}(x,W_i) Y_i\mid X_i=x\Rs &= \E\Ls \E\Ls\hat{\alpha}(x,W_i) Y_i\mid X_i=x,W_i\Rs\mid X_i=x\Rs \\
    & = \E\Ls \hat{\alpha}(x,W_i) \cdot\E\Ls Y_i\mid X_i=x,W_i\Rs\mid X_i=x\Rs  = \E\Ls \hat{\alpha}(x,W_i) g_0(x,W_i) \mid X_i=x\Rs.
\end{align*}
As a result,
\begin{align*}
    \Lambda(\hat{g},\hat{\alpha}) = \E\Ls \Lp\alpha_0(x,W_i)-\hat{\alpha}(x,W_i)\Rp(\hat{g}(x,W_i)-g_0(x,W_i))\mid X_i=x\Rs.
\end{align*}
By Hölder's inequality, we get that
\begin{align*}
    \La\Lambda(\hat{g},\hat{\alpha})\Ra &\leq \sqrt{\E\Ls \Lp\hat{g}(x,W_i)-g_0(x,W_i)\Rp^2\mid X_i=x\Rs\cdot \E\Ls \Lp\hat\alpha(x,W_i)-\alpha_0(x,W_i)\Rp^2\mid X_i=x\Rs}\\
    &= \|\hat{g}-g_0\|_x\cdot \|\hat{\alpha}-\alpha_0\|_x.
\end{align*}
\end{proof}

\vspace{2mm}
\noindent\textit{Proof of Theorem \ref{thm: consistency}:}
By Lemma \ref{lemma: subsampling error} and Lemma \ref{lemma: sampling error}, w.p. $1-2\delta$,
\[
\La E(\hat{g},\hat{\alpha})\Ra + \La \Delta(\hat{g},\hat{\alpha})\Ra =  O\Lp \sqrt{\frac{\log(2/\delta)}{B}} + \sqrt{\frac{s\log(2/\delta)}{n}}\Rp.
\]
Under the conditions that $s=o(n)$ and $B\geq n/s$, $\La E(\hat{g},\hat{\alpha})\Ra + \La \Delta(\hat{g},\hat{\alpha})\Ra=o_p(1).$ By Lemma \ref{lemma: kernel error} and Lemma \ref{lemma: nuisance error}, $\La \Gamma(\hat{g},\hat{\alpha}) \Ra + \La \Lambda(\hat{g},\hat{\alpha})\Ra=o(1)$ under the assumptions that $\epsilon(s)\rightarrow 0$ and $\mathcal{E}(\hat{g},\hat{\alpha})\rightarrow 0$. Combining these terms, we conclude that
\begin{align*}
    \La\hat{\theta}(x)-\theta_0(x)\Ra \leq~& \La E(\hat{g},\hat{\alpha})\Ra + \La \Delta(\hat{g},\hat{\alpha})\Ra + \La \Gamma(\hat{g},\hat{\alpha}) \Ra + \La \Lambda(\hat{g},\hat{\alpha})\Ra \\=~& o_p(1).
\end{align*}

\subsection{Proof of Corollary \ref{coro:consistency_DRRF}}
It suffices to show that the error rate of the nuisance estimators in \eqref{equ: lasso} of the DRRF algorithm satisfies $\mathcal{E}(\hat{g},\hat{\alpha})=O( n^{\frac{1}{2}(\beta-1)})$.

\begin{lemma}[\cite{orf}] \label{pro: nu guarantee}
Suppose Assumption \ref{assump: nuisance} holds, and $\hat{\nu}_g(x)$ and $\hat{\nu}_\alpha(x)$ are estimated as described in Section \ref{subsubsec: nusaince estimation}. Then, with probability at least $1-2\delta$, we have
\begin{align*}
    &\Lv \hat{\nu}_g(x)-\nu_g(x)\Rv_2 \leq \frac{2\lambda_g k}{\gamma-32k\sqrt{s\ln(d_{\nu}/\delta)/n}}\,, \\
    &\Lv \hat{\nu}_\alpha(x)-\nu_\alpha(x)\Rv_2 \leq \frac{2\lambda_\alpha k}{\gamma-32k\sqrt{s\ln(d_{\nu}/\delta)/n}}\,,
\end{align*}
as long as $\lambda_g,\lambda_\alpha \geq \Theta\Lp s^{-1/(2a d)}+\sqrt{\frac{s\ln(d_{\nu}/\delta)}{n}}\Rp.$
\end{lemma}

Since $g_0(x,w)=h(w,\nu_g(x))$ and $\alpha_0(x,w)=h(w,\nu_\alpha(x))$ for some known function $h$ that is $L$-Lipschitz in its second argument, we have for any $x\in \mathcal{X}$ and $w\in \mathcal{W}$,
\begin{align*}
&\Lp\hat{g}(x,w)-g_0(x,w)\Rp^2 = \Lp h(w,\hat\nu_g(x))-h(w,\nu_g(x))\Rp^2 \leq L^2\cdot \Lv \hat\nu_g(x) - \nu_g(x)\Rv^2, \\
&\Lp\hat{\alpha}(x,w)-\alpha_0(x,w)\Rp^2 = \Lp h(w,\hat\nu_\alpha(x))-h(w,\nu_\alpha(x))\Rp^2 \leq L^2\cdot \Lv \hat\nu_\alpha(x) - \nu_\alpha(x)\Rv^2.
\end{align*}
As a result,
\begin{align*}
    \mathcal{E}(\hat{g},\hat{\alpha}) \lesssim \Lv \hat\nu_g(x) - \nu_g(x)\Rv_2\cdot \Lv \hat\nu_\alpha(x) - \nu_\alpha(x)\Rv_2.
\end{align*}

Let $\lambda=\max(\lambda_g,\lambda_\alpha)$. According to Lemma \ref{pro: nu guarantee}, with probability $1-2\delta$, 
\begin{align*}
    &\Lv \hat{\nu}_g(x)-\nu_g(x)\Rv_2 \leq \frac{2\lambda_g k}{\gamma-32k\sqrt{s\ln(d_{\nu}/\delta)/n}}\,, \\
    &\Lv \hat{\nu}_\alpha(x)-\nu_\alpha(x)\Rv_2 \leq \frac{2\lambda_\alpha k}{\gamma-32k\sqrt{s\ln(d_{\nu}/\delta)/n}}\,,
\end{align*}
as long as $\lambda_g,\lambda_\alpha \geq \Theta\Lp s^{-1/(2a d)}+\sqrt{\frac{s\ln(d_{\nu}/\delta)}{n}}\Rp.$

We take $\delta = \delta_n = n^{\frac{1}{2}(\beta-1)}$, and $\lambda_g,\lambda_\alpha = \Theta\Lp s^{-1/(2a d)}+\sqrt{\frac{s\ln( d_\nu/\delta_n)}{n}}\Rp$. Note that $d_\nu$ grows at a polynomial rate in $n$. Therefore, $\gamma-32k\sqrt{s\ln(n\cdot 2d_\nu/\delta_n)/n} \rightarrow \gamma$ when $n\rightarrow \infty$. Furthermore, we have $s^{-1/(2ad)} \le \sqrt{\frac{s}{n}}$ as $\beta > (1+\frac{1}{ad})^{-1}$. Therefore, it holds that
\begin{align*}
    \mathcal{E}(\hat{g},\hat{\alpha}) \lesssim n^{\frac{1}{2}(\beta-1)}\Lp \frac{s}{n}\ln\Lp d_\nu \cdot n^{\frac{1}{2}(1-\beta)}\Rp+1\Rp = O( n^{\frac{1}{2}(\beta-1)}). 
\end{align*}

\subsection{Proof of Theorem \ref{thm: asymptotic normality}}

Note that the sampling error $$\Delta(\hat{g},\hat{\alpha})=\Delta(g_0,\alpha_0) + \Delta(\hat{g},\hat{\alpha}) - \Delta(g_0,\alpha_0).$$
Let $\hat{g}_i=\hat{g}(X_i,W_i), g_{0,i}=g_0(X_i,W_i)$, $\hat{\alpha}_i=\hat{\alpha}(X_i,W_i)$ and $\alpha_{0,i}=\alpha_0(X_i,W_i)$.
The last two terms $\Delta(\hat{g},\hat{\alpha}) - \Delta(g_0,\alpha_0)$ form a complete $U$-statistic with aggregation function 
\begin{align*}
f(Z_{S_b})=~&\E_{\xi_b} \Ls \sum_{i \in S_b}\tilde{K}(x,X_i,Z_{S_b},\xi_b)\Lp m(Z_i;\hat{g}_i) + \hat{\alpha}_i(Y_i-\hat{g}_i) - m(Z_i;g_{0,i}) - \alpha_{0,i}(Y_i-g_{0,i})\Rp\Rs\\&-\E \Ls \sum_{i \in S_b}\tilde{K}(x,X_i,Z_{S_b},\xi_b)\Lp m(Z_i;\hat{g}_i) + \hat{\alpha}_i(Y_i-\hat{g}_i) - m(Z_i;g_{0,i}) - \alpha_{0,i}(Y_i-g_{0,i})\Rp\Rs,
\end{align*}
which is symmetric and bounded.
By the Bernestein's inequality for $U$-statistics \citep{peel2010empirical}, we get that $$|\Delta(\hat{g},\hat{\alpha}) - \Delta(g_0,\alpha_0)| = O_p \Lp \sqrt{\frac{s\E[f(Z_{S_b})^2]}{n}}+\frac{s}{n}\Rp.$$

Note that 
\small
\begin{align*}
&\E\Ls f(Z_{S_b})^2\Rs\\ \le~&\E \Ls \sum_{i \in S_b}\tilde{K}(x,X_i,Z_{S_b},\xi_b)\Lp m(Z_i;\hat{g}_i) + \hat{\alpha}_i(Y_i-\hat{g}_i) - m(Z_i;g_{0,i}) - \alpha_{0,i}(Y_i-g_{0,i})\Rp^2\Rs (\text{Jensen's inequality})\\ =~&\E \Ls \sum_{i \in S_b}\E\Ls\tilde{K}(x,X_i,Z_{S_b},\xi_b)(m(Z_i;\hat{g}_i) + \hat{\alpha}_i(Y_i-\hat{g}_i) - m(Z_i;g_{0,i}) - \alpha_{0,i}(Y_i-g_{0,i}))^2\mid X_i,Z_{S_b}^{(i)}\Rs\Rs  (\text{tower law})\\ =~&\E \Ls \sum_{i \in S_b}\E\Ls\tilde{K}(x,X_i,Z_{S_b},\xi_b)\mid X_i,Z_{S_b}^{(i)}\Rs\E\Ls(m(Z_i;\hat{g}_i) + \hat{\alpha}_i(Y_i-\hat{g}_i) - m(Z_i;g_{0,i}) - \alpha_{0,i}(Y_i-g_{0,i}))^2\mid X_i,Z_{S_b}^{(i)}\Rs\Rs  (\text{honesty})\\ =~&\E \Ls \sum_{i \in S_b}\E\Ls\tilde{K}(x,X_i,Z_{S_b},\xi_b)\mid X_i,Z_{S_b}^{(i)}\Rs\E\Ls(m(Z_i;\hat{g}_i) + \hat{\alpha}_i(Y_i-\hat{g}_i) - m(Z_i;g_{0,i}) - \alpha_{0,i}(Y_i-g_{0,i}))^2\mid X_i\Rs\Rs (\text{i.i.d data})\\ =~& \E \Ls \sum_{i \in S_b}\tilde{K}(x,X_i,Z_{S_b},\xi_b)\E\Ls(m(Z_i;\hat{g}_i) + \hat{\alpha}_i(Y_i-\hat{g}_i) - m(Z_i;g_{0,i}) - \alpha_{0,i}(Y_i-g_{0,i}))^2\mid X_i\Rs\Rs (\text{tower law}).
\end{align*}
\normalsize
By the local mean-squared continuity condition in Assumption \ref{assump: msc},  
\begin{align*}
    &\E\Ls(m(Z_i;\hat{g}_i)+\hat{\alpha}_i(Y_i-\hat{g}_i) - m(Z_i;g_{0,i})-\alpha_{0,i}(Y_i-g_{0,i}))^2\mid X_i\Rs \\
    \leq~&  L\cdot \E\Ls(\hat{g}_i-g_{0,i})^2+(\hat{\alpha}_i-\alpha_{0,i})^2\mid X_i\Rs.
\end{align*}
Furthermore, since $g_0(x,w)=h(w,\nu_g(x))$ and $\alpha_0(x,w)=h(w,\nu_\alpha(x))$ with some known function $h$ that is $L$-Lipschitz in the parameter $\nu$, we have for any $x\in \mathcal{X}$ and $w\in \mathcal{W}$,
\begin{align*}
&\Lp\hat{g}(x,w)-g_0(x,w)\Rp^2 = \Lp h(w,\hat\nu_g(x))-h(w,\nu_g(x))\Rp^2 \leq L\cdot \Lv \hat\nu_g(x) - \nu_g(x)\Rv^2, \\
&\Lp\hat{\alpha}(x,w)-\alpha_0(x,w)\Rp^2 = \Lp h(w,\hat\nu_\alpha(x))-h(w,\nu_\alpha(x))\Rp^2 \leq L\cdot \Lv \hat\nu_\alpha(x) - \nu_\alpha(x)\Rv^2.
\end{align*}
As a result, $\E\Ls f(Z_{S_b})^2\Rs$ can be further bounded by
\begin{align*}
\E\Ls f(Z_{S_b})^2\Rs \lesssim~& \E \Ls \sum_{i \in S_b}\tilde{K}(x,X_i,Z_{S_b},\xi_b)\Lp \Lv \hat\nu_g(X_i) - \nu_g(X_i)\Rv^2 + \Lv \hat\nu_\alpha(X_i) - \nu_\alpha(X_i)\Rv^2\Rp\Rs\\
=~&\E\left[\E\left[\sum_{i \in S_b}\tilde{K}(x,X_i,Z_{S_b},\xi_b)\Lp \Lv \hat\nu_g(X_i) - \nu_g(X_i)\Rv^2 + \Lv \hat\nu_\alpha(X_i) - \nu_\alpha(X_i)\Rv^2\Rp\mid Z_{1:n}\right]\right].
\end{align*}

Note that we estimate ${\nu}_g$ and ${\nu}_\alpha$ using only the sample points $Z_{n+1},\ldots,Z_{2n}$, so $\hat{\nu}_g(\cdot)$ and $\hat{\nu}_\alpha(\cdot)$ are independent of $Z_i$ and $\tilde{K}(x,X_i,Z_{S_b},\xi_b)$ for $i=1,\ldots, n.$
Since for any fixed $x$, with probability at least $1-\delta_n$, where $\delta_n$ possibly depends on $n$,
$\Lp\|\hat{\nu}_g(x) - \nu_g(x)\|^2 + \|\hat{\nu}_\alpha(x) - \nu_\alpha(x)\|^2\Rp^{1/2}\leq r_n(\delta_n),$ we have
\begin{align*}
\E\left[\sum_{i \in S_b}\tilde{K}(x,X_i,Z_{S_b},\xi_b)\Lp \Lv \hat\nu_g(X_i) - \nu_g(X_i)\Rv^2 + \Lv \hat\nu_\alpha(X_i) - \nu_\alpha(X_i)\Rv^2\Rp\mid Z_{1:n}\right] \leq~ M\delta_n + r_n^2(\delta_n/n)\,
\end{align*}
for some constant $M$.

Therefore, $\E\Ls f(Z_{S_b})^2\Rs = O\Lp \delta_n + r_n^2(\delta_n/n)\Rp,$ and
 $$\Delta(\hat{g},\hat{\alpha}) - \Delta(g_0,\alpha_0)=O_p \Lp \sqrt{\frac{s\Lp \delta_n + r_n^2(\delta_n/n)\Rp}{n}}+\frac{s}{n}\Rp =O_p\Lp\frac{s}{n}+\delta_n+r_n^2(\delta_n/n)\Rp,
$$
where we have used $$\sqrt{\frac{s\Lp \delta_n + r_n^2(\delta_n/n)\Rp}{n}}\le \frac{1}{2}\Lp\frac{s}{n}+\delta_n+r_n^2(\delta_n/n)\Rp.$$

Note that the above bound dominates the subsampling error $E(\hat{g},\hat{\alpha})$ when $B \geq (n/s)^2$. In addition, the product error $\mathcal{E}(\hat{g},\hat{\alpha})$ can be bounded by
\[
\mathcal{E}(\hat{g},\hat{\alpha}) \leq L\cdot \Lv \hat\nu_g(x) - \nu_g(x)\Rv \cdot \Lv \hat\nu_\alpha(x) - \nu_\alpha(x)\Rv = O\Lp r_n^2(\delta_n/n)+\delta_n \Rp.
\]
Therefore, from the proof of Theorem \ref{thm: consistency}, we have that 
$$\La \hat{\theta}(x)-\theta_0(x)\Ra = \Delta(g_0,\alpha_0)+ O_p\Lp\frac{s}{n}+\delta_n+r_n^2(\delta_n/n)+\epsilon(s)\Rp,$$ for any $\delta_n > 0$ that may depend on $n$.

The leading term $\Delta(g_0,\alpha_0)$ is of the form $U_n-\E[U_n]$, where $U_n$ is  a complete $U$-statistic with aggregation function $$f(Z_{S_b})=\E_{\xi_b}\Ls \sum_{i\in S_b} \tilde K(x,X_i,Z_{S_b},\xi_b)\psi(Z_i;\alpha_0,g_0)\Rs.$$

To establish the asymptotic normality of $U_n$, we employ the following CLT result for $U$-statistics:

\begin{lemma}
Let $\rho(z)=\E[f(z,Z_2,...,Z_s)]$ and let $\eta(s)=Var(\rho(Z_1))$. Suppose that $n\eta(s)\rightarrow \infty$ and that $Var(f(Z_S))<\infty$. Let $U_n$ be the complete $U$-statistic of order $s$ associated with aggregation function $f$, then $$\sqrt{\frac{n}{s^2\eta(s)}}(U_n-\E[U_n]) \xrightarrow{d}N(0,1).$$
\end{lemma}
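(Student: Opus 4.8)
The plan is to prove the lemma by the classical H\'ajek projection argument for $U$-statistics, adapted to the triangular-array regime where both the order $s=s_n$ and the kernel $f=f_n$ vary with $n$. First I would reduce to the centered case, replacing $f$ by $f-\theta$ with $\theta\coloneqq\E[f(Z_S)]$, so that without loss of generality $\E[U_n]=0$ and $\E[\rho(Z_1)]=0$, with $\Var(\rho(Z_1))=\eta(s)$. For a symmetric complete $U$-statistic of order $s$ one has $\E[U_n\mid Z_i]=\tfrac{s}{n}\rho(Z_i)$ --- a fixed index lies in a size-$s$ subset with probability $s/n$, and conditioning on $Z_i$ kills the subsets avoiding $i$. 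Hence the H\'ajek projection is
\[
\hat U_n \;=\;\sum_{i=1}^n\E[U_n\mid Z_i]\;=\;\frac{s}{n}\sum_{i=1}^n\rho(Z_i),
\]
an average of i.i.d.\ mean-zero variables with $\Var(\hat U_n)=\tfrac{s^2}{n^2}\cdot n\,\eta(s)=\tfrac{s^2\eta(s)}{n}$, exactly the target normalizing variance.

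Second I would prove a CLT for the projection: $\sqrt{n/(s^2\eta(s))}\,\hat U_n=(n\eta(s))^{-1/2}\sum_{i=1}^n\rho(Z_i)\xrightarrow{d}N(0,1)$. Being a normalized sum of i.i.d.\ triangular-array terms, it suffices to verify the Lindeberg condition $\eta(s)^{-1}\E\!\big[\rho(Z_1)^2\mathbbm 1\{\rho(Z_1)^2>\varepsilon\,n\,\eta(s)\}\big]\to0$ for every $\varepsilon>0$; since $n\eta(s)\to\infty$, this holds provided $\{\rho(Z_1)^2/\eta(s)\}_n$ is uniformly integrable --- which in the application is immediate from boundedness of $\psi$ (Assumption \ref{assump: moment}.\ref{subassump: boundedness}), as then $|\rho|\le\psi_{\max}$ and the indicator is eventually identically zero, and more generally follows from a uniform $(2+\varepsilon)$-moment bound on $\rho/\sqrt{\eta(s)}$.

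Third --- the crux --- I would show the projection captures all of $U_n$ asymptotically: $\E[(U_n-\hat U_n)^2]=o\!\big(s^2\eta(s)/n\big)$. Using the orthogonal Hoeffding (ANOVA) decomposition $U_n-\theta=\sum_{c=1}^s\binom sc U_n^{(c)}$ into uncorrelated completely degenerate $U$-statistics $U_n^{(c)}$ of order $c$ with $\Var(U_n^{(c)})=\binom nc^{-1}\delta_c$, where the canonical variances satisfy $\delta_1=\eta(s)$ and $\delta_c\le\Var(f)$ (since $\Var(f)=\sum_c\binom sc\delta_c\ge s\,\eta(s)$), one gets $U_n-\hat U_n=\sum_{c=2}^s\binom sc U_n^{(c)}$ and $\E[(U_n-\hat U_n)^2]=\sum_{c=2}^s\binom sc^2\binom nc^{-1}\delta_c\le\Var(f)\sum_{c=2}^s\binom sc^2\binom nc^{-1}$. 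Note first that $\Var(f)=O(1)$ together with $\Var(f)\ge s\,\eta(s)$ and $n\eta(s)\to\infty$ already force $s=o(n)$; the crude bound $\binom sc^2\binom nc^{-1}\le(2s^2/n)^c/c!$ then gives $\E[(U_n-\hat U_n)^2]=O\!\big(\Var(f)(s^2/n)^2\big)$ when $s^2=o(n)$, which upon dividing by $s^2\eta(s)/n$ leaves $O\!\big(s^2/(n\eta(s))\big)$. I expect this last step to be the main obstacle: when $s$ is not small relative to $\sqrt n$ the crude estimate fails, and one must instead exploit the structure specific to the DRRF kernel --- honesty and the double-sampling scheme of Section \ref{subsec: drrf} --- together with the accompanying growth conditions of Theorem \ref{thm: asymptotic normality}, to show directly that the higher-order ANOVA variances $\delta_c$ ($c\ge2$) are small enough that $\sum_{c\ge2}\binom sc^2\binom nc^{-1}\delta_c=o(s^2\eta(s)/n)$.

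Finally, combining the pieces by Slutsky's theorem,
\[
\sqrt{\tfrac{n}{s^2\eta(s)}}\big(U_n-\E[U_n]\big)=\sqrt{\tfrac{n}{s^2\eta(s)}}\,\hat U_n+\sqrt{\tfrac{n}{s^2\eta(s)}}\,(U_n-\hat U_n)\xrightarrow{d}N(0,1),
\]
the first term converging to $N(0,1)$ by Step 2 and the second being $o_p(1)$ by Step 3 and Chebyshev's inequality.
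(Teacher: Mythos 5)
The paper offers no argument of its own here---it defers entirely to Theorem 2 of \cite{fan2018dnn}---so your self-contained H\'ajek-projection proof is a reasonable substitute, and Steps 1 and 2 are sound (the Lindeberg condition does require the uniform-integrability caveat you flag, since the lemma's bare hypotheses do not imply it; in the application $|f|\le\psi_{\max}$ makes it trivial). The genuine gap is Step 3, and your proposed repair points in the wrong direction. The crude bound $\delta_c\le\Var(f)$ leaves you needing $s^2/(n\eta(s))\to 0$, which fails precisely in the regime where the lemma is used: Theorems \ref{thm: asymptotic normality} and \ref{thm: asymptotic normality for DRRF} take $s=\Theta(n^{\beta})$ with $\beta>(1+\frac{1}{\alpha d})^{-1}$, typically well above $1/2$, while $\eta(s)$ is only of order $1/(s\log^d s)$. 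Moreover, the escape you suggest---invoking honesty and the DRRF double-sampling scheme to control the higher-order canonical variances---is unavailable at the level of this lemma, which is stated for an arbitrary complete $U$-statistic, and it is also unnecessary.

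The fix is a sharper grouping of quantities you already wrote down. Since $\Var(f)=\sum_{c=1}^{s}\binom{s}{c}\delta_c$, you have $\binom{s}{c}\delta_c\le\Var(f)$ for every $c$, and $\binom{s}{c}\big/\binom{n}{c}=\prod_{j=0}^{c-1}\frac{s-j}{n-j}\le(s/n)^{c}$. Hence
\begin{align*}
\E\big[(U_n-\hat U_n)^2\big]\;=\;\sum_{c=2}^{s}\binom{s}{c}^{2}\binom{n}{c}^{-1}\delta_c
\;\le\;\Var(f)\sum_{c=2}^{s}\Big(\frac{s}{n}\Big)^{c}
\;\le\;\Var(f)\,\frac{(s/n)^2}{1-s/n},
\end{align*}
and dividing by the target variance $s^2\eta(s)/n$ leaves $O\!\big(\Var(f)/(n\eta(s))\big)$. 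The condition you actually need is therefore $\Var(f)/(n\eta(s))\to 0$, which holds under the lemma's hypotheses once $\Var(f)$ is uniformly bounded (as in the application, where $|f|\le\psi_{\max}$), given $n\eta(s)\to\infty$; and your own observation $\eta(s)\le\Var(f)/s$ combined with $n\eta(s)\to\infty$ already forces $s=o(n)$, so the factor $(1-s/n)^{-1}$ is harmless. With this replacement, Step 3 closes in the full regime $s=\Theta(n^{\beta})$, $\beta<1$, and the remainder of your argument (Lindeberg CLT for the projection plus Slutsky) goes through unchanged.
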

\begin{proof}
The proof follows identical steps as the one in Theorem 2 of \cite{fan2018dnn}.
\end{proof}

Since $f(Z_{S_b})=\E_{\xi_b}\Ls \sum_{i\in S_b} \tilde{K}(x,X_i,Z_{S_b},\xi_b)\psi(Z_i;\alpha_0,g_0)\Rs$ is uniformly bounded by Assumption \ref{assump: moment}.1, its variance is also bounded.
By Lemma 5, when $n\eta(s) \rightarrow \infty\,,$ we have that  $$\sqrt{\frac{n}{s^2\eta(s)}}\Delta(g_0,\alpha_0) \xrightarrow{d}N(0,1)\,.$$

Note that under the following four conditions: (1) $n\eta(s) \rightarrow \infty$; (2) $\sqrt{n/(s^2\eta(s))}\cdot \delta_n \rightarrow 0$; (3) $\sqrt{n/(s^2\eta(s))}\cdot r_n^2(\delta_n/n) \rightarrow 0$; (4) $\sqrt{n/(s^2\eta(s))}\cdot \epsilon(s)\rightarrow0$,
we have that $$\sqrt{\frac{n}{s^2\eta(s)}}\cdot O_p\Lp\frac{s}{n}+\delta_n+r_n^2(\delta_n/n)+\epsilon(s)\Rp=o_p(1),$$ from which Slutsky's lemme gives $$\sqrt{\frac{n}{s^2\eta(s)}}(\hat{\theta}(x)-\theta_0(x)) \xrightarrow{d}N(0,1).$$

\vspace{3mm}
\subsection{Proof of Theorem \ref{thm: asymptotic normality for DRRF}}

We first show that the two conditions $n\eta(s) \rightarrow \infty$ and $n\epsilon^2(s)/(s^2\eta(s))\rightarrow 0$ in Theorem \ref{thm: asymptotic normality} are satisfied under the doubly robust random forest algorithm described in Section \ref{subsec: drrf}.

\begin{proposition} \label{pro: rate conditions}
    Suppose that the base kernel $\tilde K$ is constructed from a DRRF that satisfies Assumption \ref{assump: forest}. Additionally, suppose there exists a strictly positive $\underline\sigma$ such that $min_x Var(\psi(Z;\alpha_0,g_0)\mid X=x)\ge\underline{\sigma}^2>0\,.$ Set $s=\Theta(n^{\beta})$, where $\Lp 1+\frac{1}{\alpha d}\Rp^{-1}<\beta<1$. Then, $\gamma(s)=w\Lp\epsilon(s)/s\Rp$, and the conditions in Theorem \ref{thm: asymptotic normality} that $n\eta(s) \rightarrow \infty$ and $n\epsilon^2(s)/(s^2\eta(s))\rightarrow 0$ hold.
\end{proposition}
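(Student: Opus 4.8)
The plan is to (i) lower‑bound $\eta(s)$ by a constant multiple of $\gamma(s)$ minus a lower‑order term, via an honesty‑based decomposition of the first‑order projection $\rho(Z_1)\coloneqq\E[\sum_{i=1}^s\tilde K(x,X_i,Z_S,\xi_S)\,\psi(Z_i;\alpha_0,g_0)\mid Z_1]$ (so $\eta(s)=\Var(\rho(Z_1))$); (ii) establish $\gamma(s)=\Theta(1/s)$ from the forest regularity of Assumption \ref{assump: forest}; and (iii) read off the two rate conditions from $s=\Theta(n^\beta)$.

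Decompose $\psi(Z_i;\alpha_0,g_0)=\theta_0(X_i)+\varepsilon_i$, where $\E[\varepsilon_i\mid X_i]=0$ and $\E[\varepsilon_i^2\mid X_i]=\Var(\psi(Z_i;\alpha_0,g_0)\mid X_i)\ge\underline\sigma^2$. Because the DRRF kernel weights never depend on $(Y_i,W_i)$, honesty gives $\tilde K(x,X_i,Z_S,\xi_S)\indep\varepsilon_i\mid X_i,\{Z_j\}_{j\in S\setminus\{i\}}$; iterating expectations then annihilates all $i\ge 2$ noise contributions and collapses the $i=1$ noise contribution to $\varepsilon_1\bar K(X_1)$, where $\bar K(X_1)\coloneqq\E[\tilde K(x,X_1,Z_S,\xi_S)\mid X_1]$. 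Hence $\rho(Z_1)=A(Z_1)+\varepsilon_1\bar K(X_1)$, with $A(Z_1)\coloneqq\E[\sum_i\tilde K(x,X_i,Z_S,\xi_S)\theta_0(X_i)\mid Z_1]$ the ``signal'' part. Discarding $\Var(A(Z_1))\ge 0$,
\[
\eta(s)\ \ge\ \Var\!\big(\varepsilon_1\bar K(X_1)\big)+2\,\mathrm{Cov}\!\big(A(Z_1),\varepsilon_1\bar K(X_1)\big)\ \ge\ \underline\sigma^2\gamma(s)-2\,\big|\mathrm{Cov}\!\big(A(Z_1),\varepsilon_1\bar K(X_1)\big)\big|,
\]
where $\Var(\varepsilon_1\bar K(X_1))=\E[\bar K(X_1)^2\E[\varepsilon_1^2\mid X_1]]\ge\underline\sigma^2\E[\bar K(X_1)^2]=\underline\sigma^2\gamma(s)$. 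For the covariance, split $A$ by whether sample $1$ lies in the splitting half $S^1$ or the weighting half $S^2$: on $\{1\in S^2\}$ the tree is grown from $S^1$ and is independent of $Z_1$, so this piece depends on $Z_1$ only through $X_1$ and is uncorrelated with $\varepsilon_1$ given $X_1$; on $\{1\in S^1\}$, after subtracting its (constant) conditional mean, this piece is at most $L\,\E[\sum_i\tilde K(x,X_i,Z_S,\xi_S)\|X_i-x\|\mid Z_1]$ in absolute value by the $L$‑Lipschitzness of $\theta_0$ (Assumption \ref{assump: moment}.\ref{subassump: Lip of moments}) and $\sum_i\tilde K(x,X_i,Z_S,\xi_S)=1$. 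Combining with the kernel‑shrinkage bound of Assumption \ref{assump: kernel shrinkage} in its conditional‑on‑$X_1$ form (valid since the density of $X$ is bounded away from $0$ and $\infty$) and with $\E[\bar K(X_1)]=1/s$ yields $|\mathrm{Cov}(A(Z_1),\varepsilon_1\bar K(X_1))|=O(\epsilon(s)/s)$, so $\eta(s)\ge\underline\sigma^2\gamma(s)-O(\epsilon(s)/s)$.

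Next, $\E[\bar K(X_1)]=1/s$ by the normalization $\sum_{i\in S}\tilde K(x,X_i,Z_S,\xi_S)=1$ and symmetry over $i\in S$, and $\bar K(X_1)\le 1/r$ since every leaf holds $\ge r$ points of $S^2$, so $\gamma(s)\le 1/(rs)$. For a matching lower bound, whenever $\tilde K(x,X_1,Z_S,\xi_S)>0$ it equals $1/|S^2\cap L(x)|\ge 1/(2r-1)$ (with $L(x)$ the leaf containing $x$), so $\bar K(X_1)\ge\tfrac1{2r-1}\PP(1\in S^2,\,X_1\in L(x)\mid X_1)$; applying $\E[Y^2]\ge\E[YZ]^2/\E[Z^2]$ with $Y=\PP(1\in S^2,\,X_1\in L(x)\mid X_1)$ and $Z=\mathbbm{1}(\|X_1-x\|\le c\,s^{-1/d})$ for a suitable constant $c$, the leaf‑diameter/leaf‑mass concentration furnished by the $\rho$‑balanced, $\pi$‑random‑split, minimum‑leaf‑size regularity (as in \cite{wager2018estimation,orf}) shows a point this close to $x$ shares $x$'s leaf with probability bounded below, hence $\E[YZ]=\Omega(1/s)$ while $\E[Z^2]=\E[Z]=O(1/s)$, giving $\gamma(s)=\Omega(1/s)$. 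Thus $\gamma(s)=\Theta(1/s)$, so $\gamma(s)/(\epsilon(s)/s)=\Theta(1/\epsilon(s))=\Theta(s^{1/(2\alpha d)})\to\infty$, i.e.\ $\gamma(s)=\omega(\epsilon(s)/s)$, and feeding this back into step (i), $\eta(s)\ge\underline\sigma^2\gamma(s)-o(\gamma(s))\ge\tfrac12\underline\sigma^2\gamma(s)=\Omega(1/s)$ for all large $s$. Finally, with $s=\Theta(n^\beta)$ and $(1+\tfrac1{\alpha d})^{-1}<\beta<1$: $n\eta(s)=\Omega(n^{1-\beta})\to\infty$, and, since $\epsilon^2(s)=O(s^{-1/(\alpha d)})$, $\ \frac{n\epsilon^2(s)}{s^2\eta(s)}=O(n\epsilon^2(s)/s)=O\big(n^{\,1-\beta(1+1/(\alpha d))}\big)\to 0$ precisely because $\beta>(1+\tfrac1{\alpha d})^{-1}$.

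The conceptual core is the honesty identity $\rho(Z_1)=A(Z_1)+\varepsilon_1\bar K(X_1)$ and the resulting variance floor $\underline\sigma^2\gamma(s)$; the work lies in the two quantitative forest‑geometry estimates: the refined $O(\epsilon(s)/s)$ control of $\mathrm{Cov}(A,\varepsilon_1\bar K(X_1))$ — which must exploit that $\bar K$ is typically of order $1/s$, not merely $\le 1/r$, so that raw kernel shrinkage does not suffice — and the lower bound $\gamma(s)=\Omega(1/s)$, which relies on leaf‑diameter concentration under the forest regularity conditions. I expect these two steps to be the main obstacles.
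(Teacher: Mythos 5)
Your overall architecture differs from the paper's in a nice way: instead of the paper's route (split $\E[T\mid Z_1]$ by the event $1\in S^1$ vs.\ $1\in S^2$, apply a ``strong honesty'' variance lemma to the $S^2$ piece, and kill the $S^1$ piece via a Hoeffding/ANOVA decomposition of the $U$-statistic, giving a negative term $O(\epsilon(s)^2/s)$), you decompose $\psi=\theta_0(X)+\varepsilon$ and write the first-order projection as $\rho(Z_1)=A(Z_1)+\varepsilon_1\bar K(X_1)$, getting the floor $\underline\sigma^2\gamma(s)$ directly from the noise term and controlling the signal--noise covariance at $O(\epsilon(s)/s)$. That identity is legitimate for DRRF (it uses the double-sample construction, i.e.\ $\E[\tilde K_1\mid Z_1]=\E[\tilde K_1\mid X_1]$ and the vanishing of the $i\ge 2$ noise terms, which is more than Assumption \ref{assump: honesty} alone but holds here), and your cross-term bound invokes a conditional/uniform form of kernel shrinkage — the same facility the paper itself uses when it asserts $\Var(\E[T\mid\{Z_i\}_{i\in S^1}])=O(\epsilon(s)^2)$. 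Up to that point your argument is correct and arguably cleaner, and the weaker $O(\epsilon(s)/s)$ remainder is still negligible relative to $\gamma(s)$.

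The genuine gap is your claim $\gamma(s)=\Omega(1/s)$, hence $\gamma(s)=\Theta(1/s)$. The forest regularity conditions (Assumption \ref{assump: forest}) do not furnish the statement you need — that a point within distance $c\,s^{-1/d}$ of $x$ lands in $x$'s leaf with probability bounded below by a constant. The leaf containing $x$ is produced by $\Theta(\log s)$ data-dependent splits, and no cited result controls the probability that none of them separates a given nearby $X_1$ from $x$ by a constant; this is precisely where the known incrementality bound loses a polylog factor. The paper instead cites Lemma 4 of \cite{wager2018estimation}, which gives only $\gamma(s)\gtrsim 1/(s\log(s)^d)$, and that is all it claims. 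Your conclusion survives if you substitute this weaker bound: $\gamma(s)=\omega(\epsilon(s)/s)$ still holds because $\epsilon(s)=O(s^{-1/(2\alpha d)})$ decays polynomially while the loss is only $\log(s)^d$; your covariance remainder $O(\epsilon(s)/s)$ is still $o(\gamma(s))$; and with $s=\Theta(n^\beta)$ the conditions $n\eta(s)\gtrsim n/(s\log(s)^d)\rightarrow\infty$ and $n\epsilon^2(s)/(s^2\eta(s))\lesssim n\epsilon^2(s)\log(s)^d/s\rightarrow 0$ follow for $(1+\tfrac{1}{\alpha d})^{-1}<\beta<1$, since the extra $\log^d$ factors are dominated by the polynomial slack. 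So you should either prove the $\Omega(1/s)$ bound (which is not available from the assumptions and references at hand) or, as the paper does, replace it by the $1/(s\log(s)^d)$ bound and redo the last two rate displays with the log factors carried along.
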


We first prove Proposition \ref{pro: rate conditions}.
Note that when the kernel $\tilde{K}$ is obtained from a DRRF that satisfies Assumption \ref{assump: forest}, \cite{wager2018estimation} shows that $\epsilon(s)=O(s^{-\frac{1}{2ad}})$, where $a$ is a positive constant that depends on the parameters specified in Assumption \ref{assump: forest}. Moreover, Lemma 4 of \cite{wager2018estimation} proves that $\gamma(s)\gtrsim \frac{1}{s\log(s)^d}$. Hence, $\gamma(s)=w\Lp\epsilon(s)/s\Rp$.

To prove the two conditions, $n\eta(s) \rightarrow \infty$ and $n\epsilon^2(s)/(s^2\eta(s))\rightarrow 0$, hold, we first formalize the statement that $\eta(s)$ can be lower bounded by a constant multiple of $\gamma(s)$ in Lemma \ref{lemma: eta and gamma}. To prove Lemma \ref{lemma: eta and gamma}, we begin with Lemma \ref{lemma:strong_honesty}.

\begin{lemma}  \label{lemma:strong_honesty}
Suppose $\{(K_i,H_i,X_i)\}_{i=1}^s$ is a set of random variables.
Assume the following conditions hold: 
\[
K_1 \perp \!\!\! \perp H_1 \mid X_1, \] and $$\mathbb{E}[K_i H_i \mid H_1, X_1] = \mathbb{E}[K_i H_i \mid X_1]\quad \text{for} \quad i > 1. $$
Additionally, suppose there exists a strictly positive $\sigma$ such that $min_x Var(H_1\mid X_1=x)\ge {\sigma}^2>0$. Let \( T := \sum_{i=1}^s K_i H_i \). Furthermore, assume  $\E[T\mid Z_1]=\E[T\mid H_1,X_1]$. Then,
\[
\Var \Lp \E\Ls T\mid Z_1 \Rs\Rp \geq \sigma^2 \cdot\E\Ls \E[K_1\mid X_1]^2\Rs.
\]

\end{lemma}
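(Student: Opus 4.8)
\textbf{Proof proposal for Lemma \ref{lemma:strong_honesty}.}
The plan is to compute $\E[T\mid Z_1]$ in closed form, show it equals an affine function of $H_1$ with $X_1$-measurable coefficients, and then extract the claimed lower bound via the law of total variance conditioned on $X_1$. First I would use the hypothesis $\E[T\mid Z_1]=\E[T\mid H_1,X_1]$ together with linearity of conditional expectation to write
\[
\E[T\mid Z_1] \;=\; \E[K_1H_1\mid H_1,X_1] \;+\; \sum_{i=2}^{s}\E[K_iH_i\mid H_1,X_1].
\]
For the $i=1$ term, since $H_1$ is $\sigma(H_1,X_1)$-measurable I pull it out, $\E[K_1H_1\mid H_1,X_1]=H_1\,\E[K_1\mid H_1,X_1]$, and then invoke the conditional independence $K_1\perp\!\!\!\perp H_1\mid X_1$ to reduce this to $H_1\,\E[K_1\mid X_1]$. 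For each $i>1$ the hypothesis $\E[K_iH_i\mid H_1,X_1]=\E[K_iH_i\mid X_1]$ shows the term is a function of $X_1$ alone. Writing $a(X_1):=\E[K_1\mid X_1]$ and $g(X_1):=\sum_{i=2}^s\E[K_iH_i\mid X_1]$, we obtain $\E[T\mid Z_1]=a(X_1)H_1+g(X_1)$.

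Next I would apply the law of total variance with respect to $X_1$: $\Var(\E[T\mid Z_1])\ge \E\big[\Var(\E[T\mid Z_1]\mid X_1)\big]$. Conditioning on $X_1$, both $a(X_1)$ and $g(X_1)$ are constants, so $\Var(\E[T\mid Z_1]\mid X_1)=a(X_1)^2\,\Var(H_1\mid X_1)$. The assumption $\min_x \Var(H_1\mid X_1=x)\ge \sigma^2$ gives $\Var(H_1\mid X_1)\ge \sigma^2$ almost surely, whence
\[
\Var\big(\E[T\mid Z_1]\big)\;\ge\;\E\big[a(X_1)^2\,\Var(H_1\mid X_1)\big]\;\ge\;\sigma^2\,\E\big[\E[K_1\mid X_1]^2\big],
\]
which is exactly the claim.

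I do not expect a serious obstacle here; the argument is a clean combination of tower/pull-out properties, the two conditional-independence-type hypotheses, and the law of total variance. The only points requiring care are (i) correctly using $\E[T\mid Z_1]=\E[T\mid H_1,X_1]$ to restrict the conditioning $\sigma$-field before applying the pull-out property to the $i=1$ term, and (ii) noting that finiteness of the relevant variances (so that the total-variance decomposition is valid) is inherited from the boundedness assumptions in the ambient setting, with $K_i,H_i$ playing the roles of the honest kernel weights and the (bounded) debiased scores. If one wants to be fully explicit, one can also remark that $a(X_1)=\E[K_1\mid X_1]\ge 0$ and bounded, so $\E[a(X_1)^2]$ is finite and the bound is non-trivial whenever the kernel assigns nonzero expected weight to the first sample.
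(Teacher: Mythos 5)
Your proposal is correct and follows essentially the same route as the paper: the paper decomposes $\E[T\mid Z_1]=\E[T\mid X_1]+\bigl(\E[T\mid H_1,X_1]-\E[T\mid X_1]\bigr)$ and lower bounds the variance by that of the second (mean-zero) term, which is exactly your law-of-total-variance step after writing $\E[T\mid Z_1]=\E[K_1\mid X_1]H_1+g(X_1)$ via the pull-out property, the conditional independence for $i=1$, and the vanishing of the $i>1$ terms. The two arguments are the same decomposition in different packaging, and both conclude with the bound $\Var(H_1\mid X_1)\ge\sigma^2$.
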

\begin{proof}
We have the decomposition
\[
\mathbb{E}[T \mid Z_1] = \underbrace{\mathbb{E}[T \mid X_1]}_{A} + \underbrace{\mathbb{E}[T \mid H_1, X_1] - \mathbb{E}[T \mid X_1]}_{B}.
\]
Note that \( A \) and \( B \) are uncorrelated. Thus, we have
\[
\Var\left(\mathbb{E}[T \mid Z_1]\right) \geq \text{Var}(B) = \text{Var} \left( \sum_{i=1}^s \left( \mathbb{E}[K_i H_i \mid H_1, X_1] - \mathbb{E}[K_i H_i \mid X_1] \right) \right).
\]

Since for all \( i > 1 \),
$\mathbb{E}[K_i H_i \mid H_1, X_1] = \mathbb{E}[K_i H_i \mid X_1]$, all terms in the summation, except the one associated with \( i = 1 \), vanish. Therefore,
\begin{align*}
    \Var\left(\mathbb{E}[T \mid Z_1]\right) \geq~& \mathbb{E}\left[\left(\E[K_1 H_1 \mid H_1, X_1] - \mathbb{E}[K_1 H_1 \mid X_1]\right)^2\right] \\
    =~& \mathbb{E}\Ls\left(\mathbb{E}[K_1 \mid H_1, X_1]H_1 - \mathbb{E}[K_1 \mid X_1]\mathbb{E}[H_1 \mid X_1]\right)^2\Rs \quad \text{(honesty)}\\
    =~& \mathbb{E}\left[\Lp\mathbb{E}[K_1 \mid  X_1]H_1 - \mathbb{E}[K_1 \mid X_1] \mathbb{E}[H_1 \mid X_1]\Rp^2\right] \quad \text{(honesty)}\\
    =~& \mathbb{E}\left[\mathbb{E}[K_1 \mid X_1]^2\mathbb{E}\left[\left(H_1 - \mathbb{E}[H_1 \mid X_1]\right)^2 \mid X_1\right]\right] \quad \text{(tower law)} \\
    \geq~& \sigma^2 \mathbb{E}[\mathbb{E}[K_1 \mid X_1]^2].
\end{align*}
\end{proof}

\begin{lemma}   \label{lemma: eta and gamma} 
Suppose that the DRRF constructed satisfies Assumption \ref{assump: forest}. Furthermore, there exists a strictly positive $\underline\sigma$ such that $min_x Var(\psi(Z;\alpha_0,g_0)\mid X=x)\ge\underline{\sigma}^2>0\,.$ Then, for sufficiently large $s$, there exists a constant $c>0$ such that $$\eta(s) \geq c\cdot\gamma(s)\,.$$
\end{lemma}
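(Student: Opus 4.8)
The plan is to deduce Lemma~\ref{lemma: eta and gamma} as a direct corollary of Lemma~\ref{lemma:strong_honesty}, applied to the random variables $K_i \coloneqq \tilde K(x,X_i,Z_S,\xi_S)$ and $H_i \coloneqq \psi(Z_i;\alpha_0,g_0)$ for $i=1,\ldots,s$. With this identification, $T \coloneqq \sum_{i=1}^s K_i H_i$ satisfies $\Var(\E[T\mid Z_1]) = \eta(s)$, while $\E[\E[K_1\mid X_1]^2] = \E[\E[\tilde K(x,X_1,Z_S,\xi_S)\mid X_1]^2] = \gamma(s)$, and the hypothesis $\min_x \Var(H_1\mid X_1=x)\ge \underline\sigma^2$ is exactly the assumed lower bound on $\Var(\psi(Z;\alpha_0,g_0)\mid X=x)$. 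Lemma~\ref{lemma:strong_honesty} then yields $\eta(s)\ge \underline\sigma^2\,\gamma(s)$, i.e.\ the claim with $c=\underline\sigma^2$. So the only real work is to verify the remaining three hypotheses of Lemma~\ref{lemma:strong_honesty} for the DRRF kernel: the honesty $K_1\indep H_1\mid X_1$, the leave-one-out mean-independence $\E[K_iH_i\mid H_1,X_1]=\E[K_iH_i\mid X_1]$ for $i>1$, and the sufficiency $\E[T\mid Z_1]=\E[T\mid H_1,X_1]$.

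These follow from the honest double-sample construction of the forest kernel, which I would spell out as follows. Recall that $\tilde K(x,X_i,Z_{S_b},\xi_b)=\mathbbm 1(i\in S_b^2)\,\mathbbm 1(X_i\in L_b(x))/|S_b^2\cap L_b(x)|$, where the leaf $L_b(x)$ is a deterministic function of the tree-growing half $S_b^1$ and the split randomness, and the normalizing count only involves features of $S_b^2$. Condition on the partition of $S$ into $(S^1,S^2)$ and the split randomness. On $\{1\in S^2\}$ the leaf $L(x)$ is built from $S^1\not\ni 1$, so it does not involve $Z_1$ at all; hence $K_1$ is a function of $X_1$, $L(x)$ and $\{X_j\}_{j\in S^2\setminus\{1\}}$ only, giving $K_1\indep (Y_1,W_1)\mid X_1$ (which is the honesty $K_1\indep H_1\mid X_1$); and for $i\in S^2\setminus\{1\}$, $K_iH_i$ depends on $Z_1$ only through $X_1$ (through $\mathbbm 1(X_1\in L(x))$ in the count), while $H_i$ depends only on $Z_i$, so $K_iH_i\indep H_1\mid X_1$, which is the leave-one-out condition; combining these, $\E[T\mid Z_1]=\E[K_1\mid X_1]H_1+(\text{a function of }X_1)$ on this event, which is $(H_1,X_1)$-measurable.

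The main obstacle is the event $\{1\in S^1\}$: unlike the idealized honest forest, the DRRF split at a node is chosen to minimize the sum of squares of the in-node debiased pseudo-responses $\psi_i$, so when $1\in S^1$ the leaf $L(x)$ — and therefore the weights $K_i$, $i\in S^2$ — carries a dependence on $(Y_1,W_1)$, and $\E[T\mid Z_1]$ need not be $(H_1,X_1)$-measurable there. To close this I would not apply Lemma~\ref{lemma:strong_honesty} verbatim but follow its proof while projecting: $\eta(s)\ge \E[\Var(\E[T\mid Z_1]\mid X_1)]$, and, since given $X_1$ the random variable $\E[T\mid Z_1]$ is a function of $(Y_1,W_1)$, projecting onto $H_1-\E[H_1\mid X_1]$ in $L^2$ gives $\Var(\E[T\mid Z_1]\mid X_1)\ge \mathrm{Cov}(T,H_1\mid X_1)^2/\Var(H_1\mid X_1)$. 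The first two paragraphs already show $\mathrm{Cov}(T,H_1\mid X_1)=\E[K_1\mid X_1]\Var(H_1\mid X_1)+\mathrm{Cov}(\mathbbm 1(1\in S^1)T,H_1\mid X_1)$, and the cross term equals $\tfrac12\,\mathrm{Cov}(A,H_1\mid X_1)$ with $A\coloneqq \E[T\mid Z_1, 1\in S^1]$; it then remains to argue this cross term is nonnegative (so that $\Var(\E[T\mid Z_1]\mid X_1)\ge \E[K_1\mid X_1]^2\Var(H_1\mid X_1)\ge \underline\sigma^2\E[K_1\mid X_1]^2$, whence $\eta(s)\ge\underline\sigma^2\gamma(s)$). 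Nonnegativity uses that the SSE split criterion makes the leaf $L(x)$ co-move with $\psi$-values of points near $x$ — so $A$ is increasing in $H_1$ when $Z_1$ lies near $x$ — while $A$ is essentially unaffected by $Z_1$ when $Z_1$ is far from $x$; the $\rho$-balancedness and minimum-leaf-size parts of Assumption~\ref{assump: forest} are what prevent a single near-$x$ point from being isolated away, which is the only way the co-movement could reverse sign. Assumption~\ref{assump: forest} also guarantees, via Lemma~4 of \cite{wager2018estimation}, that $\gamma(s)\gtrsim 1/(s\log(s)^d)>0$, so the bound is non-vacuous for large $s$, completing the proof.
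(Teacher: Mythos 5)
Your setup is sound up to the point you yourself flag: the identification $K_i=\tilde K(x,X_i,Z_S,\xi_S)$, $H_i=\psi(Z_i;\alpha_0,g_0)$, the verification of honesty on $\{1\in S^2\}$, and the covariance decomposition $\mathrm{Cov}(T,H_1\mid X_1)=\E[K_1\mid X_1]\Var(H_1\mid X_1)+\mathrm{Cov}(\mathbbm{1}(1\in S^1)T,H_1\mid X_1)$ are all correct, and this part parallels the paper's use of its auxiliary lemma. The genuine gap is the final step: you assert that the cross term $\tfrac12\,\mathrm{Cov}(A,H_1\mid X_1)$ with $A=\E[T\mid Z_1,1\in S^1]$ is nonnegative because the SSE split criterion makes $L(x)$ ``co-move'' with $\psi$-values near $x$, with $\rho$-balancedness and the minimum leaf size preventing sign reversal. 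There is no such monotonicity property of CART-style splits: when sample $1$ is in the tree-growing half, a large $\psi_1$ can shift the chosen split so as to either include or exclude high-$\psi$ points of $S^2$ from $L(x)$, and Assumption \ref{assump: forest} constrains only the geometry of the splits (balance, leaf size, randomized feature choice), not the sign of this dependence. As stated, the nonnegativity claim is unsupported and is not something one should expect to hold in general, so the proof does not go through.

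The paper never needs a sign condition; it instead shows the $\{1\in S^1\}$ contribution is \emph{small in magnitude}. Concretely, it uses $\Var(\E[T\mid Z_1])\ge \tfrac18\Var(\E[T\mid Z_1,1\in S^2])-\tfrac14\Var(\E[T\mid Z_1,1\in S^1])$, applies the honesty lemma to the first term, and then controls the second by (i) a Hoeffding/ANOVA decomposition of the $U$-statistic with kernel $g(Z_{S^1})=\E[T\mid \{Z_i\}_{i\in S^1}]$, giving $\Var(\E[T\mid Z_{i_1}])\le \tfrac{2}{s}\Var(\E[T\mid \{Z_i\}_{i\in S^1}])$, and (ii) kernel shrinkage plus Lipschitzness of the conditional moment, giving $\Var(\E[T\mid \{Z_i\}_{i\in S^1}])=O(\epsilon(s)^2)$; since $\gamma(s)=\omega(\epsilon(s)/s)$, the contamination $O(\epsilon(s)^2/s)$ is of lower order than $\gamma(s)$. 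Your projection route could in principle be repaired in the same spirit — bound $\left|\mathrm{Cov}(A,H_1\mid X_1)\right|$ by Cauchy--Schwarz and show $\E[\Var(A\mid X_1)]=O(\epsilon(s)^2/s)=o(\gamma(s))$ — but that quantitative bound is exactly the ANOVA-plus-shrinkage argument your proposal omits, so the missing ingredient is essential rather than cosmetic.
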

\begin{proof}
The proof closely follows the argument in \cite{wager2018estimation}. With the doubly robust random forest algorithm in Section \ref{subsec: drrf}, we partition the samples in $S$ into two equal-sized subsets $S^1$ and $S^2$. The kernel is constructed using only the sample points from $S^1$, and the weight assigned to all samples in $S^1$ is set to zero. For the samples in $S^2$, the kernel construction satisfies strong honesty, i.e.: $$\forall i,j \in S^2: \tilde K(x,X_i,Z_{S_b},\xi_b) \indep \psi(Z_j;\alpha_0,g_0)\mid X_j\,.$$
To simply notation, let
\[
T = \sum_{i=1}^s\tilde{K}(x,X_i,Z_S,\xi_S)\psi(Z_i;\alpha_0,g_0).
\]
We have $\mathbb{E}[T \mid Z_1] = \mathbb{E}[T \cdot I\{1 \in S^1\} \mid Z_1] + \mathbb{E}[T\cdot I\{1 \in S^2\} \mid Z_1].$ Since \(\text{Var}(A+B) -(\frac{1}{2}\text{Var}(A) - \text{Var}(B))=2\Var(B+\frac{1}{2}A)\geq 0\),
\begin{align*}
    \eta(s)=\text{Var}(\mathbb{E}[T \mid Z_1]) \geq~& \frac{1}{2}\text{Var}(\mathbb{E}[T \mid Z_1, 1 \in S^2]\PP(1 \in S^2)) - \text{Var}(\mathbb{E}[T \mid Z_1, 1 \in S^1]\PP(1 \in S^1)) \\
    =~& \frac{1}{8}\text{Var}(\mathbb{E}[T \mid Z_1, 1 \in S^2]) - \frac{1}{4}\text{Var}(\mathbb{E}[T \mid Z_1, 1 \in S^1]).
\end{align*}
The last equality holds because $S^1$ and $S^2$ are randomly selected from $S$ with equal probability.

Moreover, conditioning on the event \( \{1 \in S^2\} \) doesn’t change the distribution of the random variables, since \( S^2 \) is chosen fully at random without looking at the data. It is straightforward to verify that the assumptions of Lemma \ref{lemma:strong_honesty} hold for $K_i=\tilde{K}(x,X_i,Z_S,\xi_S)$, and $H_i=\psi(Z_i;\alpha_0,g_0)$, $X_i=X_i$ under $\{1 \in S^2\}$. Therefore,
\[
\text{Var}(\mathbb{E}[T \mid Z_1, 1 \in S^2]) \geq \underline\sigma^2 \gamma(s).
\]

We now argue that the negative term is negligible. Let \( i_1 \) denote the random variable of the first index of the set \( S^1 \), i.e. \( S^1 = \{i_1, \dots, i_{s/2}\} \). Note that since \( T \) is stochastically symmetric under permutations of the input variables, we can write
\[
\mathbb{E}[T \mid Z_1, 1 \in S^1] = \mathbb{E}[T \mid Z_{i_1}].
\]
Note that the elements of \( S^1 \) can be thought of as a random subset of size \( s/2 \) among the \( n \) samples. Moreover, consider the symmetric function
\[
g(Z_{S^1}) = \mathbb{E}[T \mid \{Z_i\}_{i \in S^1}].
\]
We can view this as the aggregation function of a \( U \)-statistic. In the following, we perform an Analysis of Variance (ANOVA) decomposition as done in \cite{fan2018dnn}.

Consider the following projection functions:
\begin{align*}
    g_1(z_{i_1}) &= \mathbb{E}[g(z_{i_1}, Z_{i_2}, \dots, Z_{i_{s/2}})], \\
    g_2(z_{i_1}, z_{i_2}) &= \mathbb{E}[g(z_{i_1}, z_{i_2}, Z_{i_3}, \dots, Z_{i_{s/2}})], \\
    &\vdots \\
    g_{s/2}(z_{i_1}, z_{i_2}, \dots, z_{i_{s/2}}) &= g(z_{i_1}, z_{i_2}, \dots, z_{i_{s/2}}),
\end{align*}
and
\begin{align*}
    \tilde{g}_1(z_{i_1}) &= g_1(z_{i_1}) - \mathbb{E}[g], \\
    \tilde{g}_2(z_{i_1}, z_{i_2}) &= g_2(z_{i_1}, z_{i_2}) - \mathbb{E}[g], \\
    &\vdots \\
    \tilde{g}_{s/2}(z_{i_1}, z_{i_2}, \dots, z_{i_{s/2}}) &= g_{s/2}(z_{i_1}, z_{i_2}, \dots, z_{i_{s/2}}) - \mathbb{E}[g],
\end{align*}
where \( \mathbb{E}[g] = \mathbb{E}[g(Z_{S^1})] \). Then we define the canonical terms of Hoeffding's $U$-statistic decomposition as
\begin{align*}
    h_1(z_{i_1}) &= \tilde{g}_1(z_{i_1}), \\
    h_2(z_{i_1}, z_{i_2}) &= \tilde{g}_2(z_{i_1}, z_{i_2}) - h_1(z_{i_1}) - h_1(z_{i_2}), \\
    h_3(z_{i_1}, z_{i_2}, z_{i_3}) &= \tilde{g}_3(z_{i_1}, z_{i_2}, z_{i_3}) - \sum_{k=1}^3 h_1(z_{i_k}) - \sum_{1 \leq k < l \leq 3} h_2(z_{i_k}, z_{i_l}), \\
    &\vdots \\
    h_{s/2}(z_{i_1}, z_{i_2}, \dots, z_{i_{s/2}}) &= \tilde{g}_{s/2}(z_{i_1}, z_{i_2}, \dots, z_{i_{s/2}}) - \sum_{k=1}^{s/2} h_1(z_{i_k}) - \sum_{1 \leq k < l \leq s/2} h_2(z_{i_k}, z_{i_l}) - \cdots  \\
    &\quad -\sum_{1 \leq k_1 < k_2 < \cdots < k_{(s/2)-1} \leq s/2} h_{s/2-1}(z_{i_{k_1}}, z_{i_{k_2}}, \dots, z_{i_{k_{(s/2)-1}}}).
\end{align*}
Subsequently, 
\begin{align*}
    \tilde{g}_{s/2}(z_{i_1}, \dots, z_{i_{s/2}}) &= g(z_{i_1}, \dots, z_{i_{s/2}}) - \mathbb{E}[g]\\ &= \sum_{k=1}^{s/2} h_1(z_{i_k}) + \sum_{1 \leq k < l \leq s/2} h_2(z_{i_k}, z_{i_l}) + \cdots + h_{s/2}(z_{i_1}, \dots, z_{i_{s/2}}).
\end{align*}

It is straightforward to show that all the canonical terms in the latter expression are uncorrelated. Hence, we have
\begin{align*}
    \text{Var}[g(Z_{i_1}, \dots, Z_{i_{s/2}})] &= \binom{s/2}{1} \mathbb{E}[h_1^2] + \binom{s/2}{2} \mathbb{E}[h_2^2] + \cdots + \binom{s/2}{s/2} \mathbb{E}[h_{s/2}^2] \\
    &\geq \frac{s}{2} \mathbb{E}[h_1^2] = \frac{s}{2} \text{Var}(\mathbb{E}[T \mid Z_{i_1}]),
\end{align*}
which implies $\text{Var}(\mathbb{E}[T \mid Z_{i_1}]) \leq 2/s \cdot\text{Var}(\mathbb{E}[T \mid \{Z_i\}_{i \in S^1}]).$
Now observe that
\begin{align*}
    \mathbb{E}[T \mid \{Z_i\}_{i \in S^1}]
    =~& \mathbb{E}\left[\sum_{i\in S^2}\tilde{K}(x,X_i,Z_S,\xi_S)\psi(Z_i;\alpha_0,g_0) \mid \{Z_i\}_{i \in S^1}\right] \\
    =~& \mathbb{E}\left[\sum_{i \in S^2}\tilde{K}(x,X_i,Z_S,\xi_S) \mathbb{E}[\psi(Z_i;\alpha_0,g_0) \mid X_i] \mid \{Z_i\}_{i \in S^1}\right] \\
    =~& \mathbb{E}[\psi(Z_i;\alpha_0,g_0) \mid X = x]+ \\
    &\mathbb{E}\Ls\sum_{i \in S^2} \tilde{K}(x,X_i,Z_S,\xi_S)(\mathbb{E}[\psi(Z_i;\alpha_0,g_0) \mid X_i] - \mathbb{E}[\psi(Z;\alpha_0,g_0) \mid X = x]) \mid \{Z_i\}_{i \in S^1}\Rs.
\end{align*}
Therefore,
\begin{align*}
    &\text{Var}(\mathbb{E}[T \mid \{Z_i\}_{i \in S^1}]) \\
    \leq~& \mathbb{E} \left[ \mathbb{E} \left[ \sum_{i \in S^2} \tilde{K}(x,X_i,Z_S,\xi_S) (\mathbb{E}[\psi(Z_i;\alpha_0,g_0) \mid X_i] - \mathbb{E}[\psi(Z;\alpha_0,g_0) \mid X = x]) \mid \{Z_i\}_{i \in S^1} \right]^2 \right] \\
    \leq~&L^2 \mathbb{E} \left[ \mathbb{E} \left[ \max_i\{ \|X_i - x\|, i \in S^2 : \tilde{K}(x,X_i,Z_S,\xi_S) > 0 \} \mid \{Z_i\}_{i \in S^1} \right]^2 \right].
\end{align*}

Note that the kernel shrinkage property holds for any fixed values of the partition samples for honest forests that satisfy the balancedness and minimal random splitting criteria. Therefore, we have
\[
\text{Var}(\mathbb{E}[T \mid \{Z_i\}_{i \in S^1}])  = O(\epsilon(s)^2).
\]
Thus, overall, we have
\[
\eta(s)=\text{Var}(\mathbb{E}[T \mid Z_1]) \gtrsim \underline\sigma^2 \gamma(s) - O\Lp\frac{\epsilon(s)^2}{s}\Rp.
\]

Since $\gamma(s)=w\Lp\epsilon(s)/s\Rp$, we have $\frac{\epsilon(s)^2}{s}=o(\epsilon(s)\gamma(s))$ is of lower order than $\gamma(s)$. 
\end{proof}

\vspace{1mm}

 If we choose $s=\Theta(n^{\beta})$, where $\Lp 1+\frac{1}{ad}\Rp^{-1}<\beta<1$, then $\frac{n}{s\log(s)^d}\rightarrow \infty$ and $ns^{-(1+\frac{1}{ad}-\epsilon)} \rightarrow 0$ for some $\epsilon > 0$, which implies that $n\gamma(s) \rightarrow \infty$ and $n\epsilon^2(s)/(s^2\gamma(s))\rightarrow 0$.
By Lemma \ref{lemma: eta and gamma}, $\eta(s) \geq c\cdot\gamma(s)$ for sufficiently large $s$. Therefore, $n\eta(s) \rightarrow \infty$ and $n\epsilon^2(s)/(s^2\eta(s))\rightarrow 0$, and we complete the proof of Proposition \ref{pro: rate conditions}.

\vspace{2mm}
In the following, we prove that the nuisance conditions $$\sqrt{n/(s^2\eta(s))}\cdot \delta_n \rightarrow 0 \quad \text{and}\ \ \sqrt{n/(s^2\eta(s))}\cdot r_n^2(\delta_n/n)\rightarrow 0$$ from Theorem \ref{thm: asymptotic normality} are satisfied with $\delta_n$ appropriately chosen for the DRRF algorithm.

Let $\lambda=\max(\lambda_g,\lambda_\alpha)$. According to Proposition \ref{pro: nu guarantee},
\begin{align*}
r_n(\delta)= O\Lp\frac{2\lambda k}{\gamma-32k\sqrt{s\ln(2 d_\nu/\delta)/n}}\Rp.
\end{align*}
We take $\delta = \delta_n = n^{-c}$, where $c>1/2$, and $\lambda_g,\lambda_\alpha = \Theta\Lp s^{-1/(2ad)}+\sqrt{\frac{s\ln( d_\nu/\delta_n)}{n}}\Rp$. Note that $d_\nu$ grows at a polynomial rate in $n$. Therefore, $$\gamma-32k\sqrt{s\ln(n\cdot 2d_\nu/\delta_n)/n} \rightarrow \gamma$$ when $n\rightarrow \infty$. 
Hence, we only need to show $$n^{-c}\sqrt{n/(s^2\eta(s))} \rightarrow 0 \quad \text{and}\quad \frac{s}{n}\ln(n^{1+c}\cdot d_\nu) \cdot \sqrt{n/(s^2\eta(s))}\rightarrow 0,$$ where the second condition is sufficient since $s^{-1/(2ad)} \le \sqrt{\frac{s}{n}}$ as $\beta > (1+\frac{1}{ad})^{-1}$. In the case of doubly robust random forest, $\gamma(s)\gtrsim \frac{1}{s\log(s)^d}$. This completes the proof.

\section{Additional Simulation Results}  \label{appendix:simu}
This section presents additional simulation results for the continuous treatment setting, following Section \ref{sec:simulations}.  We focus on the comparison between DRRF variants and ORF variants \citep{orf}, as GRF variants are not applicable in this context. Similar to the binary treatment setting in Section \ref{sec:simulations}, we consider setups with various data-generating processes for the treatment variable $D_i$ and additional covariates $\tilde X_i$. The detailed description of the data-generating processes is provided in Table \ref{tab:res_cts}.

Table \ref{tab:res_cts} compares the estimation accuracy and evaluation runtime of different estimators for various continuous treatment setups. Specifically, it reports the root mean squared error (RMSE) of the estimated $\hat{\theta}(x_i)$ and the average evaluation runtime across 100 test points $x_i$ over a uniform grid in $[0,1]$. As before, we report results from 100 simulations for each setup. Figures \ref{fig:cts_1}-\ref{fig:cts_3} illustrate the performance of each estimator by plotting the average estimated effects, along with the 5th and 95th percentiles of the estimated effects across 100 replications.

\begin{table}[htbp]
\centering
\begin{tabular}{c|ccc|ccc}
\toprule
\multicolumn{1}{c}{} & \multicolumn{3}{c|}{RMSE} &\multicolumn{3}{c}{Time (seconds)}\\
\midrule
\multicolumn{1}{c}{} &\multicolumn{6}{c}{\textbf{Setup 1}}\\
\cmidrule(r){2-7}
\multicolumn{1}{c}{} &$k=5$ &$k=10$ &$k=15$  &$k=5$ &$k=10$ &$k=15$ \\
\midrule
 DRRF &0.121 &0.105 &0.091 &0.16 &0.17 &0.17\\
 DRRF-CV &0.120 &0.105 &0.093 &0.17 &0.17 &0.17\\
\midrule
 ORF &0.125 &0.115 &0.120 &0.55 &0.54 &0.53 \\
 ORF-CV &0.090 &0.101 &0.106 &17.8 &17.9 &18.5\\
\midrule
\multicolumn{1}{c}{} &\multicolumn{6}{c}{\textbf{Setup 2}}\\
\cmidrule(r){2-7}
\multicolumn{1}{c}{} &$k=5$ &$k=10$ &$k=15$  &$k=5$ &$k=10$ &$k=15$ \\
\midrule
DRRF &0.057 &0.082 &0.112 &0.17 &0.16 &0.17\\
DRRF-CV &0.049 &0.038 &0.039&0.17 &0.17 &0.17\\
\midrule
 ORF &0.127 &0.194 &0.259 &1.5 &1.6 &1.6\\
 ORF-CV &0.096 &0.131 &0.162 &25.2 &24.2 &24.1\\
\midrule
\multicolumn{1}{c}{} &\multicolumn{6}{c}{\textbf{Setup 3}}\\
\cmidrule(r){2-7}
\multicolumn{1}{c}{} &$k=5$ &$k=10$ &$k=15$  &$k=5$ &$k=10$ &$k=15$ \\
\midrule
 DRRF &0.062 &0.088 &0.142 &0.16 &0.17 &0.17\\
 DRRF-CV &0.059 &0.036 &0.034&0.17 &0.17 &0.17\\
\midrule
 ORF &0.161 &0.177 &0.191 &3.1 &3.4 &3.5\\
 ORF-CV &0.150 &0.242 &0.332 &25.4 &25.3 &26.2\\
\bottomrule
\end{tabular}
\caption{RMSE and runtime for continuous treatment setups}
\floatfoot{This table reports the root mean squared error (RMSE) and the evaluation runtime for different estimators in continuous treatment setups. We generate data from the process $Y_i=\theta_0(X_i)\cdot D_i+\tilde X_i^\top \nu_0+\epsilon_i$, where $X_i\overset{i.i.d.}{\sim} U[0,1]$ and $\nu_0$ is a $k$-sparse vector with non-zero coefficients drawn i.i.d. from $U[-1,1]$. The continuous treatment variables $D_i$'s and the additional covariates $\tilde X_i$ are generated as follows: (1) Setup 1: $D_i = \tilde{X}_i^\top \gamma_0 + \eta_i$, and $\tilde{X}_i \sim \mathcal{N}(0, I_p)$. (2) Setup 2: $D_i = (\tilde{X}_i^\top \gamma_0)^3 + \eta_i$, and $\tilde{X}_i \sim \mathcal{N}(2X_i, I_p)$. (3) Setup 3:  $D_i = (\tilde{X}_i^\top \gamma_0)^2 + \eta_i$, and $\tilde{X}_i \sim \mathcal{N}(2, I_p)$. In each setup, $\gamma_0$ is a $k$-sparse vector with the same support as $\nu_0$, where its non-zero coefficients are drawn i.i.d. from $U[0,1]$, and the noise terms are drawn as $\eta_i \overset{i.i.d.}{\sim} U[-1,1]$. We set the number of covariates $\tilde X_i$ to be $p=100$ and the sample size $2n=5000$. We perform 100 simulations for each setup. The evaluation runtime is the average runtime over 10 runs for estimating $\hat{\theta}(x_i)$ at 100 test points $x_i$ over a uniform grid in $[0,1]$. The results are measured on a 2019 MacBook Pro with a 2.6 GHz 6-Core Intel Core i7.}
\label{tab:res_cts}
\end{table}

We confirm our findings from the binary treatment setting, showing that ORF variants can suffer from significant bias or variance when the model for the treatment variable is misspecified. As shown in Table \ref{tab:res_cts}, DRRF variants outperform ORF variants in most cases in terms of RMSE when the relationship between treatment $D_i$ and covariates $\tilde{X}_i$ is non-linear. In the first continuous treatment setup, which replicates a setup in \cite{orf} with a correctly specified model for residualization (linear relationship between treatment $D_i$ and covariates $\tilde{X}_i$), our DRRF algorithms perform comparably to the ORF variants. However, in the second and third continuous setups, where the relationship between treatment $D_i$ and covariates $\tilde{X}_i$ is non-linear, the DRRF variants significantly outperform both ORF and ORF-CV. Specifically, in the second continuous treatment setup (as shown in Figure \ref{fig:cts_2}), ORF and ORF-CV yield biased estimates, particularly for larger values of $k$. In the third continuous treatment setup (as shown in Figure \ref{fig:cts_3}), ORF and ORF-CV show substantial variance in estimating the true effects. Additionally, DRRF variants are significantly more computationally efficient than ORF variants, especially ORF-CV, when estimating $\theta_0(x)$ at multiple query points. As shown in Table \ref{tab:res_cts}, the evaluation runtimes of both DRRF and DRRF-CV are significantly shorter than those of ORF and ORF-CV.

\begin{figure}[t!]
    \centering
    \begin{subfigure}{\textwidth}
        \centering
        \includegraphics[width=0.8\linewidth]{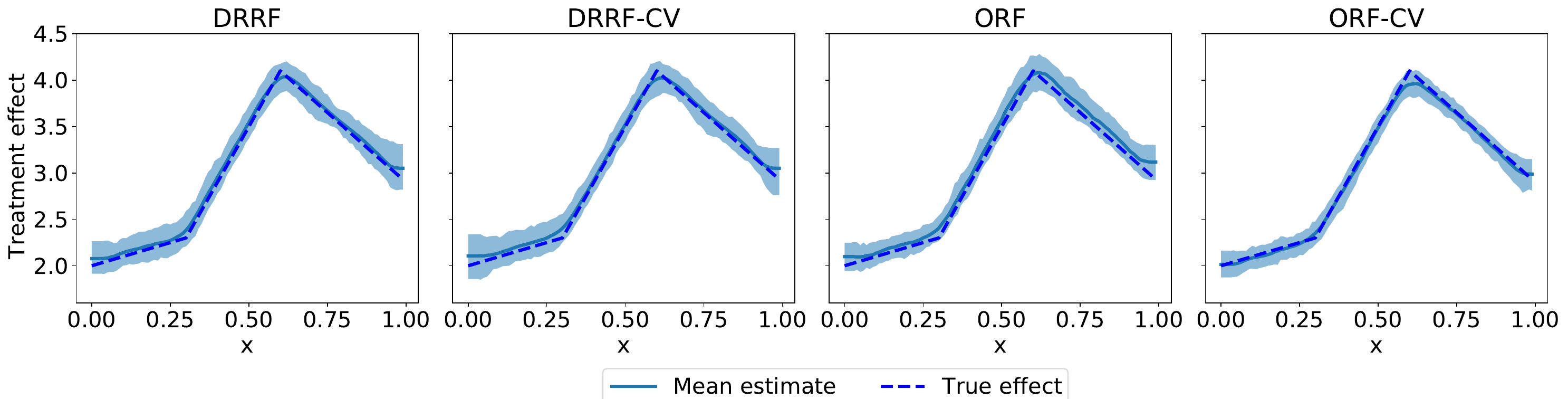} 
        \caption{$k=5$}
    \end{subfigure}
    \begin{subfigure}{\textwidth}
        \centering
        \includegraphics[width=0.8\linewidth]{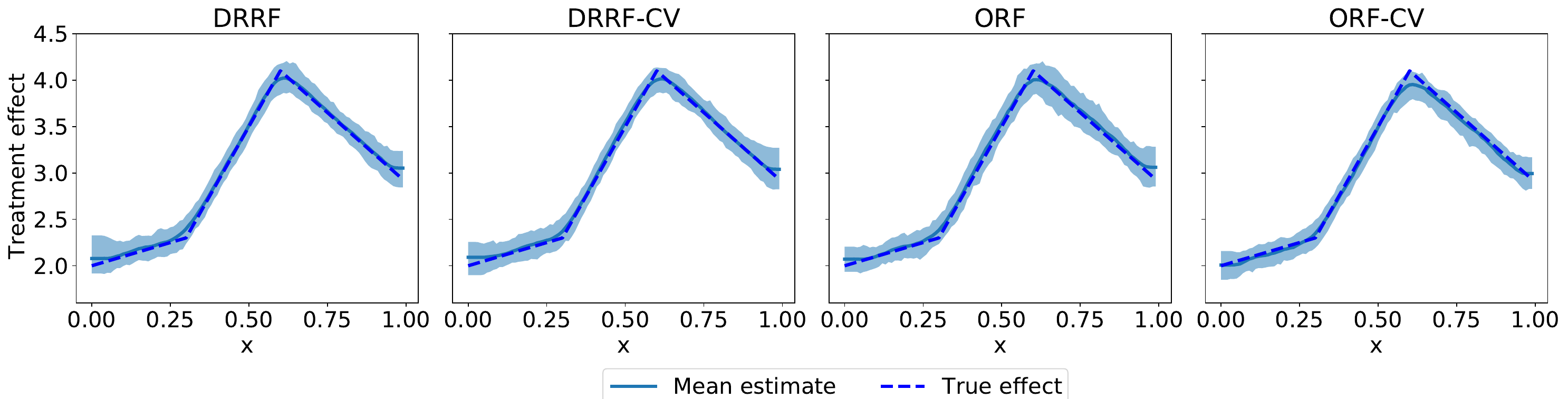} 
        \caption{$k=10$}
    \end{subfigure}
    \begin{subfigure}{\textwidth}
        \centering
        \includegraphics[width=0.8\linewidth]{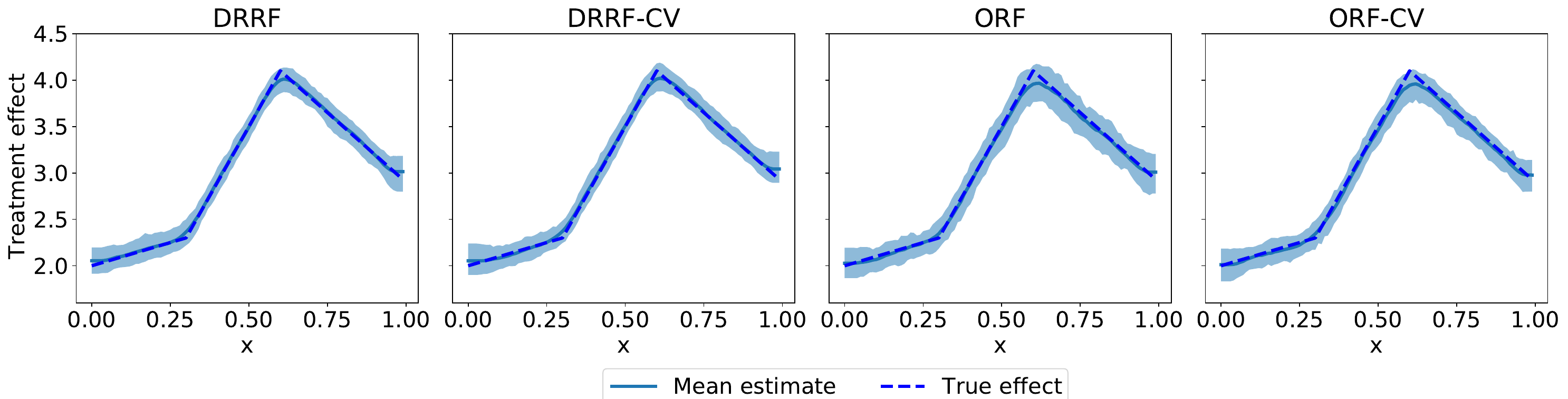} 
        \caption{$k=15$}
    \end{subfigure}
    \caption{Treatment effect estimations for Setup 1 in continuous treatment setting}
    \floatfoot{This figure shows the treatment effect estimations for Setup 1 in the continuous treatment setting. The plot shows the average estimated effects, with shaded regions representing the 5th to 95th percentile intervals of the estimated effects across 100 replications. The data is generated from the process $Y_i=\theta_0(X_i)\cdot D_i+\tilde X_i^\top \nu_0+\epsilon_i$, where $X_i\overset{i.i.d.}{\sim} U[0,1]$, $\tilde{X}_i \sim \mathcal{N}(0, I_p)$, and $D_i = \tilde{X}_i^\top \gamma_0 + \eta_i$. Both $\nu_0$ and $\gamma_0$ are $k$-sparse vectors with the same support, and their non-zero coefficients are drawn i.i.d. from $U[-1,1]$. We set the number of observations to $2n=5000$ and $p=100$.}
    \label{fig:cts_1}
\end{figure}

\begin{figure}[t!]
    \centering
    \begin{subfigure}{\textwidth}
        \centering
        \includegraphics[width=0.8\linewidth]{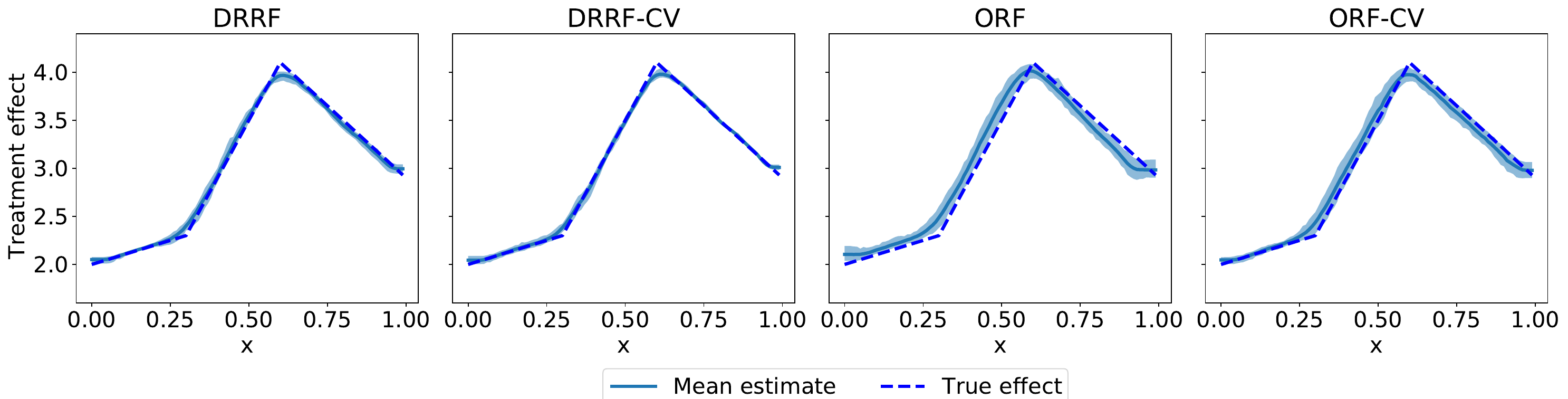} 
        \caption{$k=5$}
    \end{subfigure}
    \begin{subfigure}{\textwidth}
        \centering
        \includegraphics[width=0.8\linewidth]{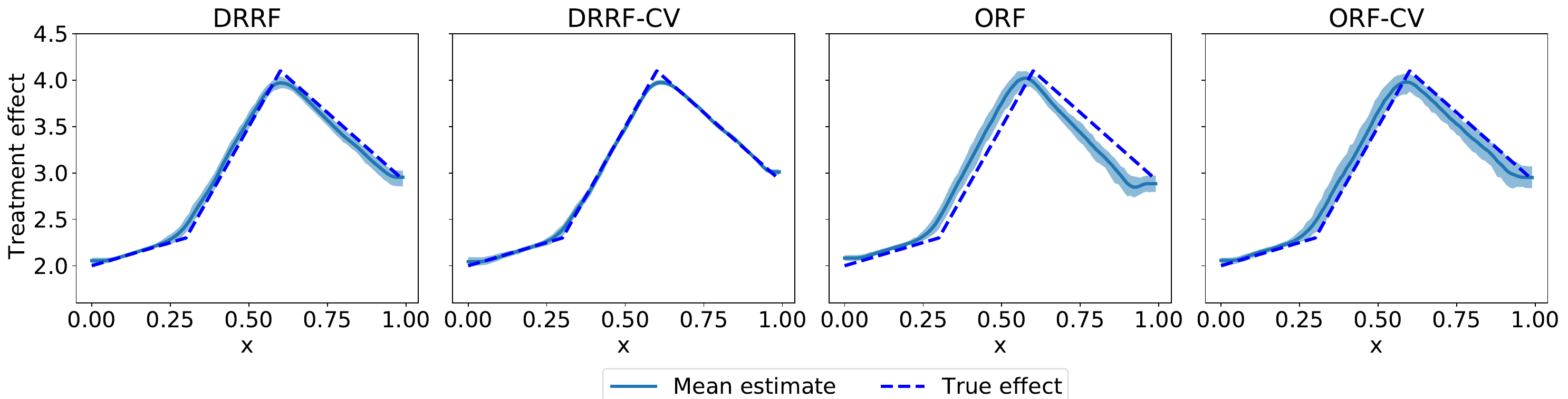} 
        \caption{$k=10$}
    \end{subfigure}
    \begin{subfigure}{\textwidth}
        \centering
        \includegraphics[width=0.8\linewidth]{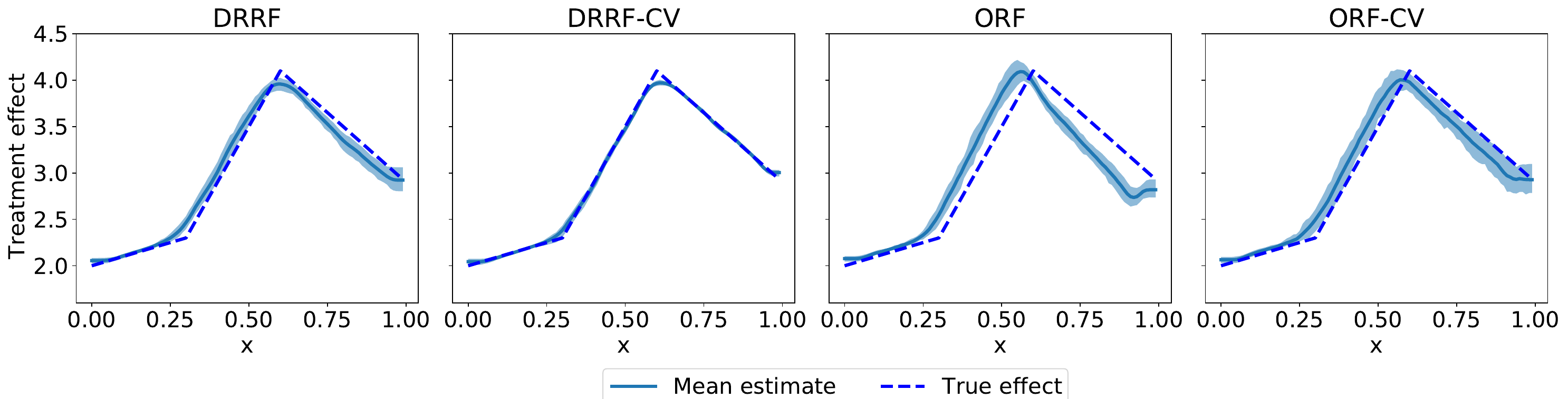} 
        \caption{$k=15$}
    \end{subfigure}
    \caption{Treatment effect estimations for Setup 2 in continuous treatment setting}
    \floatfoot{This figure shows the treatment effect estimations for Setup 2 in the continuous treatment setting. The plot shows the average estimated effects, with shaded regions representing the 5th to 95th percentile intervals of the estimated effects across 100 replications. The data is generated from the process $Y_i=\theta_0(X_i)\cdot D_i+\tilde X_i^\top \nu_0+\epsilon_i$, where $X_i\overset{i.i.d.}{\sim} U[0,1]$, $\tilde{X}_i \sim \mathcal{N}(2X_i, I_p)$, and $D_i = (\tilde{X}_i^\top \gamma_0)^3 + \eta_i$. Both $\nu_0$ and $\gamma_0$ are $k$-sparse vectors with the same support, and their non-zero coefficients are drawn i.i.d. from $U[-1,1]$. We set the number of observations to $2n=5000$ and $p=100$.}
    \label{fig:cts_2}
\end{figure}

\begin{figure}[t!]
    \centering
        \begin{subfigure}{\textwidth}
        \centering
        \includegraphics[width=0.8\linewidth]{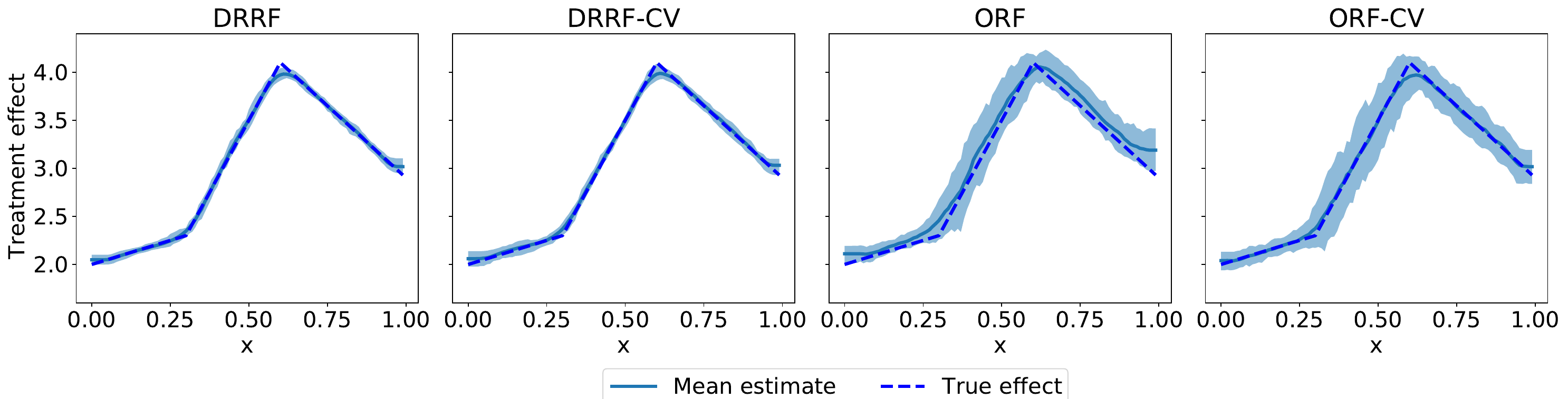} 
        \caption{$k=5$}
    \end{subfigure}
    \begin{subfigure}{\textwidth}
        \centering
        \includegraphics[width=0.8\linewidth]{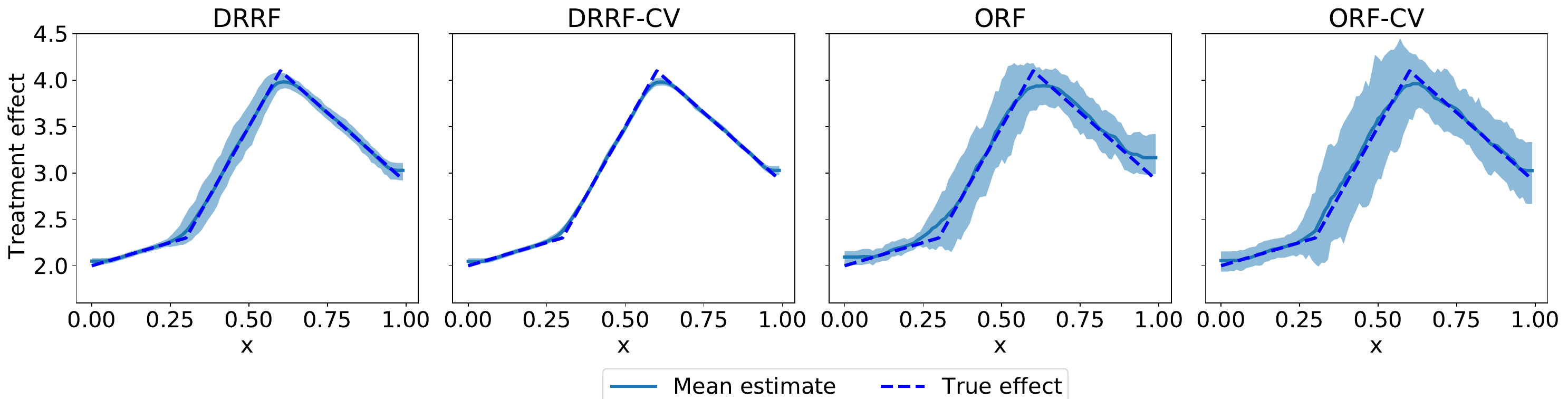} 
        \caption{$k=10$}
    \end{subfigure}
    \begin{subfigure}{\textwidth}
        \centering
        \includegraphics[width=0.8\linewidth]{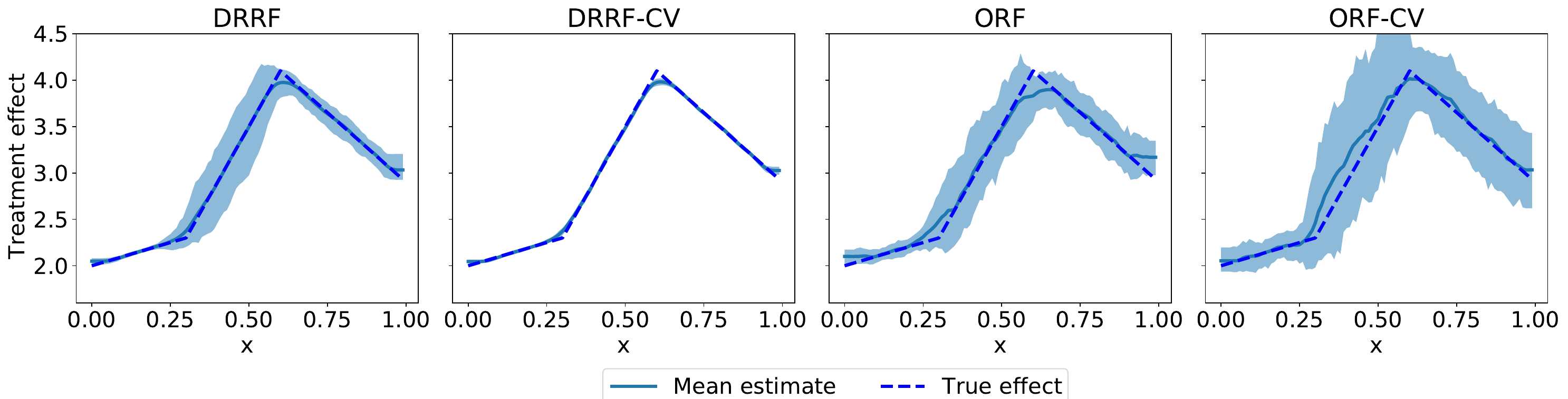} 
        \caption{$k=15$}
    \end{subfigure}
    \caption{Treatment effect estimations for Setup 3 in continuous treatment setting}
    \floatfoot{This figure shows the treatment effect estimations for Setup 3 in the continuous treatment setting. The plot shows the average estimated effects, with shaded regions representing the 5th to 95th percentile intervals of the estimated effects across 100 replications. The data is generated from the process $Y_i=\theta_0(X_i)\cdot D_i+\tilde X_i^\top \nu_0+\epsilon_i$, where $X_i\overset{i.i.d.}{\sim} U[0,1]$, $\tilde{X}_i \sim \mathcal{N}(2, I_p)$, and $D_i = (\tilde{X}_i^\top \gamma_0)^2 + \eta_i$. Both $\nu_0$ and $\gamma_0$ are $k$-sparse vectors with the same support, and their non-zero coefficients are drawn i.i.d. from $U[-1,1]$. We set the number of observations to $2n=5000$ and $p=100$.}
    \label{fig:cts_3}
\end{figure}

\end{document}